\algrenewcommand\algorithmicrequire{\textbf{Input:}}
\algrenewcommand\algorithmicensure{\textbf{Output:}}
\newenvironment{varalgorithm}[1]
  {\algorithm}
  {\endalgorithm}
\newcommand{\algmargin}{\the\ALG@thistlm}
\algnewcommand{\parState}[1]{\State%
    \parbox[t]{\dimexpr\linewidth-\algmargin}{\strut\hangindent=\algorithmicindent \hangafter=1 #1\strut}}
\definecolor{lightblue}{RGB}{119,170,221} 
\definecolor{orange}{RGB}{238,136,102}
\definecolor{pear}{RGB}{187,204,51} 
\definecolor{lightcyan}{RGB}{153,221,255} 
\definecolor{lightyellow}{RGB}{238,221,136} 
\definecolor{palegrey}{RGB}{221,221,221}
\definecolor{pink}{RGB}{255,170,187}
\definecolor{olive}{RGB}{170,170,0}
\definecolor{mint}{RGB}{68,187,153}
\definecolor{darkgrey}{RGB}{80,80,80}  
\colorlet{FPTcolor}{mint}
\colorlet{inPcolor}{lightcyan}
\colorlet{NPhardcolor}{orange}
\colorlet{XPcolor}{lightyellow}
\newcommand\new[1]{#1}
\newcommand{\R}{\mathbb{R}}
\newcommand{\N}{\mathbb{N}}
\newcommand{\NP}{\mathsf{NP}}
\newcommand{\symmdiff}{\mathbin{\triangle}}
\newcommand{\ol}[1]{\overline{#1}}
\newcommand{\dist}[1][]{\mathsf{dist}_{s^{#1},t^{#1}}}
\newcommand{\init}{\textup{init}}
\newcommand{\inc}{\textup{inc}}
\newcommand{\vsel}{\textup{vert}}
\newcommand{\esel}{\textup{edge}}
\newcommand{\FF}{\mathcal{F}}
\newcommand{\LL}{\mathcal{L}}
\newcommand{\PP}{\mathcal{P}}
\newcommand{\CC}{\mathcal{C}}
\newcommand{\EO}{\mathrm{EO}}
\newcommand{\fixed}{\mathsf{fixed}}
\newcommand{\cont}{\mathsf{contained}}
\newcommand{\SR}{\textsc{Robust Submodular Minimizer}}
\newcommand{\ASR}{\textsc{Anchored Submodular Minimizer}}
\def\InstSR{I_\textup{RSM}}
\def\InstCS{I_\textup{RS}}
\def\shortASR{ASM}
\newcommand{\MBDCF}{\textsc{Multi-Budgeted Directed Cut with Forbidden Edges}}
\newcommand{\MBDC}{\textsc{Multi-Budgeted Directed Cut}}
\newcommand{\centralset}{\textsc{Robust Separation}}
\newcommand{\MBMC}{\textsc{Most Balanced Minimum Cut}}
\newcommand{\BMC}{\textsc{Perfectly Balanced Minimum Cut}}
\def\tt{\texttt{true}}
\def\ff{\texttt{false}}
\def\IH{$\mathcal{IH}$}
\title{Parameterized Complexity of Submodular Minimization under Uncertainty}
\author{Naonori Kakimura}{Department of Mathematics, Keio University, Japan}{kakimura@math.keio.ac.jp}{https://orcid.org/0000-0002-3918-3479}{Supported by JSPS KAKENHI Grant Numbers JP22H05001, JP	23K21646, and JP21H03397, Japan and JST ERATO Grant Number JPMJER2301, Japan.}
\author{Ildik\'o Schlotter}{HUN-REN Centre for Economic and Regional Studies, Budapest, Hungary \and Budapest University of Technology and Economics, Budapest, Hungary}{schlotter.ildiko@krtk.hun-ren.hu}{https://orcid.org/0000-0002-0114-8280}{Supported by the Hungarian Academy of Sciences under its J\'anos Bolyai Research Scholarship and its Momentum Programme (LP2021-2).}
\authorrunning{N. Kakimura and I. Schlotter} 
\keywords{Submodular minimization, optimization under uncertainty, parameterized complexity, cut function.} 
\begin{document}

\maketitle

\begin{abstract}
This paper studies the computational complexity of a robust variant of a two-stage submodular minimization problem that we call \SR{}.
In this problem, we are given $k$ submodular functions~$f_1,\dots,f_k$ over a set family~$2^V$, which represent $k$ possible scenarios in the future when we will need to find an optimal solution for one of these scenarios, i.e., a minimizer for one of the functions. 
The present task is to find a set $X \subseteq V$ that is close to \emph{some} optimal solution for each $f_i$ in the sense that some minimizer of~$f_i$ can be obtained from $X$ by adding/removing at most $d$ elements for a given integer $d \in \mathbb{N}$.
The main contribution of this paper is to provide a complete computational map of this problem with respect to parameters~$k$ and~$d$, which reveals a tight complexity threshold for both parameters:
\begin{itemize}
\item \SR{} can be solved in polynomial time when $k \leq 2$, but is $\NP$-hard if $k$ is a constant with $k \geq 3$.
\item \SR{} can be solved in polynomial time when $d=0$, but
is $\NP$-hard if $d$ is a constant with $d \geq 1$.
\item \SR{} is fixed-parameter tractable when parameterized by~$(k,d)$.
\end{itemize}
We also show that if some submodular function $f_i$ has a polynomial number of minimizers, then the problem becomes fixed-parameter tractable when parameterized by $d$.
On the other hand, the problem remains $\mathsf{W}[1]$-hard parameterized by $k$ even if each function $f_i$ has \new{at most~$|V|$} minimizers.
We remark that all our hardness results hold even if each submodular function is given by a cut function of a directed graph.
\end{abstract}

\section{Introduction}
\label{sec:intro}
This paper proposes a two-stage robust optimization problem under uncertainty.
Suppose that we want to find a minimum cut on a directed graph under uncertainty.
The uncertainty here is represented by $k$ directed graphs $G_1, \dots, G_k$ on the same vertex set $V \cup \{s,t\}$.
That is, we have $k$ possible scenarios of graph realizations in the future.
At the moment, we want to choose an $(s,t)$-cut in advance, so that after the graph is revealed, we will be able to obtain a minimum $(s,t)$-cut in the graph with small modification.
Therefore, our aim is to find an $(s,t)$-cut that is close to some minimum $(s,t)$-cut in each graph $G_i$ for $i=1,\dots, k$.

Let us formalize this problem.
For a vertex set $X$ in a directed graph~$G={(V \cup\{s,t\},E)}$, 
the \emph{cut function} $f: 2^V\to \mathbb{Z}$ 
\new{takes} the number of out-going edges from~$X$ 
\new{as its value}.
Let us denote the family of minimum $(s,t)$-cuts in~$G$ by $\mathcal{C}_{s,t}(G)$, that is,
$\CC_{s,t}(G)=\{{Y\subseteq V}:{ f(Y)\leq f(Y')} \ \forall Y'\subseteq V\}$.
Given directed graphs $G_1,\dots,G_k$ over a common vertex set~${V \cup \{s,t\}}$, we want to find a subset $X \subseteq V$ and sets~$Y_i \in \mathcal{C}_{s,t}(G_i)$
for each $i\in [k]$ that minimizes  $\max_{i\in [k]} |X \symmdiff Y_i|$
where $\symmdiff$ stands for symmetric difference and  $[k]$ denotes $\{1,\dots, k\}$ for any positive integer~$k$.

We study a natural generalization of this problem where, instead of the cut functions of directed graphs which are known to be submodular~\cite{schrijver-book}, we consider arbitrary submodular set functions over some non-empty finite set~$V$.
Let $f_1,\dots,f_k \colon 2^V \to \R$ be $k$ submodular functions.
Let $\arg\min f_i=\{Y\subseteq V: f_i(Y)\leq f_i(Y')~\forall Y'\subseteq V\}$  refer to the set of minimizers of $f_i$.
We want to find a subset $X \subseteq V$ and sets $Y_i \in \arg\min f_i$
for all $i\in [k]$ that
\begin{linenomath*}
\begin{equation*}
  \text{minimize} \qquad \max_{i\in [k]} |X \symmdiff Y_i|. \label{eq:prob0}
\end{equation*}
\end{linenomath*}
We call the decision version of this problem 
\SR{}.
\begin{center}
\fbox{ 
\parbox{13cm}{
\begin{tabular}{l}\SR{}:  \end{tabular} \\
\begin{tabular}{p{1cm}p{11cm}}
Input: & A finite set~$V$, submodular functions~$f_1,\dots,f_k:2^V \rightarrow \R$, and an integer~$d \in \N$. \\
Task: & Find a set $X \subseteq V$ such that for each $i \in [k]$ there exists $Y_i \in \arg\min f_i$ with $|X \symmdiff Y_i| \leq d$, or detect if no such set exists.
\end{tabular}
}}
\end{center}
%

We remark that the min-sum variant of the problem, that is, the problem obtained by replacing the condition $\max_{i\in [k]} |X \symmdiff Y_i| \leq d$ with $\sum_{i\in [k]} |X \symmdiff Y_i| \leq d$, was introduced by Kakimura et al.~\cite{KakimuraKKO22}, who
showed that it can be solved in polynomial time.

\subsection{Our Contributions and Techniques}

The contribution of this paper is to reveal the complete computational complexity of \SR{} with respect to the parameters~$k$ and~$d$, 
\new{as well as parameters~$\min_{i \in [k]} |\arg\min f_i|$ and $\max_{i \in [k]} |\arg\min f_i|$.

To phrase this more precisely,
let us define a \emph{parameterized restriction}\footnote{See Section~\ref{sec:prelim} for the concepts from parameterized complexity theory that are most relevant to us, and for some references to further background.} of \SR\  as a problem obtained from \SR\ by restricting 
each of the four values~$k$, $d$, $\min_{i \in [k]} |\arg\min f_i|$, and $\max_{i \in [k]} |\arg\min f_i|$
as either (i) \emph{unbounded}, (ii) a \emph{parameter}, or (iii) a \emph{constant}, with the additional possibility for declaring either the values 
$\min_{i \in [k]} |\arg\min f_i|$ and $\max_{i \in [k]} |\arg\min f_i|$ to be (iv) \emph{polynomially bounded in~$|V|$}.
Our results completely classify every parameterized restriction of \SR\ as either 
(a) fixed-parameter tractable, 
(b) $\mathsf{W}[1]$-hard but in~$\mathsf{XP}$, or 
(c) para-$\NP$-hard with the given parameterization; in the case when there are no parameters, we classify the problem as either (d) polynomial-time solvable or (e) $\NP$-hard.
See Figure~\ref{fig-decision-diagram} at the end of the paper for a decision diagram that can be used to determine the computational complexity of each parameterized restriction of \SR.}

\new{Below we highlight our main results;
a more detailed summary  can be found in Table~\ref{tab:summary}.}
 
\begin{enumerate}
\item \SR{} can be solved in polynomial time when $k \leq 2$ (Theorem~\ref{thm:k=2}), but is $\NP$-hard if $k$ is a constant with $k \geq 3$ (Corollary~\ref{cor:k-atleast-3}).
\item \SR{} can be solved in polynomial time when $d=0$ (Observation~\ref{obs:d=0}), but 
is $\NP$-hard if $d$ is a constant with $d \geq 1$ (Theorem~\ref{cor:SR-NPhard-d-atleast1}).
\item \SR{} is fixed-parameter tractable when parameterized by~$(k,d)$.
\item If the size of $\arg\min f_i$ for some $i\in [k]$ is polynomially bounded, then \SR{} is fixed-parameter tractable when parameterized by~$d$~(Theorem~\ref{thm:unique}).
On the other hand, the problem is W[1]-hard parameterized by $k$ even when the size of $\arg\min f_i$ for each $i\in [k]$ is polynomially bounded~(Theorem~\ref{thm:W1hard-k}).
\end{enumerate}

\new{Let us now describe our techniques.}
When $k=1$, \SR{} is equivalent to the efficiently solvable submodular function minimization problem~\cite{LeeSW15}, in which we are given a single submodular function $f\colon 2^V \to \R$ and
want to find a set $X \subseteq V$ in $\arg\min f$.
It is not difficult to observe that \SR{} for $d=0$ can also be solved in polynomial time by computing a minimizer of the submodular function $\sum_{i=1}^k f_i$; see Section~\ref{sec:alg-poly}.

The rest of our positive results are based on 
Birkhoff's representation theorem on distributive lattices~\cite{Birkhoff37} that allows us to maintain the family of minimizers of a submodular function in a compact way.
Specifically, 
even though the number of minimizers may be exponential in the input size,
we can represent all  minimizers as a family of cuts in a directed acyclic graph with polynomial size.
As we show in Section~\ref{sec:alg-poly}, we can use this representation to solve an instance~$I$ of \SR{} with $k=2$ by constructing a directed graph with two distinct vertices, $s$ and $t$, in which a minimum $(s,t)$-cut yields a solution for~$I$.
More generally, 
Birkhoff's compact representation allows us to reduce  \SR{} for arbitrary~$k$ to the so-called \MBDC{} problem, solvable by an algorithm due to Kratsch et al.~\cite{KratschLMPW20}, leading to a fixed-parameter tractable algorithm for the parameter~$(k,d)$.
We note that a similar construction was used to show that the min-sum variant of the problem is polynomial-time solvable~\cite{KakimuraKKO22}.

Section~\ref{sec:hardness} contains our $\NP$-hardness results for the cases when either $d$ is a constant at least~$1$, or $k$ is a constant at least~$3$. \new{For the former result}, we present a reduction from an intermediate problem that may be of independent interest: in this problem, we are given $k$ set families~$\FF_1,\dots,\FF_k$ over a universe~$V$ containing two distinguished elements, $s$ and~$t$, with each $\FF_i$ containing pairwise disjoint subsets of~$V$; the task is to find a set~$X \subseteq V$ containing~$s$ but not~$t$ that has a bounded distance from each family~$\FF_i$ for a specific distance measure.

In Section~\ref{sec:polynomialminimizers}, we consider the case when some or all of the $k$ submodular functions have only polynomially many minimizers.
As mentioned in~\cite{KakimuraKKO22}, \SR{} is $\NP$-hard even when each submodular function $f_i$ has a unique minimizer.
In fact, the problem is equivalent to the \textsc{Closest String} problem over a binary alphabet, shown to be $\NP$-hard under the name \textsc{Minimum Radius} by Frances and Litman~\cite{DBLP:journals/mst/FrancesL97}.
Since the \textsc{Closest String} problem is fixed-parameter tractable with parameter $k$, so is the problem when each submodular function has a constant number of minimizers by a brute force approach~(Proposition~\ref{prop:few-minimizers}).
However, if each submodular function has $|V|$ minimizers, Theorem~\ref{thm:W1hard-k} says that the problem is W[1]-hard parameterized by $k$.
\new{By contrast}, 
for the case when $|\arg\min f_i|$ is polynomially bounded for \textit{some}~$i \in [k]$, we present a fixed-parameter tractable algorithm parameterized only by $d$. 
Our algorithm guesses a set in~$\arg\min f_i$ and uses it as an ``anchor,'' then solves the problem recursively by the bounded search-tree technique.

\subsection{Related Work}\label{sec:relatedwork}

\SR{} is related to the \textit{robust recoverable} combinatorial optimization problem, introduced by Liebchen et al.~\cite{LiebchenLMS09}.
It is a framework of mathematical optimization that allows recourse in decision-making to deal with uncertainty.
In this framework, we are given a problem instance with some scenarios and a recovery bound $d$, and the task is to find a feasible solution $X$~(the first-stage solution) to the instance that can be transformed to a feasible solution $Y_i$~(the second-stage solutions) in each scenario $i$ respecting the recovery bound~(e.g., $|X\symmdiff Y_i|\leq d$ for each $i$).
The cost of the solution is usually evaluated by the sum of the cost of $X$ and the sum of the costs of $Y_i$'s.
Robust recoverable versions have been studied for a variety of standard combinatorial optimization problems.
Examples include the shortest path problem~\cite{Busing12}, the assignment problem~\cite{HLW21}, the travelling salesman problem~\cite{abs-2111-09691},
and others~\cite{HradovichKZ17, LachmannLW21, LendlPT22}.
The setting was originally motivated from the situation where the source of uncertainty was the cost function which changes in the second stage.
We consider another situation dealing with \emph{structural uncertainty}, where some unknown set of input elements can be interdicted~\cite{DouradoMPRPA15,ItoKKKO22}.
Recently, a variant of robust recoverable problems has been studied where certain operations are allowed in the second stage~\cite{HommelsheimMMP}.

\emph{Reoptimization} is another concept related to \SR{}.
In general reoptimization, 
we are given an instance~$I$ of a combinatorial optimization problem and an optimal solution~$X$ for~$I$.
Then, for a slightly modified instance~$I'$ of the problem, we need to make a small change to $X$ so that the resulting solution~$X'$ is an optimal~(or a good approximate) solution to the modified instance~$I'$.
Reoptimization has been studied for several combinatorial optimization problems such as the minimum spanning tree problem~\cite{BoriaP10}, the traveling salesman problem~\cite{Monnot15}, and the Steiner tree problem~\cite{BockenhauerFHMSS12}.


\section{Preliminaries}
\label{sec:prelim}

\new{
In this section we introduce all the notation we need, and pinpoint the known results that are the most relevant for our study.}

\paragraph*{Graphs and Cuts}
Given a directed graph~$G=(V,E)$, we write~$uv$ for an edge pointing from~$u$ to~$v$. For  a subset~$X \subseteq V$ of vertices in~$G$, let $\delta_G(X)$ denote the set of edges leaving~$X$. 
If $G$ is an undirected graph, then $\delta_G(X)$ for some set~$X$ of vertices denotes the set of edges with exactly one endpoint in~$X$.
We may simply write $\delta (X)$ if the graph is clear from the context.

For two vertices~$s$ and~$t$ in a directed or undirected graph $G=(V,E)$, an \emph{$(s,t)$-cut} is a set~$X$  of vertices such that $s \in X$ but $t \notin X$. 
A \emph{minimum $(s,t)$-cut} in~$G$ is an $(s,t)$-cut~$X$ that minimizes $|\delta(X)|$.
Given a cost function $c\colon E\to \R_+ \cup \{+\infty\}$ on the edges of~$G$ 
where $\R_+$ is the set of all non-negative real numbers, 
the \emph{(weighted) cut function} $\kappa_G:2^V \rightarrow \R_+ \cup \{+\infty\}$ is defined by
\begin{equation}
  \kappa_G (X) = \sum_{e \in \delta (X)} c(e).
  \label{eq:cutfnct}
\end{equation}
%
A \emph{minimum-cost $(s,t)$-cut} is an $(s,t)$-cut~$X$ that minimizes $\kappa_G(X)$.


\paragraph*{Distributive Lattices}
In this paper, we will make use of properties of finite distributive lattices on a ground set~$V$.

A \emph{distributive lattice} is a set family $\LL \subseteq 2^V$ that
is closed under union and intersection, that is, $X, Y \in \LL$ implies $X\cup Y \in \LL$ and
$X\cap Y \in \LL$.
Then $\LL$ is a partially ordered set with respect to set inclusion $\subseteq$,
and has a unique minimal element and a unique maximal element.

\emph{Birkhoff's representation theorem} is a useful tool for studying distributive 
lattices.

\begin{theorem}[Birkhoff's representation theorem~\cite{Birkhoff37}]\label{thm:Birkhoff}
Let $\LL \subseteq 2^V$ be a distributive lattice.
Then there exists a partition of $V$ into $U_0, U_1, \dots, U_b, U_\infty$, where
$U_0$ and $U_\infty$ can possibly be empty, such that the following hold:
\begin{bracketenumerate}
    \item[\textup{(1)}] Every set in $\LL$ contains $U_0$.
    \item[\textup{(2)}] Every set in $\LL$ is disjoint from $U_\infty$.
    \item[\textup{(3)}] For every set $X \in \LL$, there exists a set $J \subseteq [b]$ of
indices such that $X = U_0\cup \bigcup_{j \in J} U_j$.
    \item[\textup{(4)}] There exists a directed acyclic graph $G(\LL)$ that has the following properties.
    \begin{alphaenumerate}
        \item The vertex set is $\{U_0, U_1, \dots, U_b\}$.
        \item $U_0$ is a unique sink\footnote{A \textit{sink} is a vertex of out-degree zero.} of $G(\LL)$.
        \item For a non-empty set $Z$ of vertices in $G(\LL)$, $Z$ has no out-going edge if and only if
$\bigcup_{U_j\in Z} U_j\in \LL$.
    \end{alphaenumerate}
\end{bracketenumerate}
\end{theorem}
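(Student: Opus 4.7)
The plan is to derive the partition from an equivalence relation on~$V$ that captures ``membership patterns'' in~$\LL$, and then to build $G(\LL)$ from the induced implication order between classes. I would first set $U_0 := \bigcap_{X \in \LL} X$ and $U_\infty := V \setminus \bigcup_{X \in \LL} X$; closure of $\LL$ under intersection and union makes $U_0 \in \LL$ and $V \setminus U_\infty \in \LL$, immediately giving properties~(1) and~(2). For $u \in V \setminus (U_0 \cup U_\infty)$, define $\LL(u) := \{X \in \LL : u \in X\}$ and declare $u \sim v$ iff $\LL(u) = \LL(v)$, taking $U_1, \dots, U_b$ as the equivalence classes of $\sim$. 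Since elements of a common class share their membership in every set of~$\LL$, each $X \in \LL$ must contain either all of $U_j$ or none of it for every $j \in [b]$, which together with $U_0 \subseteq X$ and $X \cap U_\infty = \emptyset$ yields $X = U_0 \cup \bigcup_{j \in J} U_j$ for $J := \{j \in [b] : U_j \subseteq X\}$, establishing property~(3).

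For $G(\LL)$ I would place an edge $U_j \to U_k$ precisely when $j \neq k$ and every $X \in \LL$ containing $U_j$ also contains $U_k$. Because $U_0 \subseteq X$ holds universally, the edge $U_j \to U_0$ exists for every $j \geq 1$, whereas an edge $U_0 \to U_k$ would force $U_k \subseteq \bigcap \LL = U_0$, impossible for the non-empty class $U_k$ disjoint from $U_0$; hence $U_0$ is the unique sink, giving properties~(4a) and~(4b). Acyclicity is automatic, since any directed cycle would force the visited classes to share a common membership pattern in~$\LL$, contradicting their distinctness under~$\sim$.

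The crux is property~(4c). For the ``$\Leftarrow$'' direction, if $X := \bigcup_{U_j \in Z} U_j \in \LL$ and $U_j \to U_k$ for some $U_j \in Z$, then $U_k \subseteq X$; pairwise disjointness of the classes and non-emptiness of $U_k$ (for $k \geq 1$) force $U_k \in Z$, while the case $k = 0$ is handled by noting that any $U_j \in Z$ with $j \geq 1$ already contributes the edge $U_j \to U_0$. For the ``$\Rightarrow$'' direction, assume $Z$ has no outgoing edges and set $X := \bigcup_{U_j \in Z} U_j$. For each $k \in [b]$ with $U_k \notin Z$ and each $U_j \in Z$, the absent edge $U_j \to U_k$ supplies a witness $Y_j^k \in \LL$ containing $U_j$ but not $U_k$; the union $Y^k := \bigcup_{U_j \in Z} Y_j^k$ lies in $\LL$, contains $X$, and is disjoint from $U_k$. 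Then $\bigcap_{U_k \notin Z} Y^k \in \LL$ by closure under intersection, contains $X$, and excludes every $U_k \notin Z$, so property~(3) forces this intersection to equal $X$ itself, yielding $X \in \LL$. I expect the main obstacle to be the bookkeeping for degenerate sub-cases, such as $U_0 = \emptyset$, $Z = \{U_0\}$ (where $\bigcup Z = U_0 \in \LL$ directly), or $Z = \{U_0, U_1, \dots, U_b\}$ (where the intersection is over an empty collection and $X = V \setminus U_\infty \in \LL$ directly), each of which can be dispatched by a short direct check.
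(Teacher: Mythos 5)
The paper does not prove this theorem; it cites Birkhoff~\cite{Birkhoff37} and uses it as a black box, so I am reviewing your argument on its own merits. Your plan---taking $U_0=\bigcap\LL$ and $U_\infty=V\setminus\bigcup\LL$, partitioning the remainder by $u\sim v\Leftrightarrow\LL(u)=\LL(v)$, and putting an edge $U_j\to U_k$ exactly when membership of $U_j$ in a lattice element implies membership of $U_k$---is the standard route to the finitary Birkhoff theorem, and properties~(1)--(3), acyclicity, and your argument for the ``$\Rightarrow$'' direction of~(4c) all go through (note that for $j=0$ the witness $Y^k_0=U_0$ always works, and since $Z$ has no outgoing edges and $U_j\to U_0$ for $j\geq 1$, you may assume $U_0\in Z$, which is what makes $\bigcap_k Y^k$ collapse to $\bigcup_{U_j\in Z}U_j$ rather than to $U_0\cup\bigcup_{U_j\in Z}U_j$).

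The gap is in the ``$\Leftarrow$'' direction of~(4c), and it sits exactly in the degenerate case you flag but defer. Your edge rule gives $U_j\to U_0$ for \emph{every} $j\geq 1$ unconditionally, because $U_0\subseteq X$ holds for every $X\in\LL$, vacuously so when $U_0=\emptyset$. Take $V=\{a\}$ and $\LL=\{\emptyset,\{a\}\}$: then $U_0=U_\infty=\emptyset$, $U_1=\{a\}$, and your $G(\LL)$ has the edge $U_1\to U_0$. For $Z=\{U_1\}$ we have $\bigcup_{U_j\in Z}U_j=\{a\}\in\LL$, yet $Z$ has an outgoing edge, so~(4c) fails. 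This is not bookkeeping: in this example properties~(4b) and~(4c) are mutually inconsistent as literally stated, since by~(4c) the singleton $\{U_1\}$ would have to have no outgoing edge, making $U_1$ a second sink and contradicting~(4b). So no ``short direct check'' can finish your proof without amending either the statement or the construction. The cleanest repair is to restrict~(4c) to non-empty sets $Z$ containing $U_0$; under that hypothesis your ``$\Rightarrow$'' argument is unchanged and the ``$\Leftarrow$'' case $k=0$ is immediate from $U_0\in Z$. You should state this hypothesis (or whatever other fix you adopt) explicitly rather than leaving the empty-$U_0$ case as a deferred ``short direct check,'' since as written the check cannot succeed.
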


For a distributive lattice $\LL \subseteq 2^V$, 
we call the directed acyclic graph $G(\LL)$ above
a \emph{compact representation} of $\LL$.
Note that the size of $G(\LL)$ is $O(|V|^2)$ while
$|\LL|$ can be as large as $2^{|V|}$.


\paragraph*{Submodular Function Minimization}

Let $V$ be a non-empty finite set.
A function $f\colon 2^V \to \R$ is \emph{submodular} if 
 $ f(X) + f(Y) \geq f(X \cup Y) + f(X \cap Y)$ 
for all $X, Y \subseteq V$.
A typical example of submodular functions is the cut function $\kappa_G$ of a directed (or undirected) edge-weighted graph~$G$ as defined in~(\ref{eq:cutfnct}).
If the graph $G=(V \cup \{s,t\}, E)$ contains two distinct vertices, $s$ and~$t$, then
we can restrict the cut function to the domain of $(s,t)$-cuts in the following sense:
each $X \subseteq V$ corresponds to an $(s,t)$-cut $X \cup \{s\}$ in~$G$;
then the function $\lambda_G :2^V \to \mathbb{R}_+ \cup \{+\infty\}$ defined by $\lambda_G (X) = \kappa_G (X\cup \{s\})$ is submodular.



When we discuss computations on a submodular function $f\colon 2^V \to \R$, we assume that we are given a \emph{value oracle} of $f$.
A value oracle takes $X \subseteq V$ as an input, and returns the value~$f(X)$.
Assuming that we are given a value oracle, we can minimize a submodular function in polynomial time.
The currently fastest algorithm for  submodular function minimization
was given by Lee et al.~\cite{LeeSW15} and
runs in $\tilde{O}(|V|^3 \EO + |V|^4)$ time, where $\EO$ is the query time of a value oracle.

Let $f\colon 2^V \to \R$ be a submodular function.
A subset $Y \subseteq V$ is a \emph{minimizer} of the function~$f$ if $f(Y) \leq f(Y')$ for all $Y' \subseteq V$.
The set of minimizers of $f$ is denoted by $\arg\min f$.
The following is a well-known fact on submodular functions.
\begin{lemma}[See e.g., \cite{schrijver-book}]
\label{lem:argmin-lattice}
Let $f\colon 2^V \to \R$ be a submodular function.
Then $\arg\min f$ forms a distributive lattice.
\end{lemma}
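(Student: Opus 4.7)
The plan is to verify the two conditions defining a distributive sublattice of $2^V$: closure under union and intersection. Distributivity itself will then come essentially for free, because $(2^V,\cup,\cap)$ is distributive and any family closed under $\cup$ and $\cap$ inherits the distributive laws from the ambient lattice.

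Set $m := \min_{Z \subseteq V} f(Z)$, and let $X,Y \in \arg\min f$, so $f(X)=f(Y)=m$. The key step is a single application of the submodular inequality together with the minimality of $m$. Namely, I would write
\begin{equation*}
2m \;=\; f(X) + f(Y) \;\geq\; f(X \cup Y) + f(X \cap Y) \;\geq\; m + m \;=\; 2m,
\end{equation*}
where the first inequality is submodularity and the second uses $f(X\cup Y),\,f(X\cap Y)\geq m$ by definition of $m$. All inequalities must therefore hold with equality, and since each of $f(X\cup Y)$ and $f(X\cap Y)$ is at least $m$ but their sum equals $2m$, each must equal $m$. Hence $X\cup Y$ and $X\cap Y$ both lie in $\arg\min f$, establishing closure.

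Finally, I would remark that for any $X,Y,Z \in \arg\min f$, the identities $X \cap (Y \cup Z) = (X \cap Y) \cup (X \cap Z)$ and $X \cup (Y \cap Z) = (X \cup Y) \cap (X \cup Z)$ hold as set-theoretic identities in $2^V$, so the distributive laws are automatic once we have verified that $\arg\min f$ is closed under $\cup$ and $\cap$.

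I do not anticipate any real obstacle here; the whole argument is a one-line squeeze using submodularity, and the only subtle point worth stressing is that closure under both operations is forced simultaneously by the equality case, rather than requiring two separate arguments.
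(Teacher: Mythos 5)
Your proof is correct and is exactly the standard argument that the paper is citing (it states this lemma without proof, pointing to Schrijver's book). The squeeze $2m = f(X)+f(Y) \geq f(X\cup Y)+f(X\cap Y) \geq 2m$ is the canonical derivation; one small remark is that the paper's own definition of a distributive lattice is simply a subfamily of $2^V$ closed under $\cup$ and $\cap$, so your final paragraph on the distributive laws being inherited from $2^V$, while true, is not needed under that definition.
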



A compact representation of the distributive lattice $\arg\min f$ can be constructed
in $\tilde{O}(|V|^5 \EO + |V|^6)$ time via Orlin's submodular function minimization algorithm~\cite{DBLP:journals/mp/Orlin09}.
See~\cite[Notes~10.11--10.12]{MurotaDCA}. 
Unless otherwise stated, we will assume that the submodular functions given in our problem instances are given via their compact representation.


As a special case, consider minimum $(s,t)$-cuts in a directed graph $G=(V \cup \{s,t\},E)$ with a positive cost function~$c$ on its edges.
By Lemma \ref{lem:argmin-lattice}, the family of minimum $(s,t)$-cuts 
forms a distributive lattice.
A compact representation for this lattice can be constructed from a maximum flow in the $(s,t)$-network in linear time~\cite{Picard1980}.
Thus the running time is dominated by the maximum flow computation, and
this can be done in $O(|V||E|)$ time~\cite{DBLP:conf/stoc/Orlin13}.

\paragraph*{Parameterized Complexity}
In parameterized complexity, each input instance~$I$ of a \emph{parameterized problem}~$Q$ is associated with a \emph{parameter}~$k$, usually an integer or a tuple of integers,
and we consider the running time of any algorithm solving~$Q$ as not only a function of the input length~$|I|$, but also as a function of the parameter~$k$.
An algorithm for~$Q$ is \emph{fixed-parameter tractable} or FPT, if it runs in $g(k)|I|^{O(1)}$ time for some computable function~$g$. Such an algorithm can be efficient in practice if the parameter is small.
\new{
In cases when an FPT algorithm seems out of reach, it might still be possible to find an algorithm that runs in polynomial time for each fixed value of the parameter; such an algorithm is called an \emph{$\mathsf{XP}$ algorithm}. Clearly, if $Q$ is $\NP$-hard already for some fixed value of the parameter~$k$ --- in which case the problem is said to be \emph{para-$\NP$-hard with respect to~$k$} --- then we cannot expect an $\mathsf{XP}$-algorithm for it.
} 

\new{
Analogously to the concept of $\NP$-hardness, the \emph{$\mathsf{W}[1]$-hardness} of a parameterized problem shows that it is unlikely to be fixed-parameter tractable. Proving that some parameterized problem~$Q$ is $\mathsf{W}[1]$-hard entails giving a \emph{parameterized reduction} from a parameterized problem~$Q'$ that is already known to be $\mathsf{W}[1]$-hard to~$Q$.} 
See the books~\cite{CyganEtAl2015,downey-fellows-FPC-book} for more background.

\section{Algorithms for small $k$ or~$d$}
\label{sec:alg}
In this section, we present algorithms for \SR{} for the cases when $d=0$ or~$k\leq 2$, and an FPT algorithm for parameter~$(k,d)$ that is efficient when both~$k$ and~$d$ are small integers.
We start with a construction that we will use in most of our algorithms.
Let $\InstSR=(V,f_1,\dots,f_k,d)$ be our input instance.

For each $i\in [k]$, let $\mathcal{L}_i = \arg\min f_i$ denote the set of minimizers.
By Lemma~\ref{lem:argmin-lattice}, using Birkhoff's representation theorem 
we may assume that $f_i$ is given through a compact representation $G(\mathcal{L}_i)$ of $\mathcal{L}_i$, whose vertex set is $\{U^i_0, U^i_1, \dots, U^i_{b_i}\}$ with $U^i_\infty = V\setminus \bigcup_{j=0}^{b_i} U^i_j$.

We then construct a directed graph $G^i$ from $G(\mathcal{L}_i)$ by expanding each vertex in $G(\mathcal{L}_i)$ to a complete graph.
More precisely, $G^i$ has vertex set $V^i\cup\{s, t\}$ where $V^i=\{v^i: v\in V\}$ is a copy of $V$,
and its edge set~$E^i$ is defined as follows.
\begin{itemize}
\item $u^i v^i\in E^i$ if $u, v\in U^i_j$ for some $j\in\{0,1,\dots, b_i, \infty\}$.
\item $u^iv^i\in E^i$ for any $u\in U^i_j$ and $v\in U^i_{j'}$ if $G(\mathcal{L}_i)$ has an edge from $U^i_j$ to $U^i_{j'}$.
\item $u^is\in E^i$ and $su^i\in E^i$ if $u\in U^i_0$.
\item $u^it\in E^i$ and $tu^i\in E^i$ if $u\in U^i_\infty$.
\end{itemize}

Recall that $G^i$ can be computed in $(|V|^5 \EO_i + |V|^6)$ time where $\EO_i$ denotes the query time of a value oracle for function~$f_i$.

We define the function $\lambda_i:2^{V^i}\to \mathbb{Z}_+$ so that $\lambda_i(X)=|\delta_{G^i}(X\cup \{s\})|$ for a subset~$X\subseteq V^i$. 
Then it is observed below that each subset $X\subseteq V^i$ with $\lambda_i (X)=0$ corresponds to a minimizer of $f_i$.

\begin{lemma}[{\cite[Lemma~3.2]{KakimuraKKO22}}]
\label{lem:minimizers}
Let $X$ be a subset in $V$, and $X^i=\{v^i\in V^i:v \in X\}$ its copy in~$G^i$.
Then $\lambda_i (X^i)=0$ if and only if $X\in \mathcal{L}_i$.
\end{lemma}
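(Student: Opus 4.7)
The plan is to show that each of the four edge types in $G^i$ encodes one of the conditions of Birkhoff's theorem applied to $\mathcal{L}_i = \arg\min f_i$, so that the non-existence of any edge leaving $X^i\cup\{s\}$ in $G^i$ is equivalent to $X$ being expressible as a union of blocks in the Birkhoff decomposition whose support is closed under outgoing edges of~$G(\mathcal{L}_i)$, which in turn characterises membership in $\mathcal{L}_i$.

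For the direction $X\in\mathcal{L}_i \Rightarrow \lambda_i(X^i)=0$, I would invoke Birkhoff~(3) to write $X=U^i_0\cup\bigcup_{j\in J}U^i_j$ for some $J\subseteq[b_i]$, and let $Z=\{U^i_0\}\cup\{U^i_j:j\in J\}$, which has no outgoing edges in $G(\mathcal{L}_i)$ by Birkhoff~(c). Then I would go through the four edge types of $G^i$ in turn and verify that none of them leaves $X^i\cup\{s\}$: (i)~the intra-block cliques cannot cause a leaving edge because each $U^i_j$ is either fully inside or fully outside~$X$; (ii)~an inter-block edge leaving $X^i$ would correspond to an outgoing edge from~$Z$ in $G(\mathcal{L}_i)$, contradicting its closure; (iii)~edges between~$s$ and~$U^i_0$ stay inside $X^i\cup\{s\}$ since $U^i_0\subseteq X$; (iv)~edges between~$t$ and~$U^i_\infty$ stay entirely outside $X^i\cup\{s\}$ since $U^i_\infty\cap X=\emptyset$ by Birkhoff~(2).

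For the converse, I would assume $\lambda_i(X^i)=0$ and reconstruct the Birkhoff structure of~$X$. The bidirectional intra-block cliques immediately force every block $U^i_j$ to be either fully contained in $X$ or disjoint from~$X$. The bidirectional $s$-edges to~$U^i_0$ force $U^i_0\subseteq X$, and the bidirectional $t$-edges to~$U^i_\infty$ force $U^i_\infty\cap X=\emptyset$. Hence $X = U^i_0 \cup \bigcup_{j\in J} U^i_j$ with $J=\{j\in[b_i] : U^i_j\subseteq X\}$. Finally, any outgoing edge from $Z=\{U^i_0\}\cup\{U^i_j:j\in J\}$ in $G(\mathcal{L}_i)$ would, by construction of~$G^i$, produce an edge from~$X^i$ to~$V^i\setminus X^i$, contradicting $\lambda_i(X^i)=0$; so $Z$ is closed under outgoing edges and Birkhoff~(c) delivers $X\in\mathcal{L}_i$.

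There is no single hard step here; the proof is really a careful case analysis over edge types. The main pitfall is bookkeeping: the blocks $U^i_j$ partition~$V$ rather than $V^i\cup\{s,t\}$, so $s$ and~$t$ must be handled as separate cases, and one must remember that intra-block, $s$-, and $t$-edges are bidirectional whereas inter-block edges inherit the orientation of~$G(\mathcal{L}_i)$, so both orientations need to be examined separately when hunting for edges that leave~$X^i\cup\{s\}$.
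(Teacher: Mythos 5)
Your proof is correct, and it is the natural (indeed essentially the only) argument given that $G^i$ is constructed precisely so that the absence of edges leaving $X^i\cup\{s\}$ encodes the three Birkhoff conditions for $X$; the paper itself does not reprove this lemma but cites it from Kakimura et al.~\cite{KakimuraKKO22}. Your case analysis over the four edge types, with correct handling of $s$, $t$, and the orientation asymmetry between inter-block edges and the bidirected clique/$s$/$t$ edges, is complete and matches the intended reasoning.
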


The rest of the section is organized as follows.
In Section~\ref{sec:alg-poly} we present polynomial-time algorithms for the cases $d=0$ and~$k =2$.
In Section~\ref{sec:alg-FPT} we give an FPT 
algorithm for the combined parameter~$(k,d)$. Section~\ref{sec:alg-unique} deals with the case when some function $f_i$ has only polynomially many minimizers, allowing for an FPT algorithm
with parameter~$d$.


\subsection{Polynomial-time algorithms}
\label{sec:alg-poly}

We start by observing that the case $d=0$ is efficiently solvable by computing a minimizer for the function $\sum_{i \in [k]} f_i$ which is also submodular; 
for this result, it will be convenient to use value oracles. 
\begin{observation}
\label{obs:d=0}    
\SR{} for $d=0$ can be solved in time $O(|V|^3 \sum_{i=1}^k {\EO}_i +|V|^4)$ 
where ${\EO}_i$ is the query time of a value oracle for~$f_i$ for $i \in [k]$.
\end{observation}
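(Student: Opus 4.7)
The plan is to reduce the $d=0$ case of \SR{} to a single submodular function minimization (SFM) problem. When $d=0$, the constraint $|X \symmdiff Y_i| \leq 0$ forces $X = Y_i$ for every $i \in [k]$, so a feasible $X$ must be a common minimizer of all the $f_i$; equivalently, $X \in \bigcap_{i \in [k]} \arg\min f_i$. The algorithm must therefore produce some element of this intersection, or certify that it is empty.

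I would define $F: 2^V \to \R$ by $F(X) = \sum_{i=1}^k f_i(X)$. Since submodularity is preserved under addition, $F$ is submodular, and a single value-oracle query to $F$ can be answered in $\sum_{i=1}^k \EO_i$ time. The key observation is that $F(X) \geq \sum_{i=1}^k \min f_i$ holds for every $X \subseteq V$, with equality if and only if $X \in \arg\min f_i$ for each $i \in [k]$. Hence a common minimizer exists if and only if some---equivalently, every---minimizer of $F$ is one; in particular, $\arg\min F$ contains a solution whenever the instance is feasible.

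The algorithm is then: compute some $X^* \in \arg\min F$ using the SFM routine of Lee et al.~\cite{LeeSW15}; for each $i \in [k]$, compute $m_i = \min f_i$ (again via SFM, on $f_i$) and check whether $f_i(X^*) = m_i$; if all $k$ checks succeed, output $X^*$, otherwise report that no solution exists. Correctness is immediate from the observation above. The dominant cost is the minimization of $F$, which, using $\EO_F = \sum_i \EO_i$, takes $\tilde{O}(|V|^3 \sum_{i=1}^k \EO_i + |V|^4)$ time, matching the claimed bound up to polylogarithmic factors; the individual verification calls fit within the same bound. There is no serious obstacle here: the whole argument rests on the elementary identification of $\arg\min F$ with $\bigcap_{i \in [k]} \arg\min f_i$ under feasibility, which collapses the multi-function problem to one standard SFM call plus routine bookkeeping.
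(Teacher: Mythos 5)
Your proposal is correct and matches the paper's approach: both define $F = \sum_{i \in [k]} f_i$, exploit that $F$ is submodular with value oracle cost $\sum_i \EO_i$, minimize $F$ once, and then verify the resulting set against each $f_i$ separately, with the equality characterization $F(X) = \sum_i \min f_i \iff X \in \bigcap_i \arg\min f_i$ being the same argument the paper phrases as a short contradiction.
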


\begin{proof}
Let $I=(V,f_1,\dots,f_k,0)$ be our input.
Since the addition of submodular functions yields a submodular function, the function $f=\sum_{i \in [k]}f_i$ is also submodular. Clearly, if some set~$X$ is contained in~$\arg\min f_i$ for each $i \in [k]$, then $X \in \arg\min f$. Conversely, 
if some $X$ is contained in~$\arg\min f$ but $X \notin \arg\min f_i$ for some~$i \in [k]$, then there is no common minimizer~$Y$ of the functions $f_1,\dots,f_k$. Indeed, such a set~$Y$ would also necessarily minimize~$f$, implying
\[
f(Y)=f_i(Y)+\sum_{j \in [k],j \neq i} f_j(Y) <
f_i(X) +\sum_{j \in [k],j \neq i} f_j(Y)=f(X)
\]
where the inequality follows from our assumption that $X$ is not a minimizer for~$f_i$. However, this means that $X \notin \arg\min f$, a contradiction.

Therefore, to solve our input instance~$I$, it suffices to compute a minimizer~$X$ for the function $f$, which can be done in polynomial time, due to the submodularity of~$f$. 
Then we check whether $X \in \arg\min f_i$ for each $i \in [k]$ (by computing any minimizer for each function~$f_i$).
If  $X \in \arg\min f_i$ holds for each~$f_i$, then we return~$X$; otherwise we output ``No.''

The claimed running time follows from the fact that the value of~$f$  on some subset of~$V$ can be computed in $\sum_{i=1}^k \EO_i$ time, and thus we can minimize~$f$ in time  $O(|V|^3 \sum_{i=1}^k {\EO}_i +|V|^4)$  using the algorithm by Lee at al~\cite{LeeSW15}.
\end{proof}


From now on, we will assume that the functions~$f_1,\dots,f_k$ are given through a compact representation.
Under this condition, we next show that the problem is polynomial-time solvable for $k=2$.
We will need the following intuitive fact.


\begin{proposition}
\label{prop:halfdiff}
Let $Y_1$, $Y_2$ be two subsets of a set~$V$.
Then $|Y_1\symmdiff Y_2|\leq 2d$ if and only if there exists a set~$X \subseteq V$ such that $|X\symmdiff Y_i|\leq d$ for each $i\in \{1,2\}$.
\end{proposition}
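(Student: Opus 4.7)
The plan is to prove the two directions of the biconditional separately. The easier direction is that the existence of a suitable~$X$ implies $|Y_1 \symmdiff Y_2| \leq 2d$. For this, I would invoke the well-known triangle inequality for the symmetric-difference metric, namely $Y_1 \symmdiff Y_2 \subseteq (Y_1 \symmdiff X) \cup (X \symmdiff Y_2)$, which yields $|Y_1 \symmdiff Y_2| \leq |Y_1 \symmdiff X| + |X \symmdiff Y_2| \leq 2d$. I would probably include one line justifying the set-theoretic containment (an element lying in exactly one of $Y_1, Y_2$ must be in exactly one of $Y_1, X$ or in exactly one of $X, Y_2$, since all four options balance parity).

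For the converse, I would construct $X$ explicitly as a set that sits "halfway" between $Y_1$ and $Y_2$. Write $A = Y_2 \setminus Y_1$ and $B = Y_1 \setminus Y_2$, so that $|A| + |B| = |Y_1 \symmdiff Y_2| \leq 2d$. The plan is to pick a subset $A' \subseteq A$ and a subset $B' \subseteq B$ with $|A'| + |B'| = \lceil |Y_1 \symmdiff Y_2|/2\rceil$, and to set $X = (Y_1 \cup A') \setminus B'$. The cardinalities split in the natural way: choose $|A'| = \min\{|A|,\lceil |Y_1 \symmdiff Y_2|/2\rceil\}$ and $|B'| = \lceil |Y_1 \symmdiff Y_2|/2\rceil - |A'|$; this is feasible since the total does not exceed $|A|+|B|$.

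Finally I would verify the two bounds. Since $X \symmdiff Y_1 = A' \cup B'$ and these sets are disjoint, $|X \symmdiff Y_1| = \lceil |Y_1 \symmdiff Y_2|/2\rceil \leq \lceil 2d/2\rceil = d$. Similarly $X \symmdiff Y_2 = (A \setminus A') \cup (B \setminus B')$, so $|X \symmdiff Y_2| = |Y_1 \symmdiff Y_2| - \lceil |Y_1 \symmdiff Y_2|/2\rceil = \lfloor |Y_1 \symmdiff Y_2|/2 \rfloor \leq d$, as desired.

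I do not expect any real obstacle here; the proof is essentially the assertion that $(2^V, \symmdiff)$ is a metric space in which one can realize midpoints (up to one element of slack) between any two points. The only mild care needed is to ensure that the split $|A'|+|B'|=\lceil|Y_1\symmdiff Y_2|/2\rceil$ is consistent with the individual cardinality bounds $|A'|\leq|A|$ and $|B'|\leq|B|$, which is handled by the min-based choice above.
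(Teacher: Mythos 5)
Your proof is correct, and both directions rest on the same underlying ideas as the paper's: the triangle inequality for the symmetric-difference metric gives the ``easy'' direction, and the ``hard'' direction is done by constructing a midpoint. However, your midpoint construction is executed differently, and in a way that is arguably cleaner. The paper sets $d_1 = |Y_1 \setminus Y_2|$, $d_2 = |Y_2 \setminus Y_1|$, picks $Z_1 \subseteq Y_1 \setminus Y_2$ with $|Z_1| = \lfloor d_1/2 \rfloor$ and $Z_2 \subseteq Y_2 \setminus Y_1$ with $|Z_2| = \lfloor d_2/2 \rfloor$, takes $X = (Y_1 \cap Y_2) \cup Z_1 \cup Z_2$, and then must argue $|X \symmdiff Y_1| = \lceil d_1/2 \rceil + \lfloor d_2/2 \rfloor \leq d$ by a case split on the parities of $d_1$ and $d_2$. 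You instead fix the \emph{total} transfer size to be exactly $\lceil |Y_1 \symmdiff Y_2|/2 \rceil$, distributed greedily between $A = Y_2 \setminus Y_1$ and $B = Y_1 \setminus Y_2$ subject to the obvious capacity constraints; this forces $|X \symmdiff Y_1| = \lceil |Y_1 \symmdiff Y_2|/2 \rceil$ and $|X \symmdiff Y_2| = \lfloor |Y_1 \symmdiff Y_2|/2 \rfloor$ by construction, and both are $\leq d$ immediately. The benefit of your route is that the parity case analysis disappears; the feasibility of the min-based split is the only thing to check, and it is straightforward. Both arguments are elementary and short, but yours is slightly more streamlined.
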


\begin{proof}
To prove the sufficiency, suppose that $|Y_1\symmdiff Y_2|> 2d$.
Then it holds that for any set $X \subseteq V$, 
\[
2d< |Y_1\symmdiff Y_2| \leq |Y_1\symmdiff X| + |X\symmdiff Y_2|.
\]
Hence at least one of $|Y_1\symmdiff X|$ and $|X\symmdiff Y_2|$ is more than $d$.
Thus no subset $X$ satisfies  $|X\symmdiff Y_i|\leq d$ for both $i=1,2$.

For the necessity, suppose that $|Y_1\symmdiff Y_2|\leq 2d$.
We denote $d_1 =|Y_1\setminus Y_2|$ and $d_2 = |Y_2\setminus Y_1|$.
We choose arbitrary subsets $Z_1 \subseteq Y_1\setminus Y_2$ with size $\lfloor d_1/2\rfloor$ and $Z_2 \subseteq Y_2\setminus Y_1$ with size $\lfloor d_2/2\rfloor$.
Note that $|Y_1\symmdiff Y_2| = d_1 + d_2 \leq 2d$.

Define $X=(Y_1\cap Y_2)\cup Z_1\cup Z_2$.
We claim that $X$ satisfies that $|X\symmdiff Y_i|\leq d$ for both $i=1,2$.
In fact,  it holds that 
\begin{equation}
\begin{aligned}
|X\symmdiff Y_1| & = |\left(Y_1 \setminus Y_2\right) \setminus Z_1|+|Z_2| \\
& = \left(d_1 - \lfloor d_1/2\rfloor\right) + \lfloor d_2/2\rfloor
= \lceil d_1/2\rceil + \lfloor d_2/2\rfloor . 
\end{aligned}
\label{eq:k2}
\end{equation}
If $d_1$ is even, then we get $|X\symmdiff Y_1| + d_1/2 + \lfloor d_2/2 \rfloor \leq  (d_1+d_2)/2 \leq d$.
Suppose that $d_1$ is odd and $d_2$ is odd.
Then \eqref{eq:k2} implies that
\[
|X\symmdiff Y_1| 
= (d_1+1)/2 + (d_2-1)/2
= (d_1+d_2)/2 \leq d.
\]  
Finally, suppose that $d_1$ is odd and $d_2$ is even.
Then, since $d_1 + d_2 \leq 2d$, we see that $d_1 + d_2 \leq 2d-1$ also holds.
Hence, by~\eqref{eq:k2} we get
\[
|X\symmdiff Y_1| 
= (d_1+1)/2 + d_2/2
= (d_1+d_2+1)/2 
\leq d.
\]
Thus $|X\symmdiff Y_1|$, and by symmetry this shows also that $|X\symmdiff Y_i|\leq d$ for each $i\in\{1,2\}$.
\end{proof}


\begin{theorem}
\label{thm:k=2}
\SR{} for $k=2$ can be solved in polynomial time via a maximum flow computation.
\end{theorem}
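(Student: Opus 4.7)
\medskip

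\noindent\textbf{Proof plan.}
The plan is to combine the two compact-representation graphs $G^1$ and $G^2$ into a single flow network in which each minimum $(s,t)$-cut encodes a pair of minimizers $(Y_1,Y_2) \in \arg\min f_1 \times \arg\min f_2$ attaining the smallest possible symmetric difference~$|Y_1 \symmdiff Y_2|$, and then appeal to Proposition~\ref{prop:halfdiff}.

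First, I would build the graphs $G^1$ and $G^2$ as described at the beginning of Section~\ref{sec:alg}, recalling that by Lemma~\ref{lem:minimizers} a set $Y \subseteq V$ satisfies $Y \in \arg\min f_i$ if and only if $Y^i \cup \{s\}$ is an $(s,t)$-cut in $G^i$ with $\lambda_i(Y^i) = 0$. I then form a combined directed graph $H$ by identifying the two copies of~$s$ into a single source, identifying the two copies of~$t$ into a single sink, assigning capacity $+\infty$ to every edge of $G^1$ and $G^2$, and, for each $v \in V$, adding two directed edges $v^1 v^2$ and $v^2 v^1$ each of capacity~$1$. Any $(s,t)$-cut of finite capacity in~$H$ must be such that its restriction to $V^i \cup \{s\}$ contains no leaving edge of $G^i$, and hence by Lemma~\ref{lem:minimizers} corresponds to a minimizer $Y_i$ of $f_i$ for $i\in\{1,2\}$. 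Conversely, any pair $(Y_1,Y_2)$ of minimizers yields an $(s,t)$-cut in $H$ whose finite capacity is exactly $|Y_1 \setminus Y_2| + |Y_2 \setminus Y_1| = |Y_1 \symmdiff Y_2|$. Thus, computing a minimum-cost $(s,t)$-cut in~$H$ yields the smallest value $\delta^* = \min \{|Y_1 \symmdiff Y_2| : Y_i \in \arg\min f_i,\ i=1,2\}$.

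Next, by Proposition~\ref{prop:halfdiff} applied to the pair $(Y_1,Y_2)$ realising $\delta^*$, the instance has a solution if and only if $\delta^* \leq 2d$. If so, I construct $X$ explicitly from $Y_1$ and $Y_2$ via the formula in the proof of Proposition~\ref{prop:halfdiff}, namely $X = (Y_1 \cap Y_2) \cup Z_1 \cup Z_2$ with $Z_i$ an arbitrary subset of $Y_i \setminus Y_{3-i}$ of size $\lfloor |Y_i \setminus Y_{3-i}|/2 \rfloor$; otherwise I output ``No''. The running time is dominated by the construction of the compact representations (which we assume given) together with one maximum flow computation in~$H$, which has $O(|V|)$ vertices and $O(|V|^2)$ edges and can be solved in polynomial time.

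The main conceptual obstacle is to guarantee that an optimal $X$ can always be taken to be ``halfway between'' two minimizers; this is exactly what Proposition~\ref{prop:halfdiff} provides, and it is what lets us decouple the problem into (i) finding the closest pair of minimizers and (ii) reconstructing~$X$. The remaining technical point is to ensure that a minimum $(s,t)$-cut in~$H$ is forced to respect the lattice structures of both $\arg\min f_1$ and $\arg\min f_2$, which is enforced by giving all edges within~$G^1$ and~$G^2$ infinite capacity so that any finite cut pointwise satisfies the closure conditions of Theorem~\ref{thm:Birkhoff}(4c).
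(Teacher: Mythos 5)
Your proposal is correct and follows essentially the same approach as the paper's proof: you build the same merged graph (infinite-capacity edges within each $G^i$, unit-capacity bidirected edges between corresponding copies of each $v\in V$), compute a minimum-cost $(s,t)$-cut, and invoke Lemma~\ref{lem:minimizers} together with Proposition~\ref{prop:halfdiff} to translate between cut value and the existence of a valid~$X$. The construction, the use of infinite capacities to enforce that the two sides of the cut are minimizers, and the reconstruction of~$X$ as a ``midpoint'' set all match the paper's argument.
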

\begin{proof}
Let our instance be $\InstSR=(V,f_1, f_2,d)$.
Using the method explained at the beginning of Section~\ref{sec:alg}, we 
construct the directed graphs $G^1=(V^1 \cup \{s,t\},E^2)$ and $G^2=(V^2\cup \{s,t\},E^2)$ for $f_1$ and $f_2$. 
We then construct a directed graph $\tilde{G} = (\tilde{V}, \tilde{E})$  by identifying $s$, as well as $t$, in $G^1$ and $G^2$, and then connecting the corresponding copies of each vertex with a bidirected edge.
That is, 
$\tilde{V}=V^1 \cup V^2 \cup \{s,t\}$ and $\tilde{E} = E^1 \cup E^2\cup E'$ where
$E' = \left\{ v^1v^2,v^2 v^1 : v\in V \right\}$.
We set $c(e)=+\infty$ for all edges~$e \in E^1 \cup E^2$, and we set $c(e)=1$ for all edges~$e \in E'$.

We next compute a minimum-cost $(s,t)$-cut~$Z$ in the graph~$\tilde{G}$ with cost function~$c$ using standard flow techniques. 
Let $\kappa_{\tilde{G}}$ denote the cut function in this graph.
We will show below that $\kappa_{\tilde{G}}(Z) \leq 2d$ if and only if the answer is ``yes''.

First suppose that $\kappa_{\tilde{G}}(Z) \leq 2d$.
Let $Y_1=\{v\in V:v^1 \in Z\}$ and $Y_2=\{v\in V:v^2 \in Z\}$.
Since $\delta_{\tilde{G}}(Z)$ has no edges in $E^1\cup E^2$, we see that $\lambda_i(\{v^i\in V^i:v \in Y_i\})=0$ for both $i=1,2$, and therefore the set~$Y_i$ is in $\mathcal{L}_i$ by Lemma~\ref{lem:minimizers}.
Since $|Y_1 \symmdiff Y_2| =  \kappa_{\tilde{G}}(Z) \leq 2d$, 
we can compute a set~$X$ such that $|X \symmdiff Y_i| \leq d$ for both $i=1,2$ by Proposition~\ref{prop:halfdiff}.



Conversely, let $X \subseteq V$ and $Y_i \in \mathcal{L}_i$ for each $i=1,2$ such that $|X \symmdiff Y_i|\leq d$. 
 Define $Z=\{s\} \cup \{v^1\in V^1:v \in Y_1\} \cup \{v^2\in V^2:v \in Y_2\}$.
 Due to Lemma~\ref{lem:minimizers} we know that
 $\lambda_i(\{v^i\in V^i:v \in Y_i\})=0$ for both $i=1,2$.  
 This implies $\kappa_{\tilde{G}}(Z)=|Y_1\symmdiff Y_2|\leq 2d$ where the inequality follows from  Proposition~\ref{prop:halfdiff}.
%
\end{proof}

\subsection{\texorpdfstring{FPT algorithm for parameter $(k,d)$}{FPT algorithm for k at least 3}}
\label{sec:alg-FPT}
We propose a
fixed-parameter tractable 
algorithm for \SR{} parameterized by $k$ and $d$; let $\InstSR=(V,f_1,\dots,f_k,d)$ denote our instance.

\begin{theorem}
\label{thm:SR-fpt}
\SR{} can be solved in FPT time when parameterized by~$(k,d)$.
\end{theorem}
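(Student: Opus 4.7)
The plan is to reduce the given \SR{} instance~$\InstSR=(V,f_1,\dots,f_k,d)$ to an instance of \MBDCF{} (equivalently, a \MBDC{} instance in which one colour class has budget zero), which by the algorithm of Kratsch et al.~\cite{KratschLMPW20} is FPT with respect to the number of colours plus the total budget. The resulting auxiliary instance will have $k$ colours, each with budget~$d$, so the parameter of the reduction is $O(k + kd) = O(kd)$, yielding FPT with respect to~$(k,d)$.

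The construction reuses the per-scenario graphs $G^i = (V^i \cup \{s,t\}, E^i)$ built at the start of Section~\ref{sec:alg}. In addition, I would introduce a fresh copy $V^0 = \{v^0 : v \in V\}$ of $V$ intended to represent the target set~$X$, and assemble a directed graph~$\tilde G$ on vertex set $V^0 \cup V^1 \cup \dots \cup V^k \cup \{s,t\}$ by identifying all source copies into a single~$s$ and all sink copies into a single~$t$. All edges within each $G^i$ would be declared \emph{forbidden}; in addition, for every $v \in V$ and $i \in [k]$ I would add an antiparallel pair of edges $v^0 v^i$ and $v^i v^0$, both of colour~$i$. The budget for every colour $i \in [k]$ would be set to~$d$.

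The correspondence between feasible $(s,t)$-cuts in~$\tilde G$ and feasible \SR{} tuples is then essentially automatic. Given a feasible cut~$Z$ of $\tilde G$, one sets $X := \{v : v^0 \in Z\}$ and $Y_i := \{v : v^i \in Z\}$ for each~$i$; since no forbidden edge lies in $\delta(Z)$, the restriction of~$Z$ to $V^i \cup \{s,t\}$ is an $(s,t)$-cut of $G^i$ with $\lambda_i(Z \cap V^i) = 0$, so $Y_i \in \mathcal{L}_i$ by Lemma~\ref{lem:minimizers}. Meanwhile, a colour-$i$ edge $v^0 v^i$ (resp.~$v^i v^0$) lies in $\delta(Z)$ exactly when $v \in X \setminus Y_i$ (resp.~$v \in Y_i \setminus X$), so the total number of colour-$i$ edges in $\delta(Z)$ equals $|X \symmdiff Y_i|$, which is at most~$d$ by feasibility. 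Conversely, any \SR{} solution $(X,Y_1,\dots,Y_k)$ lifts to the cut $Z = \{s\} \cup \{v^0 : v \in X\} \cup \bigcup_{i \in [k]} \{v^i : v \in Y_i\}$, which is manifestly feasible.

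The main obstacle is the routine but careful bookkeeping needed to certify that the reduction does not introduce spurious interactions: one must check that no edge of~$\tilde G$ crosses between distinct scenario components $V^i$ and $V^{i'}$ with $i \neq i'$ (except through the identified source and sink), so that declaring the $E^i$-edges forbidden really does force each $Y_i$ into~$\mathcal{L}_i$. The final step is to invoke the algorithm of Kratsch et al.~on~$\tilde G$; its running time is of the form $g(k,d) \cdot \mathrm{poly}(|\tilde V| + |\tilde E|)$ for some computable function~$g$, and since the compact representations of $f_1, \dots, f_k$ have polynomial size, $|\tilde V| + |\tilde E|$ is polynomial in the size of~$\InstSR$, which completes the argument.
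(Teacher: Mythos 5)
Your construction is identical to the paper's: the only notational difference is that you call the new copy of $V$ ``$V^0$'' while the paper calls it ``$V^*$''; the colour classes $\{v^0v^i, v^iv^0 : v\in V\}$, the per-colour budgets $d_i=d$, the forbidden edges $\bigcup_i E^i$, and the back-and-forth correctness argument coincide with Lemma~\ref{lem:SR-to-MBDCF-reduction}. One small imprecision is the parenthetical ``equivalently, a \MBDC{} instance in which one colour class has budget zero'' --- the paper's definition of \MBDC{} requires \emph{positive} integers $d_1,\dots,d_k$, which is why the paper instead emulates forbidden edges by replacing each with $d_i+1$ parallel copies (Proposition~\ref{prop:MBDCF}); since you ultimately invoke \MBDCF{}, this does not affect the validity of the argument.
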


To this end, we reduce our problem to the \MBDC{} problem~\cite{KratschLMPW20}, defined as follows.
We are given a directed graph $D=(V, E)$ with distinct vertices $s$ and~$t$, together with
 pairwise disjoint edge sets $A_1, \dots, A_k$, and positive integers $d_1, \dots, d_k$.
The task is to decide whether $D$ has an $(s, t)$-cut $X$ such that $|\delta (X)\cap A_i|\leq d_i$ for each $i \in [k]$.

\begin{proposition}[Kratsch et al.~\cite{KratschLMPW20}]
\label{prop:MBDC}
The \MBDC{} problem can be solved in FPT time when the parameter is $\sum_{i=1}^k d_i$.
\end{proposition}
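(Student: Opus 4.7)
The plan is to give a bounded-depth branching algorithm based on the theory of important separators, in the spirit of Marx's approach for Multiway Cut. I would first construct an auxiliary weighted graph $D'$ from $D$ by assigning weight $1$ to every edge in $\bigcup_{i=1}^{k} A_i$ and weight $0$ to every other edge. Any feasible cut $X$ satisfies $|\delta(X) \cap \bigcup_i A_i| \leq \sum_{i=1}^k d_i = d$, so the weighted minimum $(s,t)$-cut in $D'$ must be at most $d$; if a standard max-flow computation shows this minimum exceeds $d$, we immediately report that no feasible solution exists.

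The recursive branching works as follows. At each node of the search tree, compute the $s$-closest minimum-weight $(s,t)$-cut in the current graph, with edge set $C$. For each weight-$1$ edge $e \in C$, say $e \in A_i$, branch into two cases: in the \emph{include} branch, decrement $d_i$ by one and delete $e$ from the graph (committing $e$ to the solution cut); in the \emph{exclude} branch, raise the weight of $e$ to $+\infty$ so that no subsequent cut can contain $e$. A branch is abandoned as soon as some $d_i$ drops below zero, and succeeds as soon as an $(s,t)$-cut in the current graph can be read off respecting all remaining budgets.

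For the analysis, every include branch strictly decreases $\sum_i d_i$ by one, while every exclude branch strictly increases the weighted minimum $(s,t)$-cut of the current instance. This monotonicity relies on a standard submodularity argument: since $e$ lies on the $s$-closest minimum cut $C$, forbidding $e$ destroys every minimum cut of the same weight and pushes the next minimum weight strictly upward. Both the remaining budget and the weighted minimum cut are bounded above by $d$ throughout, so any root-to-leaf path in the search tree has length $O(d)$. Combined with a branching factor of at most $|C| \leq d$, this yields a search tree of size $d^{O(d)}$, and hence an FPT algorithm with parameter $d = \sum_{i=1}^k d_i$.

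The main obstacle will be correctly handling the weight-$0$ edges during branching: they are invisible to the budget but still shape the cut structure, and they can cause the $s$-closest minimum cut to be non-unique or to extend through large regions of the graph. A clean way around this is to contract all weight-$0$ edges before each minimum-cut computation, treating their endpoints as identified so long as no forbidden edge blocks the contraction. With this preprocessing the closest-minimum-cut argument applies verbatim, and the monotonicity needed for the depth bound is preserved.
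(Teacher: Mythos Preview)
The paper does not prove this proposition; it is quoted from Kratsch et al.\ and used as a black box, so there is no in-paper argument to compare against.

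Your sketch has a genuine gap in the analysis of the exclude branch. You assert that ``since $e$ lies on the $s$-closest minimum cut $C$, forbidding $e$ destroys every minimum cut of the same weight and pushes the next minimum weight strictly upward.'' This is false: membership in the $s$-closest minimum cut does not imply membership in \emph{every} minimum cut. For a concrete counterexample, take the directed graph on $\{s,a,b,t\}$ with weight-$1$ edges $sa, at, sb, bt$. The $s$-closest minimum cut is $\{s\}$, with $\delta(\{s\})=\{sa,sb\}$ of total weight~$2$. After raising the weight of~$sa$ to~$+\infty$, the cut $\{s,a\}$ with $\delta(\{s,a\})=\{at,sb\}$ still has weight~$2$, so the minimum cut value does not increase. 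Consequently your depth bound collapses: a root-to-leaf path can take exclude branches without the minimum cut value moving, while the budget is untouched on those branches, and nothing in your measure forces termination in $O(d)$ steps.

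This is not a minor patch. Correct FPT arguments for directed cut problems of this type (important separators in the sense of Marx and Chen--Liu--Lu, or branching schemes that additionally track the growth of the $s$-reachable side of the closest minimum cut) use a potential that provably drops in both branches; the ``closest cut plus forbid'' step you describe is the starting intuition behind those schemes, but the quantity that decreases is not the minimum cut value alone. If you want to reconstruct a proof rather than cite one, you will need one of those ingredients explicitly.
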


In fact, we will need to use \emph{forbidden} edges,  so let us define the \MBDCF{} problem as follows. Given an instance $I_{\textup{MBC}}$ 
of \MBDC{} and a set~$F$ of forbidden edges, find a solution $X$ for~$I_{\textup{MBC}}$ such that $\delta (X)$ is disjoint from~$F$.
It is straightforward to solve this problem using the results by Kratsch et al.~\cite{KratschLMPW20}, after replacing each forbidden edge with an appropriate number of parallel edges. Hence, we get the following.

\begin{proposition}
\label{prop:MBDCF}
The \MBDCF{} problem can be solved in FPT time when the parameter is $\sum_{i=1}^k d_i$.
\end{proposition}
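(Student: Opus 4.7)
The plan is to give a parameter-preserving reduction from \MBDCF{} to \MBDC{} and then invoke Proposition~\ref{prop:MBDC}.

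Given an instance of \MBDCF{} consisting of a directed graph $D=(V,E)$ with $s,t \in V$, pairwise disjoint edge sets $A_1, \dots, A_k \subseteq E$ with budgets $d_1,\dots,d_k$, and a forbidden edge set $F \subseteq E$, I would construct an instance of \MBDC{} as follows. For each forbidden edge $e = uv \in F$, add to~$D$ a fresh parallel copy~$e'$ directed from $u$ to~$v$; let $D'$ denote the resulting graph. Keep $A_1,\dots,A_k$ with their original budgets, and introduce a new budget set $A_{k+1} = \{e' : e \in F\}$ with budget $d_{k+1} = 0$.

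To verify correctness, note first that $A_{k+1}$ consists entirely of newly introduced edges, so $A_1, \dots, A_{k+1}$ are pairwise disjoint and the resulting instance is well-formed. Since $(s,t)$-cuts are vertex sets, the candidate cuts in~$D$ and~$D'$ are in natural correspondence. For any such cut~$X$, an edge $e \in F$ lies in $\delta_D(X)$ if and only if its parallel copy $e'$ lies in $\delta_{D'}(X)$; hence the new constraint $|\delta_{D'}(X) \cap A_{k+1}| \leq 0$ precisely captures the requirement $\delta_D(X) \cap F = \emptyset$, while the constraints on $A_1,\dots,A_k$ are unaffected by the addition of the new parallel edges. Thus~$X$ is a solution to the original \MBDCF{} instance if and only if~$X$ is a solution to the constructed \MBDC{} instance.

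Finally, the parameter of the new instance equals $\sum_{i=1}^{k+1} d_i = \sum_{i=1}^k d_i$, matching the original parameter, and the reduction is clearly polynomial-time. Applying Proposition~\ref{prop:MBDC} then yields the desired FPT algorithm. The only point that requires care is choosing the reduction so as to simultaneously (i) preserve the parameter value, (ii) keep the budget sets disjoint, and (iii) enforce that no forbidden edge is cut; placing a single fresh parallel copy of each forbidden edge into a new zero-budget category accomplishes all three at once.
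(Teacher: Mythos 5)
Your reduction is close in spirit to the paper's, but it has a genuine gap: the \MBDC{} problem is stated with \emph{positive} integers $d_1,\dots,d_k$, so the instance you construct with $d_{k+1}=0$ is not a valid \MBDC{} instance, and Proposition~\ref{prop:MBDC} cannot be invoked on it directly. The paper avoids this by setting $d_{k+1}=1$ and, for each $i\in[k+1]$, replacing every forbidden edge $f\in F\cap A_i$ with $d_i+1$ parallel copies (all belonging to $A_i$); then any cut containing a copy of a forbidden edge must contain all of them, overshooting the budget for that set. This costs only $+1$ in the parameter, which is perfectly acceptable for an FPT reduction. The fix to your argument is correspondingly small: set $d_{k+1}=1$ and put \emph{two} fresh parallel copies of each forbidden edge into $A_{k+1}$ (rather than one copy with budget $0$). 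You should also note, as the paper does, that parallel edges can be avoided by replacing each copy with a length-two path, in case the underlying model disallows multi-edges. With those adjustments your argument is sound and essentially matches the paper's; the one thing your version does not buy you --- exact parameter preservation rather than an increase by one --- was the very feature that introduced the bug.
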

\begin{proof}
Let $I=(D,s,t,A_1,\dots,A_k,d_1,\dots,d_k)$ be our instance with a set~$F$ of forbidden edges in~$D$.
First, we create a set~$A_{k+1}=F \setminus \bigcup_{i=1}^k A_i$ and let $d_{k+1}=1$.
Next, for each $i \in [k+1]$ and each forbidden edge~$f \in F \cap A_i$, we replace~$f$ with $d_i+1$ parallel copies of~$f$ in~$D$.\footnote{To avoid using parallel edges, one can alternatively replace each edge $f=uv \in F \cap A_i$ with a set of $d_i+1$ paths of length two, i.e., with vertices $p_f^1,p_f^2,\dots,p_f^{d_i+1}$ and edges $\{u p_f^j, p_f^j v: j \in [d_i+1]\}$.} 
Let $D'$ be the resulting directed graph, 
and consider the instance $I'=(D',s,t,A_1,\dots,A_{k+1},d_1,\dots,d_{k+1})$ of \MBDC{}.
Clearly, if a set~$X$ of vertices is a solution for~$I'$, then $\delta(X)$ cannot contain a forbidden edge or a copy of a forbidden edge, as then it would have to contain \emph{all} copies of that forbidden edge, and thus would violate $|\delta(X) \cap A_i| \leq d_i$ for some $i \in [k+1]$. 
Thus, a set~$X$ of vertices in~$D$ is a solution for our original instance~$(I,F)$ of \MBDCF{} if and only if $X$ is a solution for the instance~$I'$ of \MBDC{}. 
Note that this construction increases the parameter by exactly one.
Hence, the algorithm provided by Proposition~\ref{prop:MBDC} can be used to solve \MBDCF{} in FPT time.
\end{proof}

\paragraph*{Reduction to 
\MBDCF{}}
\label{sec:reduce-to-budgetedcut}

Compute the graph~$G^i$ for each $i \in [k]$, as described at the beginning of Section~\ref{sec:alg}.
We construct a large directed graph $\tilde{G}=(\tilde{
V}, \tilde{E})$ as follows. 
We identify all vertices~$s$~(and $t$, respectively) in the graphs~$G^i$ into a single vertex~$s$ (and $t$, respectively).
We further prepare another copy of $V$, which is denoted by $V^\ast = \{v^\ast : v\in V\}$.
Thus the vertex set of $\tilde{G}$ is defined by 
$
\tilde{V}=\bigcup_{i=1}^k V^i \cup V^\ast \cup \{s, t\}.
$
The edge set of $\tilde{G}$ consists of $E^i$ and bidirected edges 
connecting $v^\ast$ and the copy~$v^i$ of~$v$ in~$G^i$, for each $i \in [k]$.
That is, 
\[
\tilde{E}=\bigcup_{i=1}^k \left(E^i\cup A^i\right) \text{\quad where \ \ } A^i = \left\{v^\ast v^i, v^i v^\ast : v\in V\right\}.
\]
We also set $d_i = d$ for each $i \in [k]$.
Consider the instance $I_{\textup{MBC}}=(\tilde{G}, s, t, \{A^i\}_{i=1}^k, \{d_i\}_{i=1}^k)$ of \textsc{multi-budgeted directed cut} with $F=\bigcup_{i=1}^k E^i$ as forbidden edges; note that its parameter is $k\cdot d$.
Theorem~\ref{thm:SR-fpt} immediately follows from Proposition~\ref{prop:MBDCF} and Lemma~\ref{lem:SR-to-MBDCF-reduction} below.


\begin{lemma}
\label{lem:SR-to-MBDCF-reduction}
There exists a solution for~$\InstSR$ if and only if there exists a solution for the instance~$(I_{\textup{MBC}},F)$ of \MBDCF{}.
\end{lemma}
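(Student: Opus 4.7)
The plan is to establish a natural correspondence between candidate solutions of $\InstSR$ and $(s,t)$-cuts in $\tilde{G}$ that avoid $F$: given any set $Z \subseteq \tilde{V}$ with $s \in Z$ and $t \notin Z$, I would let $X=\{v \in V : v^\ast \in Z\}$ and $Y_i=\{v \in V : v^i \in Z\}$ for each $i \in [k]$, and conversely, from candidates $X, Y_1, \dots, Y_k$ for $\InstSR$, I would build $Z=\{s\} \cup \{v^\ast : v \in X\} \cup \bigcup_{i=1}^k \{v^i : v \in Y_i\}$. Both directions of the lemma then amount to verifying that, under this correspondence, the condition ``$Y_i \in \mathcal{L}_i$ and $|X \symmdiff Y_i| \leq d$ for each $i$'' translates exactly to ``$\delta_{\tilde{G}}(Z) \cap F = \emptyset$ and $|\delta_{\tilde{G}}(Z) \cap A^i| \leq d$ for each $i$.''

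For the forward direction, starting from $X$ and minimizers $Y_i \in \mathcal{L}_i$ with $|X \symmdiff Y_i| \leq d$, I would construct $Z$ as above. To see that no forbidden edge is cut, I would restrict attention to a single $G^i$: the set $W^i := (Z \cap V^i) \cup \{s\}$ equals $\{v^i : v \in Y_i\} \cup \{s\}$, and the edges of $E^i$ leaving $Z$ in $\tilde{G}$ are exactly the edges of $G^i$ leaving $W^i$. Since $Y_i \in \mathcal{L}_i$, Lemma~\ref{lem:minimizers} yields $\lambda_i(Z \cap V^i)=0$, so $\delta_{\tilde{G}}(Z) \cap E^i = \emptyset$. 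Thus $\delta_{\tilde{G}}(Z) \cap F = \emptyset$. For the budget, the only edges of $A^i$ that leave $Z$ are those between $v^\ast$ and $v^i$ with exactly one endpoint in $Z$; since the pair $\{v^\ast v^i, v^i v^\ast\}$ is bidirected, precisely one of them is outgoing from $Z$ whenever $v \in X \symmdiff Y_i$, giving $|\delta_{\tilde{G}}(Z) \cap A^i| = |X \symmdiff Y_i| \leq d$.

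For the reverse direction, given a solution $Z$ of $(I_{\textup{MBC}}, F)$, I would define $X$ and $Y_i$ as above and reverse the same computations. The condition $\delta_{\tilde{G}}(Z) \cap E^i = \emptyset$ (which is immediate from $\delta_{\tilde{G}}(Z) \cap F = \emptyset$) together with Lemma~\ref{lem:minimizers} guarantees that each $Y_i$ lies in $\mathcal{L}_i$, and the same edge-by-edge accounting on $A^i$ gives $|X \symmdiff Y_i| = |\delta_{\tilde{G}}(Z) \cap A^i| \leq d$.

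The main point of care, rather than a deep obstacle, is the interface between the bidirected ``gadget'' edges in $A^i$ and the fact that $\delta_{\tilde{G}}(\cdot)$ counts only outgoing edges in the directed graph $\tilde{G}$: one must note that the two anti-parallel copies $v^\ast v^i$ and $v^i v^\ast$ contribute exactly one edge to $\delta_{\tilde{G}}(Z)$ per element of $X \symmdiff Y_i$, so that the budget $d$ on $A^i$ correctly encodes the symmetric-difference bound $d$ in $\InstSR$, without any spurious factor of two. A secondary sanity check is that the special bidirected edges from $s$ to $U_0^i$ and from $t$ to $U_\infty^i$ in $G^i$ are never cut, because $U_0^i \subseteq Y_i$ and $U_\infty^i \cap Y_i = \emptyset$ for any $Y_i \in \mathcal{L}_i$ by Theorem~\ref{thm:Birkhoff}; this ensures the construction is consistent.
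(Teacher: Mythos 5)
Your proposal is correct and follows essentially the same approach as the paper's proof: you set up the same correspondence $Z = \{s\} \cup \{v^\ast : v \in X\} \cup \bigcup_i \{v^i : v \in Y_i\}$, invoke Lemma~\ref{lem:minimizers} to translate ``$Y_i \in \mathcal{L}_i$'' into ``no edge of $E^i$ is cut,'' and observe that for each $v \in X \symmdiff Y_i$ exactly one of the two anti-parallel $A^i$-edges leaves $Z$, giving $|\delta_{\tilde G}(Z) \cap A^i| = |X \symmdiff Y_i|$. The only cosmetic difference is that you present the ``RSM~$\Rightarrow$~MBDCF'' direction first while the paper starts with the other one; your added sanity checks (no spurious factor of two; the $s$--$U^i_0$ and $t$--$U^i_\infty$ edges are never cut) are sound, though the paper leaves them implicit.
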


\begin{proof}
Suppose that $(I_{\textup{MBC}},F)$ admits a solution.
That is, there exists a subset $X$ of $\tilde{V}$ containing~$s$ but not~$t$ such that $\delta_{\tilde{G}}(X)$ is disjoint from~$F$ and satisfies $|\delta_{\tilde{G}}(X)\cap A^i| \leq d_i$ for each $i \in [k]$. 
Define $Y^i =X\cap V^i$ for $i=1,\dots, k$.
Observe that all edges within~$G^i$ leaving $Y^i \cup \{s\}$ also leave $X$ in $\tilde{G}$, 
since $s \in X$ but $t \notin X$.
Since all edges in~$E^i$ are forbidden edges, we see that $\lambda_i (Y^i)=0$.
Let $Y_i=\{v\in V:v^i \in Y^i\}$, so that $Y^i$ contains the copy of each vertex of~$Y_i$ in~$G^i$.
Then $Y_i$ is in~$\mathcal{L}_i$ by Lemma~\ref{lem:minimizers}.

Define the subset $X^\ast=\{v:v^* \in X\}$ of~$V$.
Observe that 
\[
\delta_{\tilde{G}}(X) \cap A^i=\{v^\ast v^i: v \in X^\ast, v \notin Y_i\} \cup \{v^i v^*: v \notin X^\ast, v \in Y_i\}. 
\]
Therefore, we get that $|X^\ast \symmdiff Y_i|= |\delta_{\tilde{G}}(X) \cap A^i| \leq d_i=d$ for each $i \in [k]$ as required, so $X^\ast$ is a solution to our instance~$\InstSR$ of \SR{}.

Conversely, 
let $X \subseteq V$ and $Y_i \in \mathcal{L}_i$ for each $i \in [k]$ such that $|X \symmdiff Y_i| \leq d$.
 Define $X^\ast =\{v^\ast\in V^\ast : v\in X\}$ and $Y^i = \{v^i\in V^i: v\in Y_i\}$.
 Then the set $\tilde{X}=\{s \} \cup X^\ast \cup \bigcup_{i=1}^k Y^i$ is an $(s,t)$-cut of $\tilde{G}$ such that
\begin{align*}
\delta_{\tilde{G}} (\tilde{X})\cap A^i
&=
\{v^* v^i:v^*\in X^\ast, v\not\in Y^i\} \cup
\{v^i v^*:v^*\not\in X^\ast, v^i\in Y^i\} 
\\ 
&=
\{v^* v^i:v\in X, v\not\in Y_i\} \cup
\{v^i v^*:v\not\in X, v\in Y_i\} = X \symmdiff Y_i.
\end{align*}
Hence we obtain
 $|\delta_{\tilde{G}}(\tilde{X})\cap A^i| =|X\symmdiff Y_i| \leq d=d_i$ for each $i \in [k]$.
Since $Y_i$ is in~$\mathcal{L}_i$,  by Lemma~\ref{lem:minimizers} we know $\lambda_i (Y^i)=0$ for each $i \in [k]$.
Thus $\delta_{\tilde{G}}(\tilde{X})$ is disjoint from the set~$F$ of forbidden edges, and therefore $\tilde{X}$ is indeed a solution to our instance~$(I_{\textup{MBC}},F)$ of \MBDCF{}. 
\end{proof}

\section{$\NP$-hardness for constant values of~$k$ or~$d$}
\label{sec:hardness}

In Section~\ref{sec:NPhard-d1} we show that \SR{} is $\NP$-hard if $d$ is a constant at least~$1$, 
and in Section~\ref{sec:NPhard-k3} we prove that it is $\NP$-hard if $k$ is a constant at least~$3$.

\subsection{\texorpdfstring{$\NP$-hardness for a constant $d\geq 1$}{NP-hardness for d=1}}
\label{sec:NPhard-d1}

In this section, we prove that \SR{} is $\NP$-hard for each constant~$d \geq 1$. 
We first introduce a separation problem that we will use as an intermediary problem in our hardness proofs. Given a subset $X \subseteq V$ of some universe~$V$ that contains two distinguished elements, $s$ and~$t$, and a family~$\Pi$ of pairwise disjoint subsets of~$V$, we define the 
\emph{distance} of the set~$X$ from~$\Pi$
as $\sum_{S \in \Pi} \dist(X,S)$ where
\[ 
\dist(X,S)= \left\{
 \begin{array}{ll}
     \min\{|S \setminus X|,|S \cap X|\} & \textrm{ if } s \notin S, t \notin S; \\
    |S \setminus X| & \textrm{ if } s \in S,t \notin S;\\
    |S \cap X| & \textrm{ if } s \notin S,t \in S;\\
    +\infty  & \textrm{ if } s \in S,t \in S.\\
 \end{array}
\right.
\]
 Given a collection of set families~$\Pi_1,\dots,\Pi_k$, the goal is to find a set~$X \subseteq V$ that separates~$s$ from~$t$ in the sense that $s \in X$ but $t \notin X$, and subject to this constraint, minimizes the maximum distance of~$X$ from the given set families. Formally,  the problem is:

\begin{center}
\fbox{ 
\parbox{13.0cm}{
\begin{tabular}{l}\centralset{}:  \end{tabular} \\
\begin{tabular}{p{1cm}p{11.0cm}}
Input: & A finite set~$V$ with two elements $s,t \in V$, set families $\Pi_1,  \dots, \Pi_k$ where each~$\Pi_i$ is a collection of pairwise disjoint subsets of~$V$, and an integer~$d \in \N$. \\
Task: & Find a set $X \subseteq V$ containing~$s$ but not~$t$ such that for each $i \in [k]$ 
\begin{equation}
\label{eqn:centralset-def}
    \sum_{S\in \Pi_i} \dist(X,S) \leq d,
\end{equation}
or output ``No'' if no such set~$X$ exists.
\end{tabular}
}}
\end{center}


Given an instance $(V,s,t,\Pi_1\dots,\Pi_k,d)$ of \centralset{}, the reduction proving Lemma~\ref{lem:CStoSR-reduction} below constructs a graph~$G_i$ over~$V$ for each~$i \in [k]$ in which each set in~$\Pi_i$ forms a clique, and defines a submodular function~$f_i$ based on the cut function of~$G_i$.

\begin{lemma}
\label{lem:CStoSR-reduction}
\centralset{} can be reduced to \SR{} in polynomial time
via a reduction 
that preserves the values of both~$k$ and~$d$.
\end{lemma}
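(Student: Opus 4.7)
The plan is to transform a \centralset{} instance $\InstCS=(V_{\mathrm{CS}},s,t,\Pi_1,\dots,\Pi_k,d)$ into an \SR{} instance $\InstSR$ whose $i$-th submodular function has, as its set of minimizers, precisely the ``feasible'' configurations for $\Pi_i$, and in which the symmetric difference $|X \symmdiff Y|$ achievable for $Y \in \arg\min f_i$ mimics the distance $\sum_{S \in \Pi_i} \dist(X \cup \{s\}, S)$. If some $S \in \Pi_i$ contains both $s$ and $t$, then $\InstCS$ is trivially a NO instance, and I output a canonical NO instance of \SR{} with the same $k$ and $d$ (two cut functions whose sole minimizers are $\emptyset$ and a set of size $2d+1$, respectively, padded with constant functions). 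From now on I assume no set in any $\Pi_i$ contains both $s$ and $t$.

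I set the universe of $\InstSR$ to $V := V_{\mathrm{CS}} \setminus \{s,t\}$ and, for each $i \in [k]$, construct a directed graph $G_i = (V_{\mathrm{CS}},E_i)$ by turning each $S \in \Pi_i$ into a bidirected clique of unit-cost arcs; then I set $f_i := \lambda_{G_i}$, so that $f_i(X) = \kappa_{G_i}(X \cup \{s\})$ for every $X \subseteq V$. Submodularity is then immediate from the preliminaries. The contribution of the clique on $S$ to $\kappa_{G_i}(X\cup\{s\})$ equals $|S \cap (X \cup \{s\})|\cdot |S \setminus (X \cup \{s\})|$, which vanishes exactly when $S$ is entirely inside or entirely outside $X \cup \{s\}$. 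Combined with $s \in X\cup\{s\}$ and $t \notin X \cup \{s\}$, this yields an explicit description of $\arg\min f_i$: the family of those $X \subseteq V$ such that, for every $S \in \Pi_i$, we have $S \subseteq X$ or $S \cap X = \emptyset$ when $s, t \notin S$; $S \setminus \{s\} \subseteq X$ when $s \in S$; and $S \cap X = \emptyset$ when $t \in S$.

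Since the sets in $\Pi_i$ are pairwise disjoint and vertices outside $\bigcup \Pi_i$ carry no constraint, for any candidate $X^* \subseteq V$ the nearest $Y \in \arg\min f_i$ can be obtained by adjusting $X^*$ independently on each $S \in \Pi_i$ (and leaving the remaining elements untouched). A direct case analysis on whether $s$ or $t$ lies in $S$ will then show that the cost of the cheapest adjustment on $S$ is exactly $\dist(X^* \cup \{s\}, S)$, giving
\[
\min_{Y \in \arg\min f_i} |X^* \symmdiff Y| = \sum_{S \in \Pi_i} \dist(X^* \cup \{s\}, S).
\]
The bijection $X^* \leftrightarrow X^* \cup \{s\}$ between candidate \SR{} solutions and candidate \centralset{} solutions therefore preserves both feasibility and the bound $d$; since the construction is polynomial and keeps $k$ and $d$ unchanged, this yields the desired reduction.

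The main obstacle is the per-set distance computation: I need to verify, case by case, that each of the three non-degenerate cases of $\dist(\cdot, S)$ matches the cheapest way to alter $X^*$ inside $S$ so that the $S$-clique contributes zero to $\kappa_{G_i}$, being careful about the sets touching $s$ or $t$, for which only one ``side'' of the clique is available. The degenerate case $s, t \in S$ is the other subtlety, handled by the preprocessing step above.
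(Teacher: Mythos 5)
Your construction is the same as the paper's: bidirected cliques on the sets of each $\Pi_i$, with $f_i$ the $(s,t)$-restricted cut function (unit vs.\ $+\infty$ edge costs is immaterial since only $\arg\min f_i$ matters), and your sketched case analysis establishing $\min_{Y \in \arg\min f_i} |X^* \symmdiff Y| = \sum_{S \in \Pi_i}\dist(X^* \cup \{s\},S)$ is exactly the computation the paper carries out in its two directions. The only wrinkle is that your canonical NO instance for the degenerate $s,t\in S$ case requires $k\geq 2$; the paper sidesteps this by having the reducing algorithm answer ``No'' directly rather than emitting an instance, which is harmless since this case is detectable in polynomial time.
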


\begin{proof}
    Let us be given an instance $I_{\textup{RSep}}=(V,s,t,\Pi_1,\dots,\Pi_k, d)$ of \centralset{}. Clearly, we may assume that no set family~$\Pi_i$ contains a set~$S$ that contains both~$s$ and~$t$, as that would imply $\dist(X,S)=+\infty$ for all sets~$X \subseteq V$, in which case we can clearly answer ``No''.
    
    Let $\tilde{V}=V \setminus \{s,t\}$.
    For each $i \in [k]$ we construct a directed graph $G_i=(V, E_i)$ from $\Pi_i$ by setting
    \[
    E_i = 
    \{uv: \exists S\in \Pi_i, u, v\in S\}.
    \]
    We set the cost of every edge in~$E_i$ to be $+\infty$. 
    Observe that an $(s,t)$-cut $X \subseteq V$  is
    a minimum $(s,t)$-cut in~$G_i$ if and only if  for each $S \in \Pi_i$ either $X$ contains~$S$, or is disjoint from~$S$. 
    Let $\kappa_i$ denote the (weighted) cut function for~$G_i$, and define $f_i:2^{\tilde{V}} \rightarrow \R$ such that $f_i(X)=\kappa_i(X \cup\{s\})$ for each $X \subseteq \tilde{V}$.
    Then $f_i$ is submodular.

\smallskip
    We claim that for each set~$X \subseteq \tilde{V}$, the set~$X \cup \{s\}$ is a solution for our instance~$I_{\textup{RSep}}$ of \centralset{} if and only if $X$ is a solution for the instance $I_{\textup{RSubMin}}=(V,f_1,\dots,f_k,d)$ of \SR{}.

\smallskip    
    To see this, first assume that $X\subseteq V$ contains~$s$ but not~$t$, and fulfills (\ref{eqn:centralset-def}) for each $i \in [k]$.     For an element $v \in V$, let $\Pi_i(v)$ be the unique set in $\Pi_i$ containing~$v$, if there is such a set, otherwise set $\Pi_i(v)=\emptyset$. By our assumption that no set in~$\Pi_i$ contains both~$s$ and~$t$, we know that $\Pi_i(s) \neq \Pi_i(t)$ unless both of them are empty.
    Define 
    \begin{align*}    
    Y_i &= X \cup \Pi_i(s) \setminus \Pi_i(t)
    \setminus 
    \left( \bigcup \left\{S \in \Pi_i: |S\cap X| < |S \setminus X|, s \notin S\right\}\right)
    \\
    & \qquad \cup 
    \left( \bigcup \left\{S \in \Pi_i: |S \cap X| \geq |S \setminus X|,t \notin S\right\}\right) .
    \end{align*}
    Notice that when constructing $Y_i$, each set in~$\Pi_i$ is either fully added to~$Y_i$, or fully removed from it. Moreover, it is an $(s,t)$-cut, since no set containing~$s$ is ever removed from~$X$, and similarly, no set containing~$t$ is ever added to~$X$.
    Therefore, $Y_i$ is a minimum $(s,t)$-cut in~$G_i$ by the observations above. 
    
    Since both $X$ and~$Y_i$ are $(s,t)$-cuts, let us define $\tilde{X}=X \setminus \{s\}$ and $\tilde{Y}_i=Y_i \setminus \{s\}$; since $Y_i$ is a minimum $(s,t)$-cut, we have $\tilde{Y}_i \in \arg\min f_i$.
    Moreover, 
    \begin{align*}
    |\tilde{X} & \symmdiff \tilde{Y}_i|=|X \symmdiff Y_i|=|X \setminus Y_i|+|Y_i \setminus X|  \\
    &=  |\Pi_i(t) \cap X| 
    + \left|\bigcup \{S \cap X : S \in \Pi_i, |S\cap X| < |S \setminus X|,s \notin S,t \notin S\}\right| \\
    & \quad \phantom{a} +|\Pi_i(s) \setminus X| + 
    \left|\bigcup \{S \setminus X: S \in \Pi_i, |S \cap X| \geq |S \setminus X|,s \notin S, t \notin S\}\right| \\
    &= \dist(X,\Pi_i(s))+\dist(X,\Pi_i(t))
    +\sum_{S \in \Pi_i,s \notin S,t \notin S} \min \{|S \setminus X|,|S \cap X|\} \\
    &= \sum_{S \in \Pi_i} \dist(X,S) \leq d
    \end{align*}
    where the last inequality holds because $X$ satisfies~(\ref{eqn:centralset-def}).
    Hence, $\tilde{X}$ is indeed a solution for~$I_{\textup{RSubMin}}$.

    \smallskip
    For the other direction, let $X \subseteq \tilde{V}$ be a solution for~$I_{\textup{RSubMin}}$. Then for each $i \in [k]$ there exists a set~$Y_i \in \arg\min f_i$ that satisfies $|X \symmdiff Y_i| \leq d$; define $Y_i$ among all such sets so that $|X \symmdiff Y_i|$ is minimized. Recall that $Y_i \in \arg\min f_i$ means that $Y_i \cup \{s\}$ is a minimum $(s,t)$-cut in~$G_i$; let us use the notation  $Y'_i=Y_i \cup \{s\}$.
    Note that if some element~$v \in \tilde{V}$ is not contained in any set of~$\Pi_i$, then adding it to, or removing it from, a minimum $(s,t)$-cut in~$G_i$ also yields a minimum $(s,t)$-cut.
    This implies that $Y_i \cap (V \setminus \bigcup_{S \in \Pi_i}S )=X \cap (V \setminus \bigcup \Pi_i)$,
    as otherwise we could replace $Y_i \cap (V \setminus \bigcup_{S \in \Pi_i}S ))$ with $X \cap (V \setminus \bigcup_{S \in \Pi_i}S ))$ in~$Y_i$ to obtain another minimizer for~$f_i$ that is closer to~$X$ than~$Y_i$; we refer to this as fact~$(f1)$.
    
    Recall also that  since $Y'_i$ is a minimum $(s,t)$-cut in~$G_i$, for each $S \in \Pi_i$ it either contains~$S$ or is disjoint from it. 
    We refer to this observation as fact~$(f2)$.

    Let us define $X'=X \cup \{s\}$.
    We now prove that $X'$ is a solution for the instance~$I_{\textup{RSep}}$.
    Applying facts ($f1$) and~($f2$), for each $i \in [k]$ we get 
    {
    \allowdisplaybreaks
    \begin{align*}
        d &\geq 
        |X \symmdiff Y_i| 
        \stackrel{(f1)}{=} \left|\{v: v \in S, S \in \Pi_i, v \in X \symmdiff Y_i\}\right| \\
        & \!\!
        \stackrel{(f2)}{=} 
        \!\begin{multlined}[t][11.85cm]
        \left|\{v: v \in S, S \in \Pi_i, S \subseteq Y'_i, v \in X \symmdiff Y_i\}\right| 
        \\
        + 
        \left|\{v: v \in S, S \in \Pi_i, S \cap Y'_i=\emptyset, v \in X \symmdiff Y_i\}\right| 
        \end{multlined}
        \\
        &= 
        \!\begin{multlined}[t][12.0cm]
        \left|\{v: v \in S, S \in \Pi_i, S \subseteq Y'_i, v \in (S\cap Y_i) \setminus X \}\right| 
        \\
        + \left|\{v: v \in S, S \in \Pi_i, S \cap Y'_i=\emptyset, v \in S \cap X\}\right|     
        \end{multlined}
        \\
        &= 
        \begin{multlined}[t][12.0cm]
        \left|(\Pi_i(s) \setminus \{s\}) \setminus X \right|  
        +\left|\bigcup \{S \setminus X : s\notin S \in \Pi_i, S \subseteq Y_i\}\right| 
        \\
         + \left|\bigcup \{S \cap X : S \in \Pi_i, S \cap Y'_i=\emptyset\}\right|  
         \end{multlined}
        \\
        &= 
        \!\begin{multlined}[t][12.0cm]
        \left|\Pi_i(s)  \setminus X' \right| 
        + \left|\bigcup \{S \setminus X' : s \notin S \in \Pi_i, S \subseteq Y_i\}\right|
        \\
        + \left|\bigcup \{S \cap X' : S \in \Pi_i, S \cap Y'_i=\emptyset\}\right|     
        \end{multlined}
        \\
        &\geq 
        \dist(X',\Pi_i(s))+\sum_{S \in \Pi_i,s \notin S,t \notin S} \min\{|S \setminus X|,|S \cap X|\}+\dist(X',\Pi_i(t))
        \\
        & = \sum_{S \in \Pi_i} \dist(X',S).
    \end{align*}
}
    Hence, $X'$ is indeed a solution for the instance~$I_{\textup{RSep}}$ of \centralset{}, as promised.    
    This proves the correctness of our reduction.
\end{proof}

\begin{remark}
\label{rem:U}
Note the reduction presented in Lemma~\ref{lem:CStoSR-reduction} has the following property: given an instance~$I=(V,s,t,\Pi_1,\dots,\Pi_k,d)$ of
\centralset{} where $\Pi_i=\{K,V \setminus K\}$ for some set~$K \subseteq V$ containing~$s$ but not~$t$, then \new{for} the reduced instance
$I'=(V \setminus \{s,t\},f_1,\dots,f_k,d)$ of \SR{} it \new{holds that} $|\arg \min f_i|=1$.
\end{remark}

To prove that \SR{} is $\NP$-hard for each constant $d \geq 1$, we first prove the $\NP$-hardness of \centralset{} in the case $d=1$, and then extend this result to hold for any constant~$d \geq 1$. 

For the case $d=1$, we present a reduction from the \textsc{1-in-3 SAT} problem.
In this problem, we are given a set~$V$ of variables and a set $\mathcal{C}$ of clauses, with each clause $C \in \mathcal{C}$ containing exactly three distinct literals;  here, a \emph{literal} is either a variable~$v \in V$ or its negation~$\ol{v}$.
Given a truth assignment $\phi:V \rightarrow \{\tt,\ff\}$, 
we automatically extend it to the set $\ol{V}=\{\ol{v}:v \in V\}$ of negative literals by setting $\phi(\ol{v})=\tt$ if and only if $\phi(v)=\ff$.
We say that a truth assignment is \emph{valid}, if it maps \emph{exactly} one literal in each clause to \tt.
The task in the \textsc{1-in-3 SAT} problem is to decide whether a valid truth assignment exists.
This problem is $\NP$-complete~\cite{Schaefer78}.

\begin{theorem}
\label{thm:CS-NPhard-d-equal1}
\centralset{} is $\NP$-hard even when $d=1$.
\end{theorem}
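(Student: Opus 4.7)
The plan is to reduce from \textsc{1-in-3 SAT}, shown $\NP$-complete in~\cite{Schaefer78}. Given an instance with variable set~$W$ and clause set~$\mathcal{C}$, we construct a \centralset{} instance with $d=1$ on the ground set $V=\{s,t\}\cup\{w,\ol{w}:w\in W\}\cup\bigcup_{C\in\mathcal{C}}\{u_1^C,u_2^C,u_3^C,u_4^C,z_C\}$. For each variable~$w$ the two families $\{\{s,w,\ol{w}\}\}$ and $\{\{t,w,\ol{w}\}\}$ jointly force exactly one of $w,\ol{w}$ into~$X$, encoding the truth value of~$w$.

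The hard part is the clause gadget: ``at most one literal of~$C$ in~$X$'' is easily enforced by the single family $\{\{t,\ell_1,\ell_2,\ell_3\}\}$, but the symmetric constraint ``at least one'' cannot be realised by any single family under $d=1$, since $\{\{s,\ell_1,\ell_2,\ell_3\}\}$ would instead force ``at least two''. The key observation is that any single $4$-element family $\{S\}$ disjoint from $\{s,t\}$ satisfies $\dist(X,S)\leq 1$ if and only if $|X\cap S|\neq 2$, which is a non-bijunctive Boolean constraint we can exploit via auxiliary elements.

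For each clause $C=\{\ell_1,\ell_2,\ell_3\}$ we add the following families: pair gadgets $\{\{s,u_1^C,u_2^C\}\}, \{\{t,u_1^C,u_2^C\}\}$ and analogous families for $(u_3^C,u_4^C)$, making $u_2^C$ and $u_4^C$ behave as the complements of $u_1^C$ and $u_3^C$ (with respect to membership in~$X$); the family $\{\{s,u_1^C,u_3^C\}\}$ forbidding the case when both $u_1^C,u_3^C\notin X$; the family $\{\{t,z_C,u_1^C,u_2^C\}\}$ which, combined with $|\{u_1^C,u_2^C\}\cap X|=1$, forces $z_C\notin X$; the family $\{\{t,\ell_1,\ell_2,\ell_3\}\}$; and three $s/t$-free size-$4$ families $\{\{u_1^C,u_3^C,\ell_1,z_C\}\}$, $\{\{u_1^C,u_4^C,\ell_2,z_C\}\}$, $\{\{u_2^C,u_3^C,\ell_3,z_C\}\}$, each imposing $|X\cap S|\neq 2$. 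A short case analysis over the three admissible cases for~$(u_1^C,u_3^C)$---both in~$X$, only $u_1^C$ in~$X$, or only $u_3^C$ in~$X$---shows that, given $z_C\notin X$, the size-$4$ families force respectively $\ell_1$, $\ell_2$, or $\ell_3$ to lie in~$X$, while the ``at most one literal'' family excludes the other two. Hence the clause gadget enforces ``exactly one literal of~$C$ is in~$X$''.

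To finish, given a satisfying 1-in-3 SAT assignment~$\phi$, we construct~$X$ by placing $w$ or $\ol{w}$ in~$X$ according to~$\phi(w)$, setting the $u_i^C$-elements to the unique state matching the literal of~$C$ made~\tt{} by~$\phi$, and keeping every $z_C$ outside~$X$; direct inspection shows that $\sum_{S\in\Pi}\dist(X,S)\leq 1$ for every constructed family~$\Pi$. Conversely, any solution~$X$ must satisfy every gadget, which by the case analysis above yields a truth assignment with exactly one true literal per clause. Since the reduction uses $O(1)$ elements and families per variable and per clause, it runs in polynomial time, establishing the $\NP$-hardness of \centralset{} with $d=1$.
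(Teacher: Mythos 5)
Your reduction is correct, and like the paper it starts from \textsc{1-in-3 SAT}, but the clause gadget is genuinely different. The paper pads sets with a globally shared ``anchor'' collection~$R$ (forced into~$X$ via the variable families $\{\{s,v,\ol v\}\cup R\}$) and encodes the ``at least one true literal'' half of the constraint via families of \emph{two} disjoint size-2 sets (exploiting that the distance measure sums over the sets in a family), using one extra element~$z_j$ per clause to split into two sub-cases. You instead work exclusively with singleton families, avoid a global anchor entirely, and drive the gadget with four per-clause auxiliary elements $u_1^C,\dots,u_4^C$ constrained into two complementary pairs together with a ``not both out'' constraint, giving a local three-way case split on $(u_1^C,u_3^C)$. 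You then exploit the observation --- not used in the paper's proof --- that for an $s,t$-free $4$-element set~$S$, the constraint $\dist(X,S)\le 1$ is exactly the non-bijunctive condition $|X\cap S|\neq 2$, and combine three such size-4 singleton families with the standard ``at most one'' family $\{C\cup\{t\}\}$ to force exactly one literal into~$X$ in each of the three cases. I checked the case analysis (including the role of the at-most-one family in closing the two sub-cases where a literal is left unconstrained by the size-4 sets) and the converse direction, and both are sound; the construction is clearly polynomial. Your version is arguably tidier in that it needs no global padding set and uses only singleton families, so it establishes the slightly stronger statement that \centralset{} is $\NP$-hard for $d=1$ even when every $\Pi_i$ is a single set; the paper's version makes the subsequent reduction in Lemma~\ref{lem:CStoSR-reduction} directly reusable, but yours would work there just as well.
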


\begin{proof}
Suppose that we are given an instance of the \textsc{1-in-3 SAT} problem with variable set~$V$ and clause set~$\CC=\{C_1\dots,C_m\}$. 
We construct an instance~$\InstCS$ of \centralset{} as follows.
In addition to the set~$V$ of variables and the set~$\ol{V}=\{\ol{v}:v \in V\}$ of negative literals, we introduce our two distinguished elements, $s$ and~$t$.
We further introduce a set $R_j=\{r_{j,1},r_{j,2},r_{j,3}\}$ together with an extra element $z_j$ for each clause~$C_j \in \CC$ to form our universe~$U$. We let $R=R_1 \cup \dots \cup R_m$ and $Z=\{z_1,\dots,z_m\}$, so that
\[
U=V \cup \ol{V} \cup \{s,t\} \cup \bigcup_{j \in [m]} (R_j \cup \{z_j\})=V \cup \ol{V} \cup \{s,t\} \cup R \cup Z .
\]

Next, for each variable, we introduce two set families, $\Pi_v$ and $\Pi_{\ol{v}}$, where
\[
\Pi_v=\{\{s,v,\ol{v}\} \cup R\} 
\qquad \textrm{ and } \qquad 
\Pi_{\ol{v}}=\{\{v,\ol{v},t\}\}.
\]
For simplicity, we write $\Pi(V) = \langle \Pi_v,\Pi_{\ol{v}}: v \in V \rangle$ to denote the $2|V|$-tuple formed by these set families.
For each clause~$C_j \in \CC$, we fix an arbitrary ordering of its literals, and we denote the first, second, and third  literals in $C_j$ as $\ell_{j,1},\ell_{j,2}$ and $\ell_{j,3}$. We define three set families (see also Figure~\ref{fig:hadrd1}):
\[
\begin{array}{r@{\hspace{2pt}}lcl@{\hspace{2pt}}l}
\Pi_{C_j} &= \{S_j\} & 
\textrm{ where } &  S_j&=C_j \cup \{t\}= \{ \ell_{j,1},\ell_{j,2},\ell_{j,3},t\}, \\[2pt]
\Pi_{C_j}^\alpha & =\{S_j^{\alpha,1},S_j^{\alpha,2}\} & \textrm{ where } & S_j^{\alpha,1} &= \{ \ell_{j,1},z_j\}, \\[2pt]
& & &  S_j^{\alpha,2} &= \{\ell_{j,2},r_{j,2}\};
\\[2pt]
\Pi_{C_j}^\beta &  =\{S_j^{\beta,1},S_j^{\beta,2}\} & \textrm{ where } & S_j^{\beta,1} &= \{ r_{j,1},z_j\}, \\[2pt]
& & &  S_j^{\beta,2} &=\{\ell_{j,3},r_{j,3} \}. 
\end{array}
\]
We also write $\Pi(\CC) = \langle \Pi_C,\Pi_C^\alpha,\Pi_{C}^\beta: C \in \CC \rangle$ to denote the $3|\CC|$-tuple formed by these set families in an arbitrarily fixed ordering. 
We set our threshold as $d=1$.
Thus, our instance of \centralset{} is $\InstCS=(U,s,t,\Pi(V), \Pi(\CC),1)$.

\begin{figure}[tbp]
    \begin{center}
    \includegraphics[scale=0.73, pagebox=cropbox,clip]{./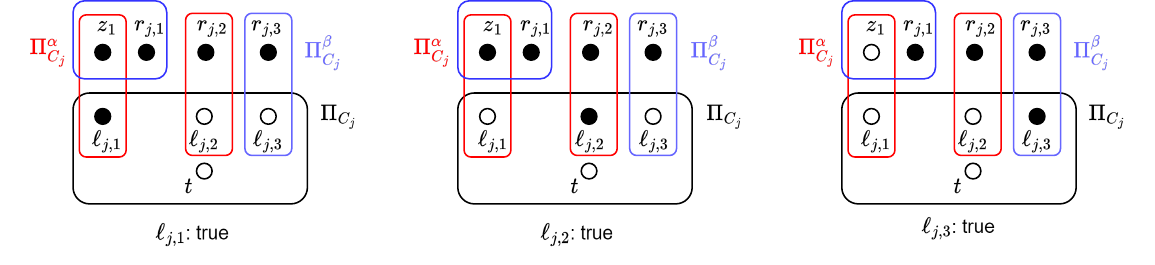}
    \caption{The set families~$\Pi_{C_j}$, $\Pi_{C_j}^\alpha$, and $\Pi_{C_j}^\beta$ defined for some clause~$C_j$. The three figures depict the construction of the set~$X$ depending on which of the literals takes $\tt$ value; 
     elements of the set~$X$ are shown as black circles.}
    \label{fig:hadrd1}
    \end{center}
\end{figure}

We will show that the constructed instance~$\InstCS$ has a solution if and only if our instance~$(V,\CC)$ of the \textsc{1-in-3 SAT} problem
is solvable.

First suppose that there is a valid truth assignment $\phi$ for $(V,\CC)$. 
Consider the set \[X=\{s\}  \cup R \cup \{ \ell:\ell \in V \cup \ol{V},\phi(\ell)=\tt \} \cup \{z_j: z_j \in Z,\phi(\ell_{j,3})=\ff\}.\]
Note that $X$ contains~$s$, but not~$t$ (see again Figure~\ref{fig:hadrd1} for an illustration).
We are going to show that $X$ is a solution for~$\InstCS$. 
Since $\phi$ maps exactly one literal in $\{v,\ol{v}\}$ to \tt{} for each $v \in V$, by $R \cup \{s\} \subseteq X$ we get that 
\begin{align*}
\sum_{S \in \Pi_v}\dist(X,S)&=
|(\{s,v,\ol{v}\} \cup R) \setminus X| =
|\{v,\ol{v}\}\setminus X|= 1 \qquad \textrm{ and} \\
\sum_{S \in \Pi_{\ol{v}}}\dist(X,S)&=
|(\{v,\ol{v},t\}) \cap X| =
|\{v,\ol{v}\}\cap X|= 1.
\end{align*}
For the distance of~$X$ from the set families associated with some clause $C_j \in \CC$, by the validity of~$\phi$ we obtain
{\allowdisplaybreaks
\begin{align*}
\sum_{S \in \Pi_{C_j}}\dist(X,S)&=
|(C_j \cup \{t\}) \cap X| = 1;
\\
\sum_{S \in \Pi_{C_j}^\alpha}\dist(X,S) & =
\min\{|S_j^{\alpha,1} \setminus X|,|S_j^{\alpha,1} \cap X|\}
+
\min\{|S_j^{\alpha,2} \setminus X|,|S_j^{\alpha,2} \cap X|\}
\\
&= 
\!\begin{multlined}[t][9cm]
\min \{ |\{ \ell_{j,1},z_j\} \setminus X|,|\{ \ell_{j,1},z_j\} \cap X| \}
\\
+ 
\min \{ | \{\ell_{j,2},r_{j,2}\} \setminus X|,|\{\ell_{j,2},r_{j,2}\} \cap X| \}
\end{multlined}
\\
& = \left\{ 
\begin{array}{ll}
   \min\{0,2\} +\min\{1,1\}=1 
   & \textrm{ if } \phi(\ell_{j,1})=\tt \\
   \min\{1,1\} +\min\{0,2\}=1 
   & \textrm{ if } \phi(\ell_{j,2})=\tt \\
   \min\{2,0\} +\min\{1,1\}=1 
   & \textrm{ if } \phi(\ell_{j,3})=\tt 
\end{array}
\right\} = 1;
\\
\sum_{S \in \Pi_{C_j}^\beta}\dist(X,S) & =
\min\{|S_j^{\beta,1} \setminus X|,|S_j^{\beta,1} \cap X|\}
+
\min\{|S_j^{\beta,2} \setminus X|,|S_j^{\beta,2} \cap X|\}
\\
&= 
\!\begin{multlined}[t][9cm]
\min \{ |\{ r_{j,1},z_j\} \setminus X|,|\{ r_{j,1},z_j\} \cap X| \}
\\
+ 
\min \{ | \{\ell_{j,3},r_{j,3}\} \setminus X|,|\{\ell_{j,3},r_{j,3}\} \cap X| \}
\end{multlined}
\\
& = \left\{ 
\begin{array}{ll}
   \min\{0,2\} +\min\{1,1\}=1 
   & \textrm{ if } \phi(\ell_{j,1})=\tt \\
   \min\{0,2\} +\min\{1,1\}=1 
   & \textrm{ if } \phi(\ell_{j,2})=\tt \\
   \min\{1,1\} +\min\{0,2\}=1 
   & \textrm{ if } \phi(\ell_{j,3})=\tt 
\end{array}
\right\} = 1.
\end{align*}
}

Hence, $X$ satisfies constraint~(\ref{eqn:centralset-def}) for each set family, and thus is a solution for $I_{\textup{RS}}$.

For the other direction, suppose that there exists a subset $X\subseteq U$, containing~$s$ but not~$t$, whose distance from each set family in~$I_{\textup{RS}}$ is at most~$1$.
Let us define a truth assignment $\phi$ by setting $\phi(v)=\tt$ if and only if $v \in X$ for each variable~$v \in V$; we are going to show that $\phi$ is valid.

To this end, let us first observe that 
\begin{align}
\label{eqn:containsR}
1=d & \geq \sum_{S \in \Pi_v} \dist(X,S) =|(\{s,v,\ol{v}\} \cup R) \setminus X| \quad \textrm{ and} \\    
\notag
1=d & \geq \sum_{S \in \Pi_{\ol{v}}} \dist(X,S) =|\{v,\ol{v},t\} \cap X|. 
\end{align}
This implies that \begin{equation}
\label{eqn:consistency}
|\{v,\ol{v}\} \setminus X| \leq 1
\qquad \textrm{ and } \qquad
|\{v,\ol{v}\} \cap X| \leq 1.
\end{equation}
By $|\{v,\ol{v}\} \setminus X|+|\{v,\ol{v}\} \cap X|=|\{v,\ol{v}\}|=2$, we obtain that all inequalities in~(\ref{eqn:consistency}), and hence also in~(\ref{eqn:containsR}), must hold with equality. That is, 
for each variable~$v \in V$, exactly one of the literals~$v$ and~$\ol{v}$ is contained in~$X$. 
In other words, a literal~$\ell \in  V \cup \ol{V}$ is set to \tt{} by~$\phi$ if and only if $\ell \in X$.
Moreover, 
we must also have $R \subseteq X$ due to the equality in~(\ref{eqn:containsR}).

Consider now the set family~$\Pi_C$ for some $C \in \CC$, and recall that $t \notin X$:
\[
1 \geq d=\sum_{S \in \Pi_C}\dist(X,S)=\dist(X,S_j)=
|(C\cup \{t\})\cap X| = |C \cap X|.
\]
This means that $\phi$ sets at most one literal to \tt{} in each clause~$C \in \CC$. 
To prove that $\phi$ is valid, let us assume for the sake of contradiction that $C_j \cap X=\emptyset$ for some~$C_j \in \CC$. 
In this case we have $\{\ell_{j,1},\ell_{j,2},\ell_{j,3}\}\cap X=\emptyset$; keep in mind also that $R \subseteq X$. 
On the one hand, if $z_j \in X$, then we get
\begin{align*}
1 & \geq 
\sum_{S \in \Pi_{C_j}^\alpha}\dist(X,S)  =
\min\{|S_j^{\alpha,1} \setminus X|,|S_j^{\alpha,1} \cap X|\}
+
\min\{|S_j^{\alpha,2} \setminus X|,|S_j^{\alpha,2} \cap X|\}
\\
&= 
\min \{ |\{ \ell_{j,1},z_j\} \setminus X|,|\{ \ell_{j,1},z_j\} \cap X| \}
+ 
\min \{ | \{\ell_{j,2},r_{j,2}\} \setminus X|,|\{\ell_{j,2},r_{j,2}\} \cap X| \}
\\
& = \min\{1,1\} +\min\{1,1\}=2,
\end{align*}
a contradiction. On the other hand, if $z_j \notin X$, then we get
\begin{align*}
1 & \geq 
\sum_{S \in \Pi_{C_j}^\beta}\dist(X,S)  =
\min\{|S_j^{\beta,1} \setminus X|,|S_j^{\beta,1} \cap X|\}
+
\min\{|S_j^{\beta,2} \setminus X|,|S_j^{\beta,2} \cap X|\}
\\
&= 
\min \{ |\{ r_{j,1},z_j\} \setminus X|,|\{ r_{j,1},z_j\} \cap X| \}
+ 
\min \{ | \{\ell_{j,3},r_{j,3}\} \setminus X|,|\{\ell_{j,3},r_{j,3}\} \cap X| \}
\\
& = 
   \min\{1,1\} +\min\{1,1\}=2,
\end{align*}
which is again a contradiction. This proves that $\phi$ is indeed a valid truth assignment for our instance of \textsc{1-in-3 SAT}, and so the claim holds.
\end{proof}

Using Theorem~\ref{thm:CS-NPhard-d-equal1}, it is not hard to show that \centralset{} remains $\NP$-hard for any constant $d \geq 1$.

\begin{lemma}
\label{lem:CS-NPhard-d-atleast1}
    \centralset{} is $\NP$-hard for each constant $d \geq 1$.
\end{lemma}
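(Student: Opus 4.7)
I plan to reduce from the $d=1$ case of \centralset{}, shown $\NP$-hard in Theorem~\ref{thm:CS-NPhard-d-equal1}, to the $d$-case for each constant $d \geq 2$ via a padding argument. Given an instance $\InstCS = (V, s, t, \Pi_1, \ldots, \Pi_k, 1)$ produced by the reduction in Theorem~\ref{thm:CS-NPhard-d-equal1}, I would construct an instance $\InstCS'$ with threshold $d$ by modifying each family $\Pi_i$ so that, in any valid solution, exactly $d-1$ units of its budget are consumed by fresh padding elements, leaving one unit to enforce the original constraint.

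For each $i \in [k]$ and each $l \in [d-1]$, I would introduce a fresh pair $\{a^i_l, b^i_l\}$ and add $\{a^i_l, b^i_l\}$ as a new (pairwise disjoint) set in $\Pi_i$. The distance contribution of such a set is $\min(|\{a^i_l, b^i_l\} \cap X|, |\{a^i_l, b^i_l\} \setminus X|) \in \{0,1\}$, with value $1$ iff exactly one of $\{a^i_l, b^i_l\}$ is in $X$. To guarantee this behavior, I would introduce auxiliary families $\Pi^+_{i,l} = \{\{s, a^i_l, b^i_l, F_1, \dots, F_{d-1}\}\}$ and $\Pi^-_{i,l} = \{\{a^i_l, b^i_l, r_1, \dots, r_{d-1}, t\}\}$, where $F_1,\dots,F_{d-1}$ are fresh ``blocker'' elements and $r_1,\dots,r_{d-1}$ are elements of the set $R$ (which is forced into any valid solution by the construction of Theorem~\ref{thm:CS-NPhard-d-equal1}). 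Each blocker $F_j$ is itself forced out of $X$ via an additional family $\{\{F_j, r_1, \dots, r_d, t\}\}$. Provided $R \subseteq X$ and all $F_j \notin X$, the constraint on $\Pi^+_{i,l}$ forces at least one of $\{a^i_l, b^i_l\}$ in $X$, and the constraint on $\Pi^-_{i,l}$ forces at most one; combined, exactly one of each pair is in $X$, so the total padding contribution to $\Pi'_i$ is exactly $d-1$, leaving effective budget $1$ for the original constraint.

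The main obstacle is the circular dependency between these forcings: the pair-behavior constraint relies on $R \subseteq X$ and $F_j \notin X$, but these follow from the padding contributing exactly $d-1$ to each family, which in turn requires the pair-behavior constraint. I would break the circularity by taking the \textsc{1-in-3 SAT} input used in Theorem~\ref{thm:CS-NPhard-d-equal1} with enough dummy clauses added so that $|R|$ is much larger than $d$, and by a bookkeeping argument showing that any solution $X$ of $\InstCS'$ has $|R \setminus X|$ bounded so tightly that the specific $R$-elements referenced in the auxiliary families must all lie in $X$. The equivalence then follows by verifying both directions: a solution $X$ of $\InstCS$ extends to a solution of $\InstCS'$ by choosing exactly one element from each pair $\{a^i_l, b^i_l\}$ to include in $X$ and setting $F_j \notin X$; conversely, any solution of $\InstCS'$ restricted to $V$ yields a solution of $\InstCS$.
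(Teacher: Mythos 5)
Your high-level plan (padding so each family consumes exactly $d-1$ units of its budget on fresh elements, leaving an effective budget of $1$) is the right idea and matches the paper's strategy. However, your implementation has a genuine circularity gap that you acknowledge but do not actually resolve.

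Specifically, your forcing chain is: exactly one of each pair $\{a^i_l,b^i_l\}$ lies in $X$ because of $\Pi^+_{i,l}$ and $\Pi^-_{i,l}$; those constraints bite only if the blockers $F_j$ are outside $X$ and the chosen elements $r_1,\dots,r_{d-1}\in R$ are inside $X$; the blockers are forced out only if $r_1,\dots,r_d\in X$; and $R\subseteq X$ is forced only once the pairs contribute exactly $d-1$ to each $\Pi'_v$-budget, restoring the effective-budget-$1$ argument. Until that happens, the only available bound is $|R\setminus X|\leq d$ (from $\dist(X,\{s,v,\bar v\}\cup R)\leq d$ with the pair-terms dropped), and nothing prevents a solution from excluding \emph{exactly} the elements $r_1,\dots,r_d$ you hard-coded into the auxiliary families. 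Making $|R|$ large does not help: the constraints bound only the \emph{number} of missing $R$-elements, not \emph{which} ones are missing, so a ``bookkeeping argument'' of the kind you sketch cannot single out specific elements of $R$. Your reduction is also tied to instances coming from the $1$-in-$3$ SAT construction rather than from arbitrary $d=1$ instances, which is an inessential but additional restriction.

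The paper's proof sidesteps the circularity entirely with a cleaner gadget. It introduces a dummy set $D$ of size exactly $2d$ and two new singleton families $\Pi_{k+1}=\{D\cup\{s\}\}$ and $\Pi_{k+2}=\{D\cup\{t\}\}$. Since $s\in X$ and $t\notin X$, these force $|D\setminus X|\leq d$ and $|D\cap X|\leq d$ \emph{directly}, with no reliance on any other constraint; summing to $|D|=2d$ forces both to equal $d$. Then a fixed subset $\hat D\subseteq D$ of size $2d-1$ necessarily has $\dist(X,\hat D)\geq d-1$, so adding $\hat D$ to every $\Pi_i$ leaves an effective budget of $1$ for the original sets. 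This is non-circular because the $s$- and $t$-containing dummy sets pin down $|D\cap X|$ independently of everything else, and it works for \emph{any} $d=1$ instance, not only those arising from the SAT reduction. If you want to salvage your approach, you should replace the $R$-and-blocker machinery with a self-contained gadget of this kind.
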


\begin{proof}
    We present a reduction from \centralset{} with $d=1$ which is $\NP$-hard by Theorem~\ref{thm:CS-NPhard-d-equal1}. 
    Let $I=(V,s,t,\Pi_1,\dots,\Pi_k,1)$ be our input instance. 
    We compute a modified instance $I'$ of \centralset{} with threshold~$d \geq 1$ as follows.
    First, we introduce a set~$D$ of dummy $2d$ items and add them to our universe. Next, we add two new set families, each containing a single set: 
    namely, we let 
    $\Pi_{k+1}=\{\{D \cup \{s\} \}\}$ and $\Pi_{k+2}=\{\{D \cup \{t\} \}\}$.
    Furthermore, let $\hat{D}$ be an arbitrarily fixed subset of~$D$ of size~$2d-1$. 
    For each $i \in [k]$ we define $\Pi_i'=\Pi_i \cup \{\hat{D}\}$, that is, we add the set~$\hat{D}$ to $\Pi_i$. 
    Let $I'=(V \cup D, s,t,\Pi'_1,\dots,\Pi'_k,\Pi_{k+1},\Pi_{k+2},d)$ be our constructed instance.

    We claim $I$ admits a solution if and only if $I'$ admits one.

    First, let $X$ be a solution for~$I$. 
    Let us fix some $D' \subseteq \hat{D}$ with~$|D'|=d$, and define $X'=X \cup D'$. Then for each $i \in [k]$ and 
    $S \in \Pi_i$, since $S \cap D=\emptyset$ 
    we get $S \setminus X'=S \setminus X$ 
    and $S \cap X'=S \cap X$; this yields 
    $\dist(X',S)=\dist(X,S)$.
     Hence, for each $i \in [k]$ we obtain
    \begin{align*}
    \sum_{S \in \Pi'_i}\dist(X',S) 
    &=
    \sum_{S \in \Pi_i}\dist(X',S)+
    \dist(X',\hat{D})
    \\
    &= \sum_{S \in \Pi_i}\dist(X,S)+
    \dist(X',\hat{D})\\
    & \leq 1+\min \{|\hat{D} \setminus X'|,|\hat{D} \cap X'|\}=1+\min \{d-1,d\}=d
    \end{align*}
    where the inequality holds by our assumption on~$X$. 
    Moreover, since $s \in X \subseteq X'$, the distance of $X'$ from~$\Pi_{k+1}$ is 
    \[\sum_{S \in \Pi_{k+1}} \dist(X',S)=\dist(X',D \cup \{s\})=|(D \cup \{s\}) \setminus X'|=|D \setminus D'|=d
    \]
    and similarly, since $t \notin X$ implies $t \notin X'$, the distance of $X'$ from~$\Pi_{k+2}$ is 
    \[\sum_{S \in \Pi_{k+2}} \dist(X',S)=\dist(X',D \cup \{t\})=|(D \cup \{t\}) \cap X'|=|D \cap D'| = d.
    \]
    Hence, $X'$ is indeed a solution for~$I'$.

    For the other direction, let $X'$ be a solution for~$I'$. Since $X'$ contains~$s$ but not~$t$, we have 
    \begin{align*}
    d\geq \sum_{S \in \Pi_{k+1}} \dist(X',S) &=\dist(X',D \cup \{s\})=|(D \cup \{s\}) \setminus X'| = |D \setminus X'|, \\ 
    d\geq \sum_{S \in \Pi_{k+2}} \dist(X',S) &=\dist(X',D \cup \{t\})=|(D \cup \{t\}) \cap X'|=|D \cap X'|.
    \end{align*}
    However, using $|D \setminus X'|+|D \cap X'|=|D|=2d$, we obtain that both of the above inequalities hold with equality. Therefore, $X'$ contains exactly~$d$ dummy elements from~$D$. Recall that $|\hat{D}|=2d-1$.
    This implies that $|\hat{D} \cap X'| \in \{d-1,d\}$ 
    and $|\hat{D} \setminus X'| \in \{d-1,d\}$. From this, it follows that $\dist(X',\hat{D})=\min \{ |\hat{D} \setminus X'|,|\hat{D} \cap X'|\} \geq d-1$. 
    We obtain that for each $i \in [k]$
    \[ 
    d \geq \sum_{S \in \Pi'_i}\dist(X',S)= \sum_{S \in \Pi_i}\dist(X',S)+\dist(X',\hat{D})\geq 
    \sum_{S \in \Pi_i}\dist(X',S)+d-1
    \] which implies that the distance of $X'$ from $\Pi_i$ is $\sum_{S \in \Pi_i}\dist(X',S) \leq 1$. 
    
    Let $X =X' \cap V$. 
    Again, recall that for each $i \in [k]$ and $S \in \Pi_i$ we know $S \subseteq V$, since $S$ contains no dummies. Thus $S \cap X' =S \cap X$, and 
    $S \setminus X' = S \setminus X$, from which $\dist(X',S)=\dist(X,S)$ follows. Hence, we obtain that $\sum_{S \in \Pi_i}\dist(X,S) \leq 1$ holds for each~$i \in [k]$, and so $X$ is a solution for the original instance~$I$.
\end{proof}

\begin{corollary}
\label{cor:SR-NPhard-d-atleast1}
\SR{} is $\NP$-hard for each constant $d \geq 1$.
\end{corollary}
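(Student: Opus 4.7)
The plan is to derive the corollary as an immediate consequence of the two preceding results. Specifically, by Lemma~\ref{lem:CS-NPhard-d-atleast1}, for every constant $d \geq 1$ the problem \centralset{} is already $\NP$-hard. By Lemma~\ref{lem:CStoSR-reduction}, there is a polynomial-time reduction from \centralset{} to \SR{} which preserves the value of $d$ (and also of $k$). Composing these two gives a polynomial-time reduction from an $\NP$-hard problem (for any fixed $d \geq 1$) to \SR{} with the same fixed value of $d$, proving $\NP$-hardness of the latter.

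Concretely, I would fix any constant $d \geq 1$ and take an arbitrary instance $I_{\textup{RS}}=(V,s,t,\Pi_1,\dots,\Pi_k,d)$ of \centralset{}. Applying the reduction of Lemma~\ref{lem:CStoSR-reduction} yields, in polynomial time, an instance $\InstSR=(V \setminus \{s,t\},f_1,\dots,f_k,d)$ of \SR{} with the same $d$, such that $I_{\textup{RS}}$ has a solution if and only if $\InstSR$ does. Since by Lemma~\ref{lem:CS-NPhard-d-atleast1} deciding $I_{\textup{RS}}$ is $\NP$-hard for this constant $d$, so is deciding $\InstSR$.

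There is no real obstacle here: the work has already been done in establishing Lemmas~\ref{lem:CStoSR-reduction} and~\ref{lem:CS-NPhard-d-atleast1}. The only thing to be careful about is that the reduction of Lemma~\ref{lem:CStoSR-reduction} genuinely preserves $d$ (which it does, as stated), so constant $d$ on the \centralset{} side translates to the \emph{same} constant $d$ on the \SR{} side; otherwise the hardness transfer would only give hardness for $d$ growing with the input rather than a para-$\NP$-hardness statement at every fixed $d \geq 1$. Hence a one-sentence proof suffices.
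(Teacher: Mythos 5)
Your proof is correct and matches the paper's (implicit) argument exactly: the corollary follows by composing the $d$-preserving polynomial-time reduction of Lemma~\ref{lem:CStoSR-reduction} with the $\NP$-hardness of \centralset{} at each fixed $d \geq 1$ from Lemma~\ref{lem:CS-NPhard-d-atleast1}. You also correctly flag the one point worth checking --- that $d$ is preserved, so hardness transfers at each fixed constant rather than only for growing $d$.
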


\subsection{\texorpdfstring{$\NP$-hardness for a constant $k\geq 3$}{NP-hardness for k=3}}
\label{sec:NPhard-k3}

In this section we prove that
\SR{} is $\NP$-hard even for $k=3$. 
To this end, we are going to define another intermediary problem. 
First consider the \MBMC{} problem, proved to be $\NP$-complete by Bonsma~\cite{Bonsma10}. The input of this problem is an undirected graph~$G=(V,E)$ with two distinguished vertices, $s$ and~$t$, and a parameter~$\ell$. The task is to decide whether there exists a minimum $(s,t)$-cut $X \subseteq V$ in~$G$ such that $\min\{|X|,|V \setminus X|\} \geq \ell$; recall that a set of vertices~$X \subseteq V$ is a minimum $(s,t)$-cut in the \emph{undirected} graph~$G$ if $s \in X, t \notin X$ and subject to this, the value $|\delta(X)|$, i.e., the number of edges between $X$ and $V \setminus X$, is minimized.

Instead of the \MBMC{} problem, it will be more convenient to use a variant that we call \BMC{} 
where we seek a minimum $(s,t)$-cut that contains exactly half of the vertices. Formally, its input is an undirected graph~$G=(V,E)$ with two distinguished vertices, $s$ and~$t$, and its task is to find a minimum $(s,t)$-cut~$X$ with $|X|=|V|/2$. Since \MBMC{} can be reduced to \BMC{} by simply adding a sufficient number of isolated vertices, we obtain the following.

\begin{lemma}
    \label{lem:balancedcut}
    \BMC{} is $\NP$-complete.
\end{lemma}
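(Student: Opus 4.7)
\medskip
\noindent\textbf{Proof proposal.}
Membership in $\NP$ is straightforward: given a candidate vertex set~$X \subseteq V$, one verifies in polynomial time that $s \in X$, $t \notin X$, $|X| = |V|/2$, and that $|\delta(X)|$ equals the minimum $(s,t)$-cut value in~$G$ (the latter computable by a standard max-flow algorithm). Hence the main task is to establish $\NP$-hardness via a reduction from \MBMC{}, which is $\NP$-complete by Bonsma~\cite{Bonsma10}.

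The plan is as follows. Let $(G,s,t,\ell)$ be an instance of \MBMC{} with $|V(G)|=n$. If $\ell > n/2$ then no balanced minimum cut can exist, so we may output a fixed no-instance. Otherwise, we construct $G'$ from~$G$ by adding a set~$I$ of $m := n - 2\ell$ isolated vertices (so that $V(G')=V(G) \cup I$ and $E(G')=E(G)$), and we output $(G',s,t)$ as an instance of \BMC{}. Observe that $|V(G')| = 2(n-\ell)$, so the requirement for a solution in~$G'$ becomes $|X'| = n - \ell$. Since the vertices of~$I$ are isolated, the minimum $(s,t)$-cut values in~$G$ and~$G'$ coincide, and for any $(s,t)$-cut~$X'$ in~$G'$ we have $|\delta_{G'}(X')| = |\delta_G(X' \cap V(G))|$, so minimum $(s,t)$-cuts in~$G'$ are exactly the sets of the form $X \cup I'$ where $X$ is a minimum $(s,t)$-cut in~$G$ and $I' \subseteq I$ is arbitrary.

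It remains to verify correctness. For the forward direction, suppose $X$ is a minimum $(s,t)$-cut in~$G$ with $\ell \leq |X| \leq n - \ell$. Setting $r = (n-\ell) - |X|$, we have $0 \leq r \leq m$, so we may choose any $I' \subseteq I$ with $|I'| = r$ and let $X' = X \cup I'$; then $X'$ is a minimum $(s,t)$-cut in~$G'$ of size exactly $n - \ell = |V(G')|/2$. For the converse direction, suppose $X'$ is a minimum $(s,t)$-cut in~$G'$ with $|X'| = n - \ell$, and let $X := X' \cap V(G)$. Then $X$ is a minimum $(s,t)$-cut in~$G$, and from $0 \leq |X' \cap I| \leq m = n - 2\ell$ together with $|X| = |X'| - |X' \cap I| = (n-\ell) - |X' \cap I|$ we obtain $\ell \leq |X| \leq n - \ell$, as required.

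The argument is essentially bookkeeping; the only mildly nonroutine point is choosing the number of isolated vertices so that the balance constraint $|X'| = |V(G')|/2$ in~$G'$ exactly encodes the minimum-side constraint $\min\{|X|,|V(G)\setminus X|\} \geq \ell$ in~$G$, which the choice $m = n - 2\ell$ achieves (and one checks that $|V(G')|$ is automatically even, so no parity issue arises).
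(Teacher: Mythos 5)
Your reduction is correct and is essentially identical to the paper's: both add $|V|-2\ell$ isolated vertices to turn the balance lower bound $\min\{|X|,|V|-|X|\}\geq\ell$ into the exact half constraint, and both directions are verified by the same counting argument. The only addition is your explicit verification of membership in $\NP$, which the paper leaves implicit.
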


\begin{proof}
    We give a reduction from \MBMC{}. 
    Let $I=(G,s,t,\ell)$ with $G=(V,E)$ be our input instance. 
    We may assume $\ell \leq |V|/2$, as otherwise $I$ is clearly a ``no''-instance. 
    We construct an instance $I'=(G',s,t)$ of \BMC{} by simply adding $|V|-2\ell$ isolated vertices to~$G$, so that the number of vertices in~$G'$ is $2|V|-2\ell$. 
    Clearly, if there is a minimum $(s,t)$-cut~$X$ such that $\min\{|X|,|V|-|X|\} \geq \ell$, then adding $|V|-\ell-|X| \geq 0$ vertices yields a minimum $(s,t)$-cut in~$G'$ that contains exactly half of the vertices of~$G'$.
    Conversely, if $X'$ is a minimum $(s,t)$-cut in~$G'$ that contains half of the vertices, i.e., $|V|-\ell$ vertices, then removing the newly added isolated vertices from it yields a minimum $(s,t)$-cut~$X$ in~$G$ with $|V|-\ell -(|V|-2\ell)=\ell \leq |X| \leq |V|-\ell$.
\end{proof}

\begin{theorem}
\label{thm:k=3-almostunique}
\SR{} is $\NP$-hard for $k=3$, even on instances of the form~$I=(V,f_1,f_2,f_3,d)$ where $|\LL_1|=|\LL_2|=1$.
\end{theorem}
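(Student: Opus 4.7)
The plan is to reduce from \BMC{}, which is $\NP$-complete by Lemma~\ref{lem:balancedcut}. Given a \BMC{} instance $G = (V_G, E_G)$ with distinguished terminals $s, t$ and $|V_G| = 2m$, I would construct an \SR{} instance $(V, f_1, f_2, f_3, d)$ with $k = 3$ and $|\LL_1| = |\LL_2| = 1$ that is a ``yes''-instance if and only if $G$ has a minimum $(s,t)$-cut of size exactly~$m$.

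I would take the universe $V = (V_G \setminus \{s, t\}) \cup D_L \cup D_U$, where $D_L$ and $D_U$ are disjoint sets of $m-1$ dummy elements each. After fixing an arbitrary $(m-1)$-subset $V^* \subseteq V_G \setminus \{s, t\}$, I pair each $v \in V^*$ with one dummy $d_v^L \in D_L$ and one dummy $d_v^U \in D_U$, and define $f_3$ as the cut function of the directed graph $G'$ obtained from $G$ by adding, for each $v \in V^*$, arcs of infinite weight in both directions between $v$ and each of $d_v^L, d_v^U$. These arcs force $v, d_v^L, d_v^U$ to lie on the same side of any finite minimum $(s,t)$-cut, so $\LL_3$ consists exactly of the sets $Y_V \cup \{d_v^L, d_v^U : v \in Y_V \cap V^*\}$ ranging over all minimum $(s,t)$-cuts $Y_V$ of $G$ (viewed as subsets of $V_G \setminus \{s,t\}$). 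I would define $f_1$ and $f_2$ as cut functions of directed graphs using infinite-weight arcs from $s$ to the elements of $K_1 := D_L \cup (V_G \setminus \{s,t\})$ (resp.~$K_2 := D_L$) and from the remaining vertices to $t$, ensuring $\LL_1 = \{K_1\}$ and $\LL_2 = \{K_2\}$. Finally, set $d = m - 1$.

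Since $K_2 \subseteq K_1$ with $|K_1 \setminus K_2| = 2m - 2 = 2d$, the conditions $|X \symmdiff K_1|, |X \symmdiff K_2| \leq d$ together force $X$ into the form $X = D_L \cup X_V$ for some $X_V \subseteq V_G \setminus \{s,t\}$ with $|X_V| = m - 1$ (in particular, $X \cap D_U = \emptyset$). The remaining constraint $|X \symmdiff Y| \leq d$ for some $Y \in \LL_3$ then evaluates to
\[
|X \symmdiff Y| = |X_V \symmdiff Y_V| + \bigl((m - 1) - |Y_V \cap V^*|\bigr) + |Y_V \cap V^*| = |X_V \symmdiff Y_V| + (m - 1),
\]
so being $\leq d = m - 1$ is equivalent to $X_V = Y_V$, which in turn forces $|Y_V| = |X_V| = m - 1$, exactly the \BMC{} condition. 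The forward direction of the equivalence follows by setting $X_V := Y_V$ for any balanced min cut of $G$; the backward direction follows from the identity above.

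The key observation underlying the proof is that the contributions of the two dummy subsystems to $|X \symmdiff Y|$ cancel cleanly: the term $(m-1) - |Y_V \cap V^*|$ from $D_L$-dummies in $X \setminus Y$ and the term $|Y_V \cap V^*|$ from $D_U$-dummies in $Y \setminus X$ sum to a constant $m - 1$, independent of~$Y_V \cap V^*$ or even the choice of $V^*$. I expect that identifying this cancellation is the main subtlety: a purely one-sided construction (only $D_L$ or only $D_U$) would merely force a one-sided bound on $|Y_V|$, yielding a polynomial-time condition decidable via the distributive-lattice structure of $\LL_3$; while naively adding the two bounds could have made their contributions accumulate rather than cancel, rendering the constraint infeasible.
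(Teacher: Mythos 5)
Your proof is correct, and at a high level it mirrors the paper's proof: both reduce from \BMC{}, both use two submodular functions $f_1, f_2$ with unique minimizers whose symmetric difference has size exactly $2d$ (so that a solution $X$ is forced to sit on the "midline" between them and contain exactly half of the non-dummy universe), and both build $f_3$ as a cut function of $G$ augmented with dummy vertices so that the dummy contribution to $|X \symmdiff Y_3|$ is a constant exactly equal to $d$, leaving zero slack for $X$ to differ from $Y_3$ on $G$'s vertices.

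The designs of the dummy gadget differ in an interesting way. The paper uses a single set $R$ of $|V|/2$ dummies, each attached to $t$ (by an ordinary, finite-weight edge), and sets $Y_1 = V \cup R$, $Y_2 = R$. Then every minimizer of $f_3$ is automatically disjoint from $R$, while the two anchors force $R \subseteq X$; hence the dummies contribute exactly $|R| = d$ to $|X \symmdiff Y_3|$, no matter which $Y_3 \in \LL_3$ is chosen, and no cancellation argument is needed. Your construction instead splits the dummies into $D_L$ and $D_U$, ties them by infinite-weight arcs to a fixed subset $V^*$ of vertices so that they move together with $V^*$ inside $\LL_3$, and then relies on the cancellation $\bigl((m-1) - |Y_V \cap V^*|\bigr) + |Y_V \cap V^*| = m-1$. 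This works, but it introduces unnecessary complexity: the pairing with $V^*$, the need for two dummy pools, and the cancellation itself are all artifacts of letting dummies appear in $\LL_3$ at all. The paper's one-sided design — make all dummies invisible to $\arg\min f_3$ — is more economical and perhaps worth internalizing, since it sidesteps exactly the subtlety you flag as the "main subtlety" of your own construction.
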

\begin{proof}
We present a reduction from the $\NP$-hard \BMC{} problem 
(see Lemma~\ref{lem:balancedcut}). 
Let $(G,s,t)$ be our input with undirected graph~$G=(V \cup \{s,t\},E)$; we may assume that $|V|$ is even. Let $n=|V|/2$.

We construct a set $R$ of size~$n$, and define the domain of our submodular functions as~$V \cup R$. Define $Y_1=V \cup R$ and $Y_2=R$, 
and let $f_1$ and~$f_2$ denote two submodular set functions over~$V \cup R$ whose only minimizers are $Y_1$ and~$Y_2$, respectively.
That is, $\arg\min f_1=\{Y_1\}$ and $\arg \min f_2=\{Y_2\}$. 
To define~$f_3$, we construct a graph~$G'$ by adding $R$ to the vertex set of~$G$, and connecting each~$r \in R$ to~$t$ by an edge. Let  $f_3:2^{V \cup R} \rightarrow \mathbb{N}$ be defined by $f_3(X)=|\delta_{G'}(X \cup \{s\})|$ for each $X \subseteq V$. As we have mentioned in~Section~\ref{sec:prelim}, $f_3$ is  submodular~\cite{schrijver-book}.
We finish the construction of our instance~$I=(V \cup R,f_1,f_2,f_3,d)$ of \SR{} by setting~$d=n$ as our threshold.

Suppose first that $X$ is a minimum $(s,t)$-cut in~$G$ that contains half of the vertices, i.e., $n+1$ vertices, including~$s$. We claim that $X' \cup R$ is a solution for~$I$ where $X' =X \setminus \{s\}$. Note that $X' \subseteq V$ and $|X'|=n$ and recall $|V|=2n$. 
Clearly, $|X' \symmdiff Y_1|=|(X' \cup R) \symmdiff (V \cup R)|= |X' \symmdiff V|=|V \setminus X'|=n$.
Moreover, $|X' \symmdiff Y_2|=|(X' \cup R) \symmdiff R|= |X'|=n$.
Note also that since a minimum $(s,t)$-cut in~$G$ is also a minimum $(s,t)$-cut in~$G'$, as every minimum $(s,t)$-cut in~$G$ is disjoint from~$R$.
Thus, $X' \in \arg\min f_3$, and $|(X' \cup R) \symmdiff X'|=|R|=n$. This shows that $X' \cup R$ is indeed a solution for~$I$.

Suppose now that $X$ is a solution for~$I$. Observe that if $v \in V=Y_1 \symmdiff Y_2$, then depending on whether $X$ contains~$v$ or not, $v$ is either in~$Y_2 \symmdiff X$ or in~$Y_1 \symmdiff X$. Therefore,
\begin{align*}
2n &=|V|=|Y_1 \symmdiff Y_2| \leq \left|
\left( (Y_1 \symmdiff X) \cap V \right)
\cup 
\left( (Y_2 \symmdiff X) \cap V \right) \right| 
 \\
 & \leq |(Y_1 \symmdiff X) \cap V|+ |(Y_2 \symmdiff X) \cap V| \leq |Y_1 \symmdiff X|+ |Y_2 \symmdiff X| \leq n+n=2n,
\end{align*}
and thus all of the above inequalities hold with equality. 
In particular, we get $Y_1 \symmdiff X \subseteq V$ and $Y_2 \symmdiff X \subseteq V$, which implies that $R \subseteq X$. This yields 
\begin{equation}
\label{eqn:cutsize}    
n=|X_2 \symmdiff X|=|R \symmdiff X|=|X \setminus R|=|X \cap V|.
\end{equation}
Consider now the minimizers of~$f_3$. Since $X$ is a solution, there exists some~$Y_3 \in \arg\min f_3$ with $|Y_3 \symmdiff X| \leq n$. Then $Y_3 \cup \{s\}$ is a minimum $(s,t)$-cut in~$G'$. Recall that all minimum $(s,t)$-cuts of~$G'$ are disjoint from~$R$ due to the construction of~$G'$, so $Y_3 \cap R=\emptyset$.
However, then $R \subseteq Y_3 \symmdiff X$, which implies 
\[n=|R| \leq |Y_3 \symmdiff X| \leq d=n.\]
Thus, it must hold that $R=Y_3 \symmdiff X$. 
This means that $X \cap V=Y_3 \cap V=Y_3$.

Recall that $Y_3 \cup \{s\}$ is a minimum $(s,t)$-cut in~$G'$ disjoint from~$R$, and thus it is also a minimum $(s,t)$-cut in~$G$. Furthermore, $|Y_3|=|X \cap V|=n$ by (\ref{eqn:cutsize}), so $Y_3 \cup \{s\}$ contains half of the vertices in~$G$. This proves that $Y_3 \cup \{s\}$ is a solution for our instance $(G,s,t)$ of \BMC{}, and thus the  reduction is correct.
\end{proof}

Clearly, we can increase the value of parameter~$k$ without changing the solution set of our instance by repeatedly adding a copy of, say, the first submodular function~$f_1$. Hence, we get the following consequence of Theorem~\ref{thm:k=3-almostunique}.

\begin{corollary}
\label{cor:k-atleast-3}
\SR{} is NP-hard for each constant $k\geq 3$, even if $|\LL_i|=1$ for each $i \in [k-1]$.
\end{corollary}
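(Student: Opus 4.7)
The plan is to use a simple padding argument that extends the $k=3$ hardness result of Theorem~\ref{thm:k=3-almostunique} to every constant $k\geq 3$, while preserving the property that all but (at most) one of the submodular functions have a unique minimizer. This is precisely the strategy hinted at in the paragraph preceding the corollary: copies of a function with a unique minimizer can be appended without altering the set of solutions.

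\textbf{Step 1: Base instance.} Start from an arbitrary instance $I_0=(V,g_1,g_2,g_3,d)$ of \SR{} produced by the reduction in Theorem~\ref{thm:k=3-almostunique}, so that $|\arg\min g_1|=|\arg\min g_2|=1$ while $g_3$ may have many minimizers. Since $\NP$-hardness is preserved under polynomial-time reductions and $k$ is a constant, it suffices to exhibit, for every fixed $k\geq 3$, a polynomial-time reduction from $I_0$ to an instance of \SR{} with $k$ submodular functions satisfying $|\LL_i|=1$ for all $i\in[k-1]$.

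\textbf{Step 2: Construction.} Given $I_0$ and a target constant $k\geq 3$, build the instance $I_k=(V,f_1,\dots,f_k,d)$ where
\begin{equation*}
  f_1 = g_1,\qquad f_2 = g_2,\qquad f_i = g_1 \ \text{for } 3\leq i\leq k-1,\qquad f_k = g_3.
\end{equation*}
Each of $f_1,\dots,f_{k-1}$ then has the unique minimizer of $g_1$ or $g_2$, so $|\arg\min f_i|=1$ for every $i\in[k-1]$, as required. The construction is clearly polynomial (indeed trivial) for any fixed $k$.

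\textbf{Step 3: Equivalence of solution sets.} Verify that $I_k$ has a solution if and only if $I_0$ does. Since $f_i=g_1=f_1$ for $3\leq i\leq k-1$, the distance constraint associated with any such $f_i$ is identical to the one for $f_1$; similarly, the constraint for $f_k$ coincides with that for $g_3$. Hence a set $X\subseteq V$ is a solution of $I_k$ if and only if it is a solution of $I_0$. Combined with Step~1, this yields the $\NP$-hardness of \SR{} for every constant $k\geq 3$ under the stated restriction.

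\textbf{Main obstacle.} There is essentially no conceptual obstacle here; the only mild care needed is to arrange the ordering of the functions so that the (possibly) non-unique-minimizer function $g_3$ ends up in position $k$ (rather than somewhere in $[k-1]$), which the construction above does by placing $g_3$ last and using $g_1$ as the padding function. Everything else is an immediate consequence of Theorem~\ref{thm:k=3-almostunique}.
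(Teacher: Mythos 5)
Your proposal is correct and matches the paper's own argument: the paper also obtains the corollary by repeatedly appending a copy of one of the unique-minimizer functions (it mentions $f_1$) to the instance from Theorem~\ref{thm:k=3-almostunique}, so that the non-unique function stays in the last position. Your write-up simply spells out the bookkeeping more explicitly.
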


Note that Theorem~\ref{thm:k=3-almostunique}  is in sharp contrast with Theorem~\ref{thm:k=2}, which offers a polynomial-time algorithm for the case $k=2$.
Furthermore, Theorem~\ref{thm:k=3-almostunique} is tight also in the sense that when \emph{all} input functions have only a bounded number of minimizers, then the problem becomes FPT with parameter~$k$, as we will show in Proposition~\ref{prop:few-minimizers}.

To close this section, we remark that the \centralset{} problem can also be shown to be~$\NP$-hard for $k=3$; see the conference version of this paper for a proof~\cite{KS-SWAT}.

\section{Polynomially many minimizers}\label{sec:polynomialminimizers}
In this section we consider those variants of \SR{} when one or more among the submodular functions given as input have only a bounded number of minimizers.

When \emph{all} submodular functions in the input have unique minimizers, then the problem becomes equivalent with the \textsc{Closest String} problem. 
This problem is $\NP$-complete~\cite{DBLP:journals/mst/FrancesL97} (see also~\cite{KakimuraKKO22}), but fixed-parameter tractable with parameter~$k$, due to a result based on Integer Linear Programming by Gramm et al.~\cite{GrammNR03}. In the context of \SR{}, we obtain the following.
\begin{observation}
\label{obs:all-unique}
If $|\mathcal{L}_i|=1$ for each $i \in [k]$, then
\SR{} is $\NP$-hard and fixed-parameter tractable with respect to parameter~$k$.
\end{observation}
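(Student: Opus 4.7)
The plan is to observe that the restricted problem is polynomially equivalent to \textsc{Closest String} over a binary alphabet, and derive both parts of the claim from the known complexity of that problem. Under the natural identification of a subset $X \subseteq V$ with its characteristic vector $\chi_X \in \{0,1\}^V$, the symmetric difference $|X \symmdiff Y|$ coincides with the Hamming distance between the corresponding binary strings. This identification is the heart of the argument in both directions.

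For fixed-parameter tractability, I would proceed as follows. Given an instance $(V, f_1, \dots, f_k, d)$ of \SR{} in which each $f_i$ has a unique minimizer $Y_i$, we extract the single element of $\arg\min f_i$ from the compact representation of $f_i$ in polynomial time. This yields an equivalent instance of \textsc{Closest String} over the binary alphabet, with strings $\chi_{Y_1}, \dots, \chi_{Y_k} \in \{0,1\}^{|V|}$ and threshold $d$: a set $X$ satisfies $|X \symmdiff Y_i| \leq d$ for every $i \in [k]$ precisely when $\chi_X$ is at Hamming distance at most $d$ from each $\chi_{Y_i}$. Invoking the FPT algorithm of Gramm, Niedermeier, and Rossmanith~\cite{GrammNR03} for \textsc{Closest String} parameterized by the number of strings then gives a $g(k)\cdot|V|^{O(1)}$-time algorithm for our problem.

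For $\NP$-hardness, I would reduce in the opposite direction from \textsc{Closest String} over $\{0,1\}$, which is $\NP$-hard (even in this binary case) by Frances and Litman~\cite{DBLP:journals/mst/FrancesL97}. Given binary strings $s_1, \dots, s_k$ of length $n$ and threshold $d$, set $V = [n]$, let $Y_i = \{j \in [n] : s_i[j] = 1\}$, and define
\[
f_i(X) \;=\; |X \symmdiff Y_i| \;=\; |X| + |Y_i| - 2|X \cap Y_i| \qquad \text{for all } X \subseteq V.
\]
The second expression exhibits $f_i$ as a modular function, so it is in particular submodular, and its unique minimizer is clearly $Y_i$; a value oracle is given trivially from $Y_i$, and the corresponding compact representation consists of the trivial lattice $\{Y_i\}$. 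By construction, $(V, f_1, \dots, f_k, d)$ is a yes-instance of \SR{} if and only if $(s_1, \dots, s_k, d)$ is a yes-instance of \textsc{Closest String}, establishing $\NP$-hardness.

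The only point requiring verification is the submodularity (in fact modularity) of $f_i(X) = |X \symmdiff Y_i|$, which is immediate from the identity above, so no real obstacle arises; the observation is essentially a bookkeeping exercise that packages the known results on \textsc{Closest String} into our setting.
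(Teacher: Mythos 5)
Your proposal is correct and takes essentially the same approach as the paper: identify subsets of $V$ with binary strings of length $|V|$, note that under this identification symmetric difference is Hamming distance, and conclude that the restricted problem is exactly binary \textsc{Closest String}, deriving $\NP$-hardness from Frances--Litman and fixed-parameter tractability in $k$ from Gramm--Niedermeier--Rossmanith. Your write-up spells out the hardness direction in more detail (in particular, verifying that $f_i(X)=|X\symmdiff Y_i|$ is modular, hence submodular, with unique minimizer $Y_i$), but the underlying argument is the same as the paper's.
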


\begin{proof}
If $|\mathcal{L}_i|=1$ for each $i \in [k]$, then there is a unique minimizer~$Y_i \subseteq V$  for each $f_i$, and the problem is equivalent with finding a set~$X \subseteq V$ whose symmetric difference is at most~$d$ from  each of the sets~$Y_i$, $i \in [k]$. This is the \textsc{Closest String} problem over a binary alphabet, shown to be $\NP$-hard under the name \textsc{Minimum Radius} by Frances and Litman~\cite{DBLP:journals/mst/FrancesL97}. Fixed-parameter tractability for parameter~$k$ follows from the FPT algorithm for the \textsc{Closest String} problem due to Gramm et al~\cite{GrammNR03}.
\end{proof}

The FPT algorithm with parameter~$k$ in Observation~\ref{obs:all-unique} can be generalized to the case when \new{we} have a larger bound on the number $|\LL_i|$ of minimizers for each function~$f_i$.

\begin{proposition}
\label{prop:few-minimizers}
    An instance~$I=(V,f_1,\dots,f_k,d)$ of \SR{} can be solved in $|\LL_1|\cdot |\LL_2| \cdot \dots \cdot |\LL_k|\cdot  g(k)|I|^{O(1)}$ for some computable function~$g$.
    In particular: 
    \begin{itemize}
        \item If there is an integer~$\ell$ such that $|\LL_i| \leq \ell$ for each $i \in [k]$, then \SR{} is FPT with parameter $(\ell,k)$.
        \item If there is an integer~$c$ such that $|\LL_i| \leq O(|V|^c)$ for each $i \in [k]$, then \SR{} can be solved in polynomial time for each constant~$k$, i.e., it is in $\mathsf{XP}$ with parameter~$k$.
    \end{itemize}
\end{proposition}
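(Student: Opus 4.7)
The plan is to combine an enumeration over tuples of minimizers with the known FPT algorithm for \textsc{Closest String}. Once we fix a candidate minimizer $Y_i \in \LL_i$ for each $i \in [k]$, the remaining task is precisely to decide whether there exists a set $X \subseteq V$ satisfying $|X \symmdiff Y_i| \leq d$ for all $i \in [k]$; identifying subsets of $V$ with binary strings of length $|V|$, this is exactly the \textsc{Closest String} problem on $k$ input strings with radius $d$, which by Gramm, Niedermeier and Rossmanith~\cite{GrammNR03} can be solved in time $g(k)\cdot |V|^{O(1)}$ for some computable function~$g$. Iterating the sub-routine over all tuples $(Y_1,\dots,Y_k) \in \LL_1 \times \dots \times \LL_k$ yields the claimed running time $|\LL_1|\cdots |\LL_k|\cdot g(k)\cdot |I|^{O(1)}$.

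The one step that requires a brief argument is that we can actually list the elements of each $\LL_i$ in time polynomial in $|V|$ and $|\LL_i|$ from the compact representation $G(\LL_i)$ supplied as input. By Theorem~\ref{thm:Birkhoff}(3)--(4), every minimizer of $f_i$ arises as $U^i_0 \cup \bigcup_{j \in J}U^i_j$ for some $J \subseteq [b_i]$ such that the vertex set $\{U^i_0\} \cup \{U^i_j : j \in J\}$ has no outgoing edge in $G(\LL_i)$; equivalently, these are the down-sets (ideals) of the acyclic digraph $G(\LL_i)$. Standard enumeration of ideals of a poset can be performed with polynomial delay, so listing $\LL_i$ takes $|\LL_i|\cdot |V|^{O(1)}$ time.

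Putting the two pieces together: for each tuple $(Y_1,\dots,Y_k)$ produced by the enumeration, run the algorithm of Gramm et al.\ on the binary strings representing $Y_1,\dots,Y_k$ with radius~$d$; output the first witnessing set $X$ that is found, or report ``No'' if every tuple fails. The total running time is $|\LL_1|\cdots |\LL_k|\cdot g(k)\cdot |I|^{O(1)}$, which immediately gives the two special cases: if $|\LL_i|\leq \ell$ for every $i$ then $|\LL_1|\cdots |\LL_k| \leq \ell^k$, so the algorithm is FPT in $(\ell,k)$; and if $|\LL_i|\leq O(|V|^c)$ for every $i$ then $|\LL_1|\cdots |\LL_k| = O(|V|^{ck})$, so the algorithm runs in polynomial time for every fixed~$k$, placing the problem in $\mathsf{XP}$ parameterized by~$k$.

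The main (very mild) obstacle is just to justify the enumeration of $\LL_i$ from its Birkhoff representation; the rest is a direct appeal to \textsc{Closest String}. No additional structural properties of submodular functions are needed beyond what is already captured by Theorem~\ref{thm:Birkhoff} and Lemma~\ref{lem:argmin-lattice}.
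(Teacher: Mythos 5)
Your proof matches the paper's: both enumerate all tuples $(Y_1,\dots,Y_k)\in\LL_1\times\cdots\times\LL_k$ by brute force and solve an instance of binary \textsc{Closest String} with radius~$d$ for each tuple via Gramm et al. The only addition on your side is the explicit (and correct) remark that the minimizers in each $\LL_i$ can be listed with polynomial delay as ideals of the poset given by the compact representation $G(\LL_i)$ — a detail the paper leaves implicit.
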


\begin{proof}
It suffices to observe that the brute force approach of trying all possible minimizers $Y_i \in \LL_i$ for each $i \in [k]$ yields $|\LL_1|\cdot |\LL_2| \cdot \dots \cdot |\LL_k|$ possibilities, and for each possible choice of the minimizers $Y_1,\dots,Y_k$ we need to solve an instance of (binary) \textsc{Closest String}. Since the latter task can be carried out in FPT time with parameter~$k$~\cite{GrammNR03}, we obtain the claimed running time. 
\end{proof}

It should be noted that the case when $|\LL_i|$ is bounded for 
\new{every} $i\in [n]$ includes \SR{} when each submodular function is the cut function for an undirected graph~\cite{dinitz1976structure}.
In fact, it is known that for \new{every} undirected graph~$G$ with $n$ vertices the number of minimum cuts is bounded by $O(n^2)$~\cite{dinitz1976structure}.

Inspecting Proposition~\ref{prop:few-minimizers}, a natural question arises: assuming that each function has only polynomially many minimizers, i.e., $|\LL_i|=O(|V|^c)$ for some constant~$c$, is \SR{} solvable in FPT time when parameterized by~$k$? Or is the brute force approach behind Proposition~\ref{prop:few-minimizers} unavoidable?
In Section~\ref{sec:W1hardness}, we show 
in Theorem~\ref{thm:W1hard-k} that, even under a strong assumption on the number of minimizers for each~$f_i$, the problem is $\mathsf{W}[1]$-hard when parameterized by~$k$, which implies that no significant speed-up seems possible. 

We remark that although Proposition~\ref{prop:few-minimizers} proves that for fixed values of~$k$ \SR{} becomes polynomial-time solvable under the assumption that each input function has only a bounded number of minimizers, 
the algorithm behind Proposition~\ref{prop:few-minimizers} seems impractical: this is not only due to the combinatorial explosion in the running time of its brute force approach, but also to the fact that although \textsc{Closest String} is fixed-parameter tractable with parameter~$k$, denoting the number of strings in that context, the FPT algorithm given by Gramm et al.~\cite{GrammNR03} is based on Lenstra's famous result~\cite{Lenstra83} on Integer Linear Programs with fixed dimension, and has been reported to be impractical already for small values of~$k$ (see  \cite{GrammNR03,BoucherBD,AmirPR16} for a series of efforts to obtain efficient combinatorial algorithms for a fixed number of strings).

Contrasting these feasibility issues concerning the $\mathsf{XP}$ algorithm in Proposition~\ref{prop:few-minimizers} and the  $\mathsf{W}[1]$-hardness result of Theorem~\ref{thm:W1hard-k} for parameter~$k$, 
in Section~\ref{sec:alg-unique} we will show
that under the assumption that at least one of the input functions has only a bounded number of minimizers, \SR{} becomes fixed-parameter tractable with parameter~$d$ (Theorem~\ref{thm:unique}).

\subsection{Polynomially many minimizers for each function: $\mathsf{W}[1]$-hardness for~$k$}
\label{sec:W1hardness}

In this section we show the following result.

\begin{theorem}
\label{thm:W1hard-k}
\SR{} is $\mathsf{W}[1]$-hard with parameter~$k$ even on instances $(V,f_1,\dots,f_k,d)$ where $|\LL_i|\leq|V|$.
\end{theorem}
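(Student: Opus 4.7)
I would prove the $\mathsf{W}[1]$-hardness by a parameterized reduction from \textsc{Multicolored Clique}, which is $\mathsf{W}[1]$-hard when parameterized by the number $\kappa$ of color classes. Given an instance $H=(V_1\cup\dots\cup V_\kappa,E_H)$ of \textsc{Multicolored Clique}, I plan to construct an \SR{} instance with $k=\kappa+\binom{\kappa}{2}$ submodular functions: a \emph{vertex-selector} $f_a$ for each color $a\in[\kappa]$, whose minimizers are in bijection with the vertices of~$V_a$, and an \emph{edge-selector} $f_{ab}$ for each pair $a<b$, whose minimizers are in bijection with the edges of $H$ between $V_a$ and $V_b$. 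Each such function is realized as (an extension of) the cut function of a directed $(s,t)$-path gadget: the chain of minimum $(s,t)$-cuts of a directed $(s,t)$-path of $n$ edges yields a length-$n$ chain of polynomially many minimum cuts, forming a distributive lattice of exactly $n$ elements, which is put in bijection with the vertices (respectively edges) to be selected. The gadgets live in disjoint blocks of the universe~$V$, and each $f_i$ is extended to the whole domain via a linear submodular penalty for elements outside its own block, so that every minimizer of $f_i$ is supported inside that block.

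To enforce that the chosen vertices and edges are consistent — so that the chosen edge between colors $a$ and $b$ has as its endpoints the chosen vertices of $V_a$ and $V_b$ — the gadgets are connected through shared \emph{witness} elements. For each edge $e=\{u,v\}$ of~$H$, a witness $w_e$ is attached at the position of $u$'s index in the vertex-chain for color $a$, at the position of $v$'s index in the vertex-chain for color $b$, and at the position of $e$ in the edge-chain for the pair $\{a,b\}$. These attachments are designed so that each minimizer of a given chain contains $w_e$ exactly when the chain position is at or above the corresponding threshold; a common anchor~$X$ close to all chosen minimizers must then agree on its inclusion of each~$w_e$, thereby forcing the three chain positions to be mutually consistent, i.e., the selected edge to have the selected vertices as endpoints. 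Choosing the distance threshold $d$ tightly enough that any inconsistent configuration costs strictly more than~$d$, I would show that the original instance has a multicolored clique if and only if the constructed \SR{} instance admits a solution. The $|\LL_i|\leq|V|$ condition can always be enforced by padding~$V$ with dummy elements, kept out of every minimizer via an additional linear penalty.

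The main obstacle will be the concrete realization of the gadgets together with the witnesses: the witness attachments must preserve the property that each $\arg\min f_i$ is a polynomial-size distributive lattice (preferably still a chain), while carrying enough information across blocks to force consistency. The completeness direction is direct — given a multicolored clique, one assembles $X$ from the appropriate chain prefixes and witness elements and checks that the distance from $X$ to each chosen minimizer equals the budget~$d$. The harder soundness direction would proceed by a block-by-block symmetric-difference accounting, showing that under the distance bound any valid tuple $(X,Y_1,\dots,Y_k)$ must select, in each block, a genuine position-minimizer, and that the witnesses present in~$X$ force the selected positions across blocks to correspond to a common set of vertex indices, thereby yielding a multicolored clique in~$H$.
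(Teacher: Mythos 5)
Your high-level plan follows the same route as the paper: a parameterized reduction from \textsc{Multicolored Clique} in which each selector function $f_i$ has a polynomial-size chain of minimizers realized as the minimum $(s,t)$-cuts of a directed path, and in which consistency across selectors is enforced by shared structure.  However, there are concrete gaps in the proposal that would derail the soundness direction.

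First, your design is internally inconsistent.  You state that ``the gadgets live in disjoint blocks of the universe $V$'' and ``every minimizer of $f_i$ is supported inside that block,'' yet you also want a witness $w_e$ to appear inside minimizers of the vertex-chain for color~$a$, the vertex-chain for color~$b$, and the edge-chain for $\{a,b\}$.  That forces $w_e$ to lie in all three blocks, contradicting disjointness.  Moreover, disjoint supports make the whole scheme collapse: if minimizers of $f_a$ and $f_b$ are supported on disjoint blocks $B_a$ and $B_b$, then for any anchor $X$ we have $|X\symmdiff Y_a|\ge |X\cap B_b|$ (since $Y_a\cap B_b=\emptyset$) and $|X\symmdiff Y_b|\ge |Y_b\setminus X|$, and one checks that $|X\symmdiff Y_a|+|X\symmdiff Y_b|\ge |Y_b|$; so $d$ would have to be on the order of the block sizes, destroying the tight budget you later rely on.  The paper avoids this by making all minimizers share a large common core ($R$ and $D$, each of size $|V|+|E|$, appear in every minimizer) and by giving each item a $\pm$ pair ($v^+,v^-$, $e^+,e^-$), so that distances from $X$ to different minimizers cancel telescopically rather than accumulate.

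Second, your central soundness step — ``a common anchor $X$ close to all chosen minimizers must agree on its inclusion of each $w_e$'' — is not implied by the distance bound alone.  With budget $d>0$, $X$ may perfectly well disagree with a given $Y_i$ on $w_e$ and absorb this in the slack.  The paper forces zero slack via a dedicated \emph{initial gadget}: five functions with unique minimizers chosen so that $|Y^{\init}_1\symmdiff Y^{\init}_2|=2d$, which pins $R\subseteq X$, $X\cap R'=\emptyset$, $D\subseteq X$, $X\cap D'=\emptyset$, and exactly $k'=k+\binom{k}{2}$ elements of $V^+\cup E^+$ in $X$ (symmetrically for $V^-\cup E^-$).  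Only once this tightness is established does each chain-gadget inequality reduce to $d\ge d + c - 2\cdot(\text{indicators})$ with $c$ a constant, forcing all indicators to be~$1$, which is what yields consistency.  Your proposal contains no mechanism achieving this preliminary tightness, so the ``agree on $w_e$'' claim is a gap rather than a derived fact.  Finally, the paper uses \emph{two} chain gadgets per selector, one ordered forward and one backward ($P^\vsel_i$ and $Q^\vsel_i$), to pin the selected position from both sides; a single chain per selector, as you sketch, only gives one inequality ($\le$ or $\ge$) and would not pin the position exactly.  Fixing the proposal would essentially require reinventing the initial gadget, the $\pm$ encoding, and the paired forward/backward chains — i.e.\ the bulk of the paper's construction.
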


\begin{proof}
We are going to present a parameterized reduction from the $\mathsf{W}[1]$-hard \textsc{Multicolored Clique} problem~\cite{FellowsHRV-multicolored-hardness}.
In \textsc{Multicolored Clique}, the input is a graph $G=(V,E)$ and a parameter~$k \in \mathbb{N}$, with the vertex set of~$G$ partitioned into independent sets~$V_1, \dots, V_k$, and the task is to find a clique of size~$k$ in~$G$ that (necessarily) contains a vertex from each set~$V_i$. 
For indices $i,j \in [k]$ with $i \neq j$, let $E_{\{i,j\}}=\{uv \in E: u \in V_i,v \in V_j\}$.
We may assume without lowering the order of minimization that $|V_i|=n$ for each $i \in [k]$, and $|E_{\{i,j\}}|=m$ for each $i,j \in [k]$ with $i \neq j$, for some integers~$n$ and~$m$.\footnote{We may ensure this property by first adding the necessary number of  edges as a matching on newly introduced vertices into each edge set~$E_{\{i,j\}}$, and then further adding the necessary number of isolated vertices to each set~$V_i$.}
Hence, we write $V_i=\{v_{i,1},\dots,v_{i,n}\}$ for each $i \in [k]$, and we write $E_{\{i,j\}}=\{e_{\{i,j\},1},\dots,e_{\{i,j\},m} \}$.
For some $v \in V_i$ and $j \in [k] \setminus \{i\}$, we will also use the notation $E_j(v)=\{uv \in E: u \in V_j\}$.

To construct our universe~$U$, we add $e^+$ and $e^-$ for each edge~$e \in E$, and similarly, we add $v^+$ and $v^-$ for each vertex~$v \in V$. 
For any subset~$X \subseteq V$ of vertices, we set $X^+=\{v^+:v \in X\}$ and $X^-=\{v^-:v \in X\}$, and we define $F^+$ and $F^-$ for each edge set~$F \subseteq E$ analogously.
We further add four disjoint sets~$R, R', D$ and~$D'$ to~$U$ of sizes $|R|=|R'|=|D|=|D'|=|V|+|E|$, so that 
\[U= V^+ \cup V^-
 \cup E^+ \cup E^- \cup R \cup R' \cup D \cup D'.
\]
We set $d=2(|V|+|E|)$ as our threshold. 
We are going to construct several gadgets:
\begin{itemize}
    \item an \emph{initial gadget},
    \item $k$ \emph{vertex selection gadgets}, 
    \item $\binom{k}{2}$ \emph{edge selection gadgets},  and
    \item $k(k-1)$ \emph{incidency gadgets},    
\end{itemize} each of which comprise submodular functions given through a compact representation.
We fix an arbitrary ordering over~$R$ and~$R'$, and we denote by $R[\ell]$ the size~$\ell$ subset of~$R$
containing the first $\ell$ elements according to this ordering for some $\ell \in \mathbb{N}$; we define $R'[\ell]$ analogously. 
Furthermore, we set $k'=k+\binom{k}{2}$.

\subparagraph{Initial gadget.} 
This gadget contains functions $f^{\init}_1, f^{\init}_2, \dots, f^{\init}_5$, 
each of which has a unique minimizer. We define $f^{\init}_h$ for $h \in [5]$ by defining their unique minimizers $Y^{\init}_h$ as follows:
\begin{align*}
    Y^\init_1 &=  R \cup V^+ \cup V^- \cup E^+ \cup E^- \cup D \cup D', \\
    Y^\init_2 &=  R, \\
    Y^\init_3 &=  R \cup R' \cup V^+ \cup V^- \cup E^+ \cup E^- \cup D, \\
    Y^\init_4 &=  R \cup R' \cup D,\\
    Y^\init_5 &=  R[k'] \cup (R' \setminus R'[k']) \cup V^- \cup E^- \cup D. \\
\end{align*}

\subparagraph{Vertex selection gadget for some $i \in [k]$.} 
This gadget contains functions~$f^\vsel_{i,1}$ and~$f^\vsel_{i,2}$ defined through their unique minimizers~$Y^\vsel_{i,1}$ and~$Y^\vsel_{i,2}$, as well as two functions $f^\vsel_{i,3}$ and $f^\vsel_{i,4}$. These latter two functions are defined via their compact representations, $G^\vsel_{i,3}$ and~$G^\vsel_{i,4}$.
The graph $G^\vsel_{i,3}$ (or $G^\vsel_{i,4}$) over a partition of~$U \cup \{s,t\}$ will be a simple path $P^\vsel_i$ (or $Q^\vsel_i$, respectively).
For simplicity, given a partition $(U_0,U_1,\dots,U_b,U_\infty)$ of $U \cup \{s,t\}$ with $s \in U_0$ and $t \in U_\infty$, we denote a path visiting these nodes in the reverse order as $U_0 \leftarrow U_1 \leftarrow \dots \leftarrow U_b \leftarrow U_\infty$; recall that if this path is a  compact representation of some submodular function, then its minimizers are exactly the sets of the form $U_0 \cup U_1 \cup \dots \cup U_j$ for some $0 \leq j \leq b$.
\begin{align*}
    Y^\vsel_{i,1} &=  R'[d/2-n-2k'+2] \cup V^+_i  \cup V^- \cup E^- \cup D, \\
    Y^\vsel_{i,2} &=  R'[d/2-n-2k'+2]  \cup V^- \setminus V_i^- \cup E^- \cup D, \\
    P^\vsel_i &= \{s\} \cup R'[d/2-n-2k'+3] \cup V^- \setminus V_i^- \cup E^- \cup D \cup \{v_{i,1}^+\} 
            \\
            & \qquad 
            \leftarrow \{v_{i,2}^+,v_{i,1}^-\}  
            \leftarrow \{v_{i,3}^+,v_{i,2}^-\} \leftarrow  
            \dots
            \leftarrow \{v_{i,n}^+,v_{i,n-1}^-\} 
         \\ & \qquad 
            \leftarrow \{v_{i,n}^-, t\} \cup R \cup R' \setminus R'[d/2-n-2k'+3] \cup V^+ \setminus V_i^+ \cup E^+ \cup D', \\
    Q^\vsel_i &= \{s\} \cup R'[d/2-n-2k'+3] \cup V^- \setminus V_i^- \cup E^- \cup D \cup \{v_{i,n}^+\} 
            \\
            & \qquad 
            \leftarrow \{v_{i,n-1}^+,v_{i,n}^-\} 
            \leftarrow \{v_{i,n-2}^+,v_{i,n-1}^-\} \leftarrow  
            \dots
            \leftarrow \{v_{i,1}^+,v_{i,2}^-\} 
            \\ & \qquad 
            \leftarrow \{v_{i,1}^-,t\} \cup R \cup R' \setminus R'[d/2-n-2k'+3] \cup V^+ \setminus V_i^+ \cup E^+ \cup D'.
\end{align*}

\subparagraph{Edge selection gadget for some $i,j \in [k]$ with $i < j$.} 
This gadget is analogous to the vertex selection gadgets, and contains functions~$f^\vsel_{\{i,j\},1}$ and~$f^\vsel_{\{i,j\},2}$ defined through their unique minimizers~$Y^\vsel_{\{i,j\},1}$ and~$Y^\vsel_{\{i,j\},2}$, as well as two functions $f^\vsel_{i,3}$ and $f^\vsel_{i,4}$. These latter two functions are defined via their compact representations, $G^\vsel_{\{i,j\},3}$ and~$G^\vsel_{\{i,j\},4}$.
The graph $G^\vsel_{\{i,j\},3}$ (or $G^\vsel_{\{i,j\},4}$) over a partition of~$U \cup \{s,t\}$ will be a simple path $P^\esel_{\{i,j\}}$ (or $Q^\esel_{\{i,j\}}$, respectively).
\begin{align*}
    Y^\esel_{\{i,j\},1} &=  R'[d/2-m-2k'+2] \cup E^+_{\{i,j\}}  \cup V^- \cup E^- \cup D, \\
    Y^\esel_{\{i,j\},2} &=  R'[d/2-m-2k'+2]  \cup V^- \cup E^- \setminus E_{\{i,j\}}^- \cup D, \\
    P^\esel_{\{i,j\}} &= \{s\} \cup R'[d/2-m-2k'+3] \cup V^- \cup E^- \setminus E_{\{i,j\}}^- \cup D  \cap \{e_{{\{i,j\}},1}^+\} 
             \\
            & \qquad 
            \leftarrow \{e_{{\{i,j\}},2}^+,e_{{\{i,j\}},1}^-\}  
            \leftarrow \{e_{{\{i,j\}},3}^+,e_{{\{i,j\}},2}^-\}  
            \leftarrow 
            \dots
            \leftarrow \{e_{{\{i,j\}},m}^+,e_{{\{i,j\}},m-1}^-\} 
            \\ & \qquad 
            \leftarrow \{e_{{\{i,j\}},m}^-,t\} \cup R \cup R' \setminus R'[d/2-m-2k'+3] \cup V^+ \cup E^+ \setminus E_{\{i,j\}}^+ \cup D', \\
    Q^\esel_{\{i,j\}} &= \{s\} \cup R'[d/2-m-2k'+3] \cup V^- \cup E^- \setminus E_{\{i,j\}}^- \cup D  \cup \{e_{\{i,j\},m}^+\} 
             \\
            & \qquad \leftarrow \{e_{\{i,j\},m-1}^+,e_{\{i,j\},m}^-\} 
            \leftarrow \{e_{\{i,j\},m-2}^+,e_{\{i,j\},m-1}^-\} \leftarrow  
            \dots 
            \leftarrow \{e_{\{i,j\},1}^+,e_{\{i,j\},2}^-\} 
             \\
            & \qquad 
            \leftarrow \{e_{\{i,j\},1}^-,t\} \cup R \cup R' \setminus R'[d/2-m-2k'+3] \cup V^+ \cup E^+ \setminus E_{\{i,j\}}^+ \cup D'.
\end{align*}

\subparagraph{Incidency gadget for some $i,j \in [k]$ with $i \neq j$.} 
This gadget uses a similar idea as the vertex and edge selection gadgets, but contains only a single submodular function~$f^\inc_{(i,j)}$, given via its compact representation $G^\inc_{(i,j)}$. The graph $G^\inc_{(i,j)}$ over a partition of $U \cup \{s,t\}$ will be a simple path $P^\inc_{(i,j)}$.

\begin{align*}
    \!\!\!\! P^\inc_{(i,j)} &= \{s\} \cup R'[d/2-m-n-2k'+6] \cup V^- \setminus V^-_i \cup E^- \setminus E^-_{\{i,j\}} \cup D \cup \{v_{i,1}^+\} \cup E^+_j(v_{i,1}) \\
    & \qquad
        \leftarrow \{v_{i,2}^+,v_{i,1}^-\} \cup E^+_j(v_{i,2}) \cup E^-_j(v_{i,1}) 
        \leftarrow \{v_{i,3}^+,v_{i,2}^-\} \cup E^+_j(v_{i,3}) \cup E^-_j(v_{i,2}) \leftarrow \dots \\
    & \qquad 
        \leftarrow \{v_{i,n}^+,v_{i,n-1}^-\} \cup E^+_j(v_{i,n})  \cup E^-_j(v_{i,n-1}) 
        \\
    & \qquad 
        \!\begin{multlined}[t][12cm]
        \leftarrow \{v_{i,n}^-,t\} \cup E^-_j(v_{i,n})  \cup R \cup R' \setminus R'[d/2-m-n-2k'+6] \\
        \cup V^+ \setminus V^+_i \cup E^+ \setminus E_{\{i,j\}}^+ \cup D'.\end{multlined}
\end{align*}

This finishes the construction of our instance~$I$ of \SR{}. Note that the number of submodular functions given in~$I$ is $5+4k+4\binom{k}{2}+k(k-1)$, which is a function of $k$. Therefore, the presented reduction is a parameterized reduction.
Observe also that each of the submodular functions constructed has less than~$|U|$ minimizers.

We claim that~$I$ admits a solution if and only if $G$ contains a clique of size~$k$.

\subparagraph{Direction ``$\Longrightarrow$'':}
Assume first that $I$ admits a solution~$X \subseteq U$.
As $X$ is a solution, we know $|X \symmdiff Y^\init_h| \leq d$ for each $h \in [6]$. In particular, since this holds for $h=1,2$ we get that 
\begin{align*}
2d & \geq |X \symmdiff Y^\init_1| + |X \symmdiff Y^\init_2| \geq |Y^\init_1 \symmdiff Y^\init_2|=|V^+ \cup V^- \cup E^+ \cup E^- \cup D \cup D'| \\
& = 4(|V|+|E|)=2d.
\end{align*}
Hence, all inequalities above must hold with equality. Hence, $X \symmdiff Y^\init_h \subseteq Y^\init_1 \symmdiff Y^\init_2$ for $h=1,2$, which implies that 
\begin{equation}
\label{eqn:solX-vs-RandR'}
    R \subseteq X \qquad \qquad \textrm{and} \qquad \qquad X \cap R' =\emptyset
\end{equation} 
Consider now $Y^\init_3$.
Taking into account~(\ref{eqn:solX-vs-RandR'}), we get 
{\allowdisplaybreaks
\begin{align*}
d & \geq |X \symmdiff Y^\init_3| = |X \symmdiff (R \cup R' \cup V^+ \cup V^- \cup E^+ \cup E^- \cup D)|= \\
& = \!\begin{multlined}[t][12.0cm]
|X \setminus (R \cup R' \cup V^+ \cup V^- \cup E^+ \cup E^- \cup D)| \\
+ 
|(R \cup R' \cup V^+ \cup V^- \cup E^+ \cup E^- \cup D) \setminus X| = 
\end{multlined}\\
&= |X \cap D'| + |R'| + |(V^+ \cup V^- \cup E^+ \cup E^-) \setminus X| + |D \setminus X| \\
& \geq 0+d/2+|(V^+ \cup V^- \cup E^+ \cup E^-) \setminus X|+0,
\end{align*}
}
which implies 
\begin{equation}
\label{eqn:init3}
    |(V^+ \cup V^- \cup E^+ \cup E^-) \setminus X| \leq d/2.
\end{equation} 
Arguing similarly for $Y^\init_4$ we have
\begin{align}
\notag d & \geq |X \symmdiff Y^\init_4| = |X \symmdiff (R \cup R' \cup D)|= \\
\notag  &= |X \setminus (R \cup R' \cup D)|+ |(R \cup R' \cup D) \setminus X| \\
\notag &= |X \cap D'|+|X \cap (V^+ \cup V^- \cup E^+ \cup E^-)| + |R'| + |D \setminus X|  \\
& \geq 0+|X \cap (V^+ \cup V^- \cup E^+ \cup E^-)|+d/2+0, 
\label{eqn:solX-vs-DcupD'}
\end{align}
which implies 
\begin{equation}
\label{eqn:init4}    
 |(V^+ \cup V^- \cup E^+ \cup E^-) \cap X| \leq d/2.
\end{equation}
However, as $|V^+ \cup V^- \cup E^+ \cup E^-|=2d$, both (\ref{eqn:init3}) and~(\ref{eqn:init4}) must hold with equality, which  in turn means that (\ref{eqn:solX-vs-DcupD'}) must also hold with equality. This means that 
\begin{equation}
\label{eqn:solX-vs-DandD'}
    D \subseteq X \qquad \qquad \textrm{and} \qquad \qquad X \cap D' =\emptyset.
\end{equation} 

Consider now the function~$f^\init_5$ and its unique minimizer~$Y^\init_5$. 
Using~(\ref{eqn:solX-vs-RandR'}) and~(\ref{eqn:solX-vs-DandD'}) we obtain
\begin{align*}
    d & \geq |X \symmdiff Y^\init_5|=|X \symmdiff (R[k'] \cup R' \setminus R'[k'] \cup V^- \cup E^- \cup D)| \\
    & = \!\begin{multlined}[t][12cm]
    |X \setminus (R[k'] \cup R' \setminus R'[k'] \cup V^- \cup E^-  \cup D)| \\
    +|(R[k'] \cup R' \setminus R'[k'] \cup V^- \cup E^-  \cup D) \setminus X| 
    \end{multlined}    \\
    & = |R \setminus R[k']|+ |X \cap (V^+ \cup E^+)| + 
    |R' \setminus R'[k']|+|(V^- \cup E^-) \setminus X| \\    
    &= d/2-k'+ |X \cap (V^+ \cup E^+)| + d/2-k'+|(V^- \cup E^-) \setminus X|, 
\end{align*}
which implies
\begin{equation}
\label{eqn:k1+k2}
    |X \cap (V^+ \cup E^+)| + |(V^- \cup E^-) \setminus X| \leq 2k'.
\end{equation} 
Recall now that both (\ref{eqn:init3}) and~(\ref{eqn:init4}) hold with equality: 
\begin{align*}
    d/2 &=|(V^+ \cup E^+ \cup V^- \cup E^+) \setminus X|=\\
     &= |V^+ \cup E^+| - |(V^+ \cup E^+) \cap X|+
    |(V^- \cup E^-) \setminus X| \\
    & = d/2 - |(V^+ \cup E^+) \cap X| +|(V^- \cup E^-) \setminus X|; \\
    d/2 &=|(V^+ \cup E^+ \cup V^- \cup E^+) \cap X|\\
    &= |(V^+ \cup E^+) \cap X| + 
    |V^- \cup E^-| - |(V^- \cup E^-) \setminus X| \\
    & = d/2 + |(V^+ \cup E^+) \cap X| - |(V^- \cup E^-) \setminus X|.
\end{align*}
Hence, from (\ref{eqn:k1+k2}) we obtain that
\begin{equation}
\label{eqn:bound-k'}
 |(V^+ \cup E^+) \cap X|=|(V^- \cup E^-) \setminus X|=k'.
\end{equation} 

Next, we prove a series of claims that show how $X$ intersects~$V^- \cup E^- \cup V^+ \cup E^+$.
\begin{claim}
$|X \cap V_i^+| \geq 1$  for each $i \in [k]$.
\label{clm:select-Vi+}        
\end{claim}
\begin{claimproof}
Consider some vertex selection gadget, and the function~$f^\vsel_{i,1}$ for some~$i \in [k]$ with its unique minimizer~$Y^\vsel_{i,1}$. 
Keep in mind that $R \cup D \subseteq X$ but both~$R'$ and~$D'$ are disjoint from~$X$ due to~(\ref{eqn:solX-vs-RandR'}) and~(\ref{eqn:solX-vs-DandD'}). 
Since~$X$ is a solution, we know
\begin{align*}
    d & \geq |X \symmdiff Y^\vsel_{i,1}| = |X \symmdiff (R'[d/2-n-2k'+2] \cup V^+_i  \cup V^- \cup E^- \cup D)| \\
    & = \!\begin{multlined}[t][12cm]
    |X \setminus (R'[d/2-n-2k'+2] \cup V^+_i  \cup V^- \cup E^- \cup D)| \\
    + |(R'[d/2-n-2k'+2] \cup V^+_i  \cup V^- \cup E^- \cup D) \setminus X|  
    \end{multlined} \\
    & = \!\begin{multlined}[t][12cm] 
    |R| + |X \cap (V^+ \setminus V_i^+ \cup E^+)| \\
    + |R'[d/2-n-2k'+2]| + |V^+_i \setminus X| + |(V^- \cup E^- ) \setminus X|
    \end{multlined} \\
    & = \!\begin{multlined}[t][12cm] 
    d/2 + |X \cap (V^+ \cup E^+)| - |X \cap V^+_i|  \\
    + (d/2-n-2k'+2) + n-|V_i^+ \cap X| + |(V^- \cup E^- ) \setminus X|
    \end{multlined} \\
    &= d/2 +k' - |X \cap V_i^+| + (d/2-n-2k'+2)+ n- |X \cap V^+_i| +k'
    \\
    &= d+2-2|X \cap V_i^+|
\end{align*}
where the penultimate equality follows from~(\ref{eqn:bound-k'}), proving our claim.
\end{claimproof}

\begin{claim}
$|V_i^- \setminus X| \geq 1$  for each $i \in [k]$.
\label{clm:select-Vi-}    
\end{claim}
\begin{claimproof}
Using for~$f^\vsel_{i,2}$ the same reasoning we used in Claim~\ref{clm:select-Vi+}
 yields
\begin{align*}
    d & \geq |X \symmdiff Y^\vsel_{i,2}| = |X \symmdiff (R'[d/2-n-2k'+2]  \cup V^- \setminus V_i^- \cup E^- \cup D)| \\
    & = \!\begin{multlined}[t][12cm]
    |X \setminus (R'[d/2-n-2k'+2]  \cup V^- \setminus V_i^- \cup E^- \cup D)| \\
    + |(R'[d/2-n-2k'+2]  \cup V^- \setminus V_i^- \cup E^- \cup D) \setminus X|  
    \end{multlined} \\
    & = \!\begin{multlined}[t][12cm] 
    |R| + |X \cap (V^+ \cup E^+)| + |X \cap V_i^-| \\
    + |R'[d/2-n-2k'+2]| + |(V^- \cup E^- ) \setminus X| - |V^-_i \setminus X|
    \end{multlined} \\
    & = \!\begin{multlined}[t][12cm] 
    d/2 + |X \cap (V^+ \cup E^+)| +n - |V^-_i \setminus X| \\
    + (d/2-n-2k'+2) + |(V^- \cup E^- ) \setminus X| - |V^-_i \setminus X| 
    \end{multlined} \\
    &= d/2 +k' +n- |V^-_i \setminus X| + (d/2-n-2k'+2)+k' - |V^-_i \setminus X|
    \\
    &= d+2-2|V^-_i \setminus X|.
\end{align*}
This proves the claim.
\end{claimproof}

\begin{claim}
$|X \cap E^+_{\{i,j\}}| \geq 1$ for each $i,j \in [k]$ with $i<j$.
\label{clm:select-Eij+}    
\end{claim}
\begin{claimproof}
We can argue analogously for the function~$f^\esel_{\{i,j\},1}$, contained in an edge selection gadget, as in the proof of Claim~\ref{clm:select-Vi+}, using again (\ref{eqn:solX-vs-RandR'}), (\ref{eqn:solX-vs-DandD'}), and~(\ref{eqn:bound-k'}):
{\allowdisplaybreaks
\begin{align*}
    d & \geq  |X \symmdiff Y^\esel_{\{i,j\},1}| =|X \symmdiff (R'[d/2-m-2k'+2] \cup E^+_{\{i,j\}}  \cup V^- \cup E^- \cup D)| \\
    & = \!\begin{multlined}[t][12cm]
    |X \setminus (R'[d/2-m-2k'+2] \cup E^+_{\{i,j\}}  \cup V^- \cup E^- \cup D)| \\
    + |(R'[d/2-m-2k'+2] \cup E^+_{\{i,j\}}  \cup V^- \cup E^- \cup D) \setminus X|  
    \end{multlined} \\
    & = \!\begin{multlined}[t][12cm] 
    |R| + |X \cap (V^+ \cup E^+ \setminus E^+_{\{i,j\}})| \\
    + |R'[d/2-m-2k'+2]| + | E^+_{\{i,j\}} \setminus X| + |(V^- \cup E^- ) \setminus X|
    \end{multlined} \\
    & = \!\begin{multlined}[t][12cm] 
    d/2 + |X \cap (V^+ \cup E^+)| - |X \cap E^+_{\{i,j\}}|  \\
    + (d/2-m-2k'+2) + m-|E^+_{\{i,j\}} \cap X| + |(V^- \cup E^- ) \setminus X|
    \end{multlined} \\
    &= d/2 +k' - |X \cap E^+_{\{i,j\}}| + (d/2-m-2k'+2)+ m- |X \cap E^+_{\{i,j\}}| +k'
    \\
    &= d+2-2|X \cap E^+_{\{i,j\}}|
\end{align*}
}
This proves the claim.
\end{claimproof}

\begin{claim}
$|E_{\{i,j\}}^- \setminus X| \geq 1$ for each $i,j \in [k]$ with $i<j$.
\label{clm:select-Eij-}        
\end{claim}
\begin{claimproof}
Using for~$f^\esel_{\{i,j\},2}$ the same reasoning 
as in the proof of Claim~\ref{clm:select-Vi-} yields
{\allowdisplaybreaks
\begin{align*}
    d & \geq |X \symmdiff Y^\esel_{\{i,j\},2}| = |X \symmdiff (R'[d/2-m-2k'+2]  \cup V^- \cup E^- \setminus E_{\{i,j\}}^- \cup D)| \\
    & = \!\begin{multlined}[t][12cm]
    |X \setminus (R'[d/2-m-2k'+2]  \cup V^- \cup E^- \setminus E_{\{i,j\}}^- \cup D)| \\
    + |(R'[d/2-m-2k'+2]  \cup V^- \cup E^- \setminus E_{\{i,j\}}^- \cup D) \setminus X|  
    \end{multlined} \\
    & = \!\begin{multlined}[t][12cm] 
    |R| + |X \cap (V^+ \cup E^+)| + |X \cap E_{\{i,j\}}^-| \\
    + |R'[d/2-n-2k'+2]| + |(V^- \cup E^- ) \setminus X| - |E_{\{i,j\}}^- \setminus X|
    \end{multlined} \\
    & = \!\begin{multlined}[t][12cm] 
    d/2 + |X \cap (V^+ \cup E^+)| +m - |E_{\{i,j\}}^- \setminus X|   \\
    + (d/2-n-2k'+2) + |(V^- \cup E^- ) \setminus X| - |E_{\{i,j\}}^- \setminus X| 
    \end{multlined} \\
    &= d/2 +k' +m- |E_{\{i,j\}}^- \setminus X| + (d/2-m-2k'+2)+k' - |E_{\{i,j\}}^- \setminus X|
    \\
    &= d+2-2|E_{\{i,j\}}^- \setminus X|.
\end{align*}
}
This proves the claim. 
\end{claimproof}

Summing up the inequalities in Claim~\ref{clm:select-Vi+} for each $i \in [k]$ and those in Claim~\ref{clm:select-Eij+} for each $i,j \in [k]$ with $i<j$, we get
\[\sum_{i \in [k]}|V_i^+ \cap X|+\sum_{i,j \in [k],i<j} |E_{\{i,j\}}^+ \cap X|= |(V^+ \cup E^+) \cap X| \geq k+\binom{k}{2} = k'.\]
Contrasting this with~(\ref{eqn:bound-k'}), it follows that the inequality above must hold with equality, and consequently,
Claim~\ref{clm:select-Vi+} for each $i \in [k]$ and Claim~\ref{clm:select-Eij+} for each $i,j \in [k]$ with $i<j$ must also hold with equality. 
In other words, for each~$i \in [k]$ there exists some
$\sigma(i) \in [n]$ 
such that 
\begin{equation}
\label{eqn:select-Vi+}
X \cap V_i^+=\{ v_{i,\sigma(i)}^+\},
\end{equation}
and similarly, for each  $i,j \in [k]$ with $i<j$ there exists some
$\sigma(\{i,j\}) \in [m]$
such that 
\begin{equation}
\label{eqn:select-Eij+}
X \cap E_{\{i,j\}}^+=\{ e_{\{i,j\},\sigma(\{i,j\})}^+\}.
\end{equation}

Similarly, summing  up the inequalities in Claim~\ref{clm:select-Vi-}) for each $i \in [k]$ and those in Claim~\ref{clm:select-Eij-}) for each $i,j \in [k]$ with $i<j$, we get
\[\sum_{i \in [k]}|V_i^- \setminus X|+\sum_{i,j \in [k],i<j} |E_{\{i,j\}}^- \setminus X|= |(V^- \cup E^-) \cap X| \geq k+\binom{k}{2} = k'.\]
Again, contrasting this with~(\ref{eqn:bound-k'}), it follows that the inequality above must hold with equality, and consequently,
Claim~\ref{clm:select-Vi-} for each $i \in [k]$ and Claim~(\ref{clm:select-Eij-} for each $i,j \in [k]$ with $i<j$ must also hold with equality. 
In other words, for each~$i \in [k]$ there exists some
$\sigma'(i) \in [n]$ 
such that 
\begin{equation}
\label{eqn:select-Vi-}
V_i^- \setminus X=\{ v_{i,\sigma'(i)}^-\},
\end{equation}
and similarly, for each  $i,j \in [k]$ with $i<j$ there exists some
$\sigma'(\{i,j\}) \in [m]$
such that 
\begin{equation}
\label{eqn:select-Eij-}
 E_{\{i,j\}}^- \setminus X=\{ e_{\{i,j\},\sigma'(\{i,j\})}^-\}.
 \end{equation}

 In the next four claims, we are going to show that $\sigma \equiv \sigma'$.

\begin{claim}
\label{clm:sigma-i-1}
 $\sigma(i) \leq \sigma'(i)$ for each $i \in [k]$.
\end{claim}
\begin{claimproof}
Consider the function $f^\vsel_{i,3}$ and the vertex selection gadget corresponding to~$V_i$. 
Since $X$ is a solution for~$I$, there exists some $Y^\vsel_{i,3} \in \arg\min f^\vsel_{i,3}$ such that $|X \symmdiff Y^\vsel_{i,3} | \leq d$.
Recall that the compact representation of $f^\vsel_{i,3}$ is the path~$P^\vsel_i$, and hence 
Birkhoff's theorem (Theorem~\ref{thm:Birkhoff}) implies that $Y^\vsel_{i,3} \cup \{s\}$ must be the union of  nodes in some \emph{proper postfix} of $P^\vsel_i$, i.e., a subpath of~$P^\vsel_i$ that contains its last node and is not equal to~$P^\vsel_i$.
Due to the definition of~$P^\vsel_i$, this means that there exists some~$p^\star \in [n]$ for which 
\begin{equation}
\label{eqn:def-Y-vi3}
Y^\vsel_{i,3}= R'[d/2-n-2k'+3] \cup V^- \setminus V_i^- \cup E^- \cup D 
\cup \bigcup_{h \leq p^\star} \{v_{i,h}^+\} 
\cup \bigcup_{1<h \leq p^\star} \{v_{i,h-1}^-\}. 
\end{equation}

Thus, we get
\begin{align*}
    d & \geq |X \symmdiff Y^\vsel_{i,3}| = |X \setminus Y^\vsel_{i,3}| + |Y^\vsel_{i,3} \setminus X| \\
    & = \!\begin{multlined}[t][12cm] 
    |X \setminus  (R'[d/2-n-2k'+3] \cup V^- 
    \cup  E^- \cup D)|
    \\
    - |X \cap \{v_{i,h}^+:h \leq p^\star\}|
    + |X \cap \{v_{i,h-1}^-: h > p^\star\}| 
    \end{multlined} \\
    & \phantom{=i} \begin{multlined}[t][12cm] 
    + |(R'[d/2-n-2k'+3] \cup V^- \cup E^- \cup D) \setminus X| \\
    - |\{v_{i,h-1}^-: h > p^\star\} \setminus X| 
    + |\{v_{i,h}^+:h \leq p^\star\} \setminus X| 
    \end{multlined} \\
    & =  \!\begin{multlined}[t][12cm] 
    |R|+ |X \cap (V^+ \cup E^+)| \\
    - |X \cap \{v_{i,h}^+:h \leq p^\star\}| 
    + |\{v_{i,h'}^-:h'\geq p^\star\}| - |\{v_{i,h'}^-: h' \geq p^\star\} \setminus X| 
    \end{multlined} \\
    & \phantom{=i} \begin{multlined}[t][12cm] 
    + |R'[d/2-n-2k'+3]|+|(V^- \cup E^-) \setminus X|
    \\
    - |\{v^-_{i,h'}:h' \geq p^\star\} \setminus X| 
    + |\{v_{i,h}^+:h \leq p^\star\}| -
    |\{v_{i,h}^+:h \leq p^\star\} \cap X| 
    \end{multlined} \\
    & =  \!\begin{multlined}[t][12cm] 
    d/2+k'-2|\{v_{i,h}^+:h \leq p^\star\} \cap X| 
    + (n+1)\\
    -2 |\{v^-_{i,h'}:h' \geq p^\star\} \setminus X| +(d/2-n-2k'+3)+k' 
    \end{multlined} \\
    & = d + 4-2|\{v_{i,h}^+:h \leq p^\star\} \cap X| -2 |\{v^-_{i,h'}:h' \geq p^\star\} \setminus X| \\
    & = d+4 
    - 2 \cdot \left\{  \begin{array}{ll}
    1, & \textrm{ if }\sigma(i) \leq p^\star \\
    0, & \textrm{ otherwise} 
    \end{array}
    \right\} 
    - 2 \cdot  \left\{  \begin{array}{ll}
    1, & \textrm{ if }\sigma'(i) \geq p^\star \\
    0, & \textrm{ otherwise.} 
    \end{array}
    \right\} 
\end{align*}
From this, it follows that $\sigma(i) \leq p^\star \leq \sigma'(i)$ must hold, proving the claim.
\end{claimproof}

\begin{claim}
\label{clm:sigma-i-2}
 $\sigma(i) \geq \sigma'(i)$ for each $i \in [k]$.
\end{claim}
\begin{claimproof}
Consider now the function $f^\vsel_{i,4}$. 
Since $X$ is a solution for~$I$, there exists some $Y^\vsel_{i,4} \in \arg\min f^\vsel_{i,4}$ such that $|X \symmdiff Y^\vsel_{i,4} | \leq d$.
Recall that the compact representation of $f^\vsel_{i,4}$ is the path~$Q^\vsel_i$, which implies that $Y^\vsel_{i,4} \cup \{s\}$ must be the union of nodes in some proper postfix of $Q^\vsel_i$.
Due to the definition of~$Q^\vsel_i$ and Birkhoff's theorem, this means that there exists some~$q^\star \in [n]$ for which 
\begin{equation}
\label{eqn:def-Y-vi4}
Y^\vsel_{i,4}= R'[d/2-n-2k'+3] \cup V^- \setminus V_i^- \cup E^- \cup D \cup 
\bigcup_{h \geq q^\star} \{v_{i,h}^+\} 
\cup \bigcup_{ n > h \geq q^\star} \{v_{i,h+1}^-\}. 
\end{equation}
Thus, we get
\begin{align*}
    d & \geq |X \symmdiff Y^\vsel_{i,4}| = |X \setminus Y^\vsel_{i,4}| + |Y^\vsel_{i,4} \setminus X| \\
    & = \!\begin{multlined}[t][12cm] 
    |X \setminus  (R'[d/2-n-2k'+3] \cup V^- 
    \cup  E^- \cup D)|
    \\
    - |X \cap \{v_{i,h}^+:h \geq q^\star\}|
    + |X \cap \{v_{i,h+1}^-: h < q^\star\}| 
    \end{multlined} \\
    & \phantom{=i} \begin{multlined}[t][12cm] 
    + |(R'[d/2-n-2k'+3] \cup V^- \cup E^- \cup D) \setminus X| \\
    - |\{v_{i,h+1}^-: h < q^\star\} \setminus X| 
    + |\{v_{i,h}^+:h \geq q^\star\} \setminus X| 
    \end{multlined} \\
    & =  \!\begin{multlined}[t][12cm] 
    |R|+ |X \cap (V^+ \cup E^+)| \\
    - |X \cap \{v_{i,h}^+:h \geq q^\star\}| 
    + |\{v_{i,h'}^-:h'\leq q^\star\}| - |\{v_{i,h'}^-: h' \leq q^\star\} \setminus X| 
    \end{multlined} \\
    & \phantom{=i} \begin{multlined}[t][12cm] 
    + |R'[d/2-n-2k'+3]|+|(V^- \cup E^-) \setminus X|
    \\
    - |\{v^-_{i,h'}:h' \leq q^\star\} \setminus X| 
    + |\{v_{i,h}^+:h \geq q^\star\}| -
    |\{v_{i,h}^+:h \geq q^\star\} \cap X| 
    \end{multlined} \\
    & =  \!\begin{multlined}[t][12cm] 
    d/2+k'-2|\{v_{i,h}^+:h \geq q^\star\} \cap X| 
    + (n+1)\\
    -2 |\{v^-_{i,h'}:h' \leq q^\star\} \setminus X| +(d/2-n-2k'+3)+k' 
    \end{multlined} \\
    & = d + 4-2|\{v_{i,h}^+:h \geq q^\star\} \cap X| -2 |\{v^-_{i,h'}:h' \leq q^\star\} \setminus X| \\
    & = d+4 
    - 2 \cdot \left\{  \begin{array}{ll}
    1, & \textrm{ if }\sigma(i) \geq q^\star \\
    0, & \textrm{ otherwise} 
    \end{array}
    \right\} 
    - 2 \cdot  \left\{  \begin{array}{ll}
    1, & \textrm{ if }\sigma'(i) \leq q^\star \\
    0, & \textrm{ otherwise.} 
    \end{array}
    \right\} 
\end{align*}
From this, it follows that $\sigma(i) \geq q^\star \geq \sigma'(i)$ must hold, proving the claim.
\end{claimproof}

\begin{claim}
\label{clm:sigma-ij-1}
 $\sigma(\{i,j\}) \leq \sigma'(\{i,j\})$ for each $i,j \in [k]$ with $i<j$.
\end{claim}
\begin{claimproof}
We use the same arguments as in the proof of Claim~\ref{clm:sigma-i-1}.
Consider the function~$f^\esel_{\{i,j\},3}$. Since $X$ is a solution, there exists some $Y^\esel_{\{i,j\},3} \in \arg\min f^\esel_{\{i,j\},3}$ such that $|X \symmdiff Y^\esel_{\{i,j\},3}| \leq d$. Due to our definition of~$f^\esel_{\{i,j\},3}$ via its compact representation, i.e., path~$P^\esel_{\{i,j\}}$, we know that 
\begin{equation}
\label{eqn:def-Y-eij3}
Y^\esel_{\{i,j\},3} = 
\begin{multlined}[t][10cm]
R'[d/2-m-2k'+3] \cup V^- \cup E^- \setminus E^-_{\{i,j\}} \cup D \\
 \cup \bigcup_{h \leq p^\dagger} \{e_{\{i,j\},h}^+\} 
\cup \bigcup_{1<h \leq p^\dagger} \{v_{\{i,j\},h-1}^-\}     
\end{multlined}
\end{equation}
for some $p^\dagger \in [m]$. Thus, we get
{\allowdisplaybreaks
\begin{align*}
    d & \geq |X \symmdiff Y^\esel_{\{i,j\},3}| = |X \setminus Y^\esel_{\{i,j\},3}| + |Y^\esel_{\{i,j\},3} \setminus X| \\
    & = \!\begin{multlined}[t][12cm] 
    |X \setminus  (R'[d/2-m-2k'+3] \cup V^- 
    \cup  E^- \cup D)|
    \\
    - |X \cap \{e_{\{i,j\},h}^+:h \leq p^\dagger\}|
    + |X \cap \{e_{\{i,j\},h-1}^-: h > p^\dagger\}| 
    \end{multlined} \\
    & \phantom{=i} \begin{multlined}[t][12cm] 
    + |(R'[d/2-m-2k'+3] \cup V^- \cup E^- \cup D) \setminus X| \\
    - |\{e_{\{i,j\},h-1}^-: h > p^\dagger\} \setminus X| 
    + |\{e_{\{i,j\},h}^+:h \leq p^\dagger\} \setminus X| 
    \end{multlined} \\
    & =  \!\begin{multlined}[t][12cm] 
    |R|+ |X \cap (V^+ \cup E^+)| \\
    - |X \cap \{e_{\{i,j\},h}^+:h \leq p^\dagger\}| 
    + |\{e_{\{i,j\},h'}^-:h'\geq p^\dagger\}| - |\{e_{\{i,j\},h'}^-: h' \geq p^\dagger\} \setminus X| 
    \end{multlined} \\
    & \phantom{=i} \begin{multlined}[t][12cm] 
    + |R'[d/2-n-2k'+3]|+|(V^- \cup E^-) \setminus X|
    \\
    - |\{e^-_{\{i,j\},h'}:h' \geq p^\dagger\} \setminus X| 
    + |\{e_{\{i,j\},h}^+:h \leq p^\dagger\}| -
    |\{e_{\{i,j\},h}^+:h \leq p^\dagger\} \cap X| 
    \end{multlined} \\
    & =  \!\begin{multlined}[t][12cm] 
    d/2+k'-2|\{e_{\{i,j\},h}^+:h \leq p^\dagger\} \cap X| 
    + (m+1)\\
    -2 |\{e^-_{\{i,j\},h'}:h' \geq p^\dagger\} \setminus X| +(d/2-m-2k'+3)+k' 
    \end{multlined} \\
    & = d + 4-2|\{e_{\{i,j\},h}^+:h \leq p^\dagger\} \cap X| -2 |\{e^-_{\{i,j\},h'}:h' \geq p^\dagger\} \setminus X| \\
    & = d+4 
    - 2 \cdot \left\{  \begin{array}{ll}
    1, & \textrm{ if }\sigma(\{i,j\}) \leq p^\dagger \\
    0, & \textrm{ otherwise} 
    \end{array}
    \right\} 
    - 2 \cdot  \left\{  \begin{array}{ll}
    1, & \textrm{ if }\sigma'(\{i,j\}) \geq p^\dagger \\
    0, & \textrm{ otherwise.} 
    \end{array}
    \right\} 
\end{align*}
}
From this, it follows that $\sigma(\{i,j\}) \leq p^\dagger \leq \sigma'(\{i,j\})$ must hold, proving the claim.
\end{claimproof}

\begin{claim}
\label{clm:sigma-ij-2}
 $\sigma(\{i,j\}) \geq \sigma'(\{i,j\})$ for each $i,j \in [k]$ with $i<j$.
\end{claim}
\begin{claimproof}
We use the same arguments as in the proof of Claim~\ref{clm:sigma-i-2}.
Consider the function~$f^\esel_{\{i,j\},4}$. Since $X$ is a solution, there exists some $Y^\esel_{\{i,j\},4} \in \arg\min f^\esel_{\{i,j\},4}$ such that $|X \symmdiff Y^\esel_{\{i,j\},4}| \leq d$. Due to our definition of~$f^\esel_{\{i,j\},4}$ via its compact representation, i.e., path~$Q^\esel_{\{i,j\}}$, we know that 
\begin{equation}
\label{eqn:def-Y-eij4}
Y^\esel_{\{i,j\},4} = 
\begin{multlined}[t][10cm]
R'[d/2-m-2k'+3] \cup V^- \cup E^- \setminus E_{\{i,j\}}^- \cup D  \\
\cup \bigcup_{h \geq q^\dagger} \{e_{\{i,j\},h}^+\} 
\cup \bigcup_{n > h \geq q^\dagger} \{e_{\{i,j\},h+1}^-\}. 
\end{multlined}
\end{equation}
for some $q^\dagger \in [m]$. Thus, we get
{
\allowdisplaybreaks
\begin{align*}
    d & \geq |X \symmdiff Y^\esel_{\{i,j\},4}| = |X \setminus Y^\esel_{\{i,j\},4}| + |Y^\esel_{\{i,j\},4} \setminus X| \\
    & = \!\begin{multlined}[t][12cm] 
    |X \setminus  (R'[d/2-m-2k'+3] \cup V^- \cup  E^- \cup D)|
    \\
    - |X \cap \{e_{\{i,j\},h}^+:h \geq q^\dagger\}|
    + |X \cap \{e_{\{i,j\},h+1}^-: h < q^\dagger\}| 
    \end{multlined} \\
    & \phantom{=i} \begin{multlined}[t][12cm] 
    + |(R'[d/2-m-2k'+3] \cup V^- \cup E^- \cup D) \setminus X| \\
    - |\{e_{\{i,j\},h+1}^-: h < q^\dagger\} \setminus X| 
    + |\{e_{\{i,j\},h}^+:h \geq q^\dagger\} \setminus X| 
    \end{multlined} \\
    & =  \!\begin{multlined}[t][12cm] 
    |R|+ |X \cap (V^+ \cup E^+)| \\
    - |X \cap \{e_{\{i,j\},h}^+:h \geq q^\dagger\}| 
    + |\{e_{\{i,j\},h'}^-:h'\leq q^\dagger\}| - |\{e_{\{i,j\},h'}^-: h' \leq q^\dagger\} \setminus X| 
    \end{multlined} \\
    & \phantom{=i} \begin{multlined}[t][12cm] 
    + |R'[d/2-m-2k'+3]|+|(V^- \cup E^-) \setminus X|
    \\
    - |\{e^-_{\{i,j\},h'}:h' \leq q^\dagger\} \setminus X| 
    + |\{e_{\{i,j\},h}^+:h \geq q^\dagger\}| -
    |\{e_{\{i,j\},h}^+:h \geq q^\dagger\} \cap X| 
    \end{multlined} \\
    & =  \!\begin{multlined}[t][12cm] 
    d/2+k'-2|\{e_{\{i,j\},h}^+:h \geq q^\dagger\} \cap X| 
    + (m+1)\\
    -2 |\{e^-_{\{i,j\},h'}:h' \leq q^\dagger\} \setminus X| +(d/2-m-2k'+3)+k' 
    \end{multlined} \\
    & = d + 4-2|\{e_{\{i,j\},h}^+:h \geq q^\dagger\} \cap X| -2 |\{e^-_{\{i,j\},h'}:h' \leq q^\dagger\} \setminus X| \\
    & = d+4 
    - 2 \cdot \left\{  \begin{array}{ll}
    1, & \textrm{ if }\sigma(\{i,j\}) \geq q^\dagger \\
    0, & \textrm{ otherwise} 
    \end{array}
    \right\} 
    - 2 \cdot  \left\{  \begin{array}{ll}
    1, & \textrm{ if }\sigma'(\{i,j\}) \leq q^\dagger \\
    0, & \textrm{ otherwise.} 
    \end{array}
    \right\} 
\end{align*}
}
From this, it follows that $\sigma(\{i,j\}) \geq q^\dagger \geq \sigma'(\{i,j\})$ must hold, proving the claim.
\end{claimproof}

From Claims~\ref{clm:sigma-i-1} and~\ref{clm:sigma-i-2}, we obtain that $\sigma(i)=\sigma'(i)$ for arbitrary~$i \in [k]$. Hence, we have 
\[ V^+_i \cap X =\{v^+_{i,\sigma(i)}\} \qquad \qquad \textrm{and} \qquad \qquad 
V^-_i \setminus X =\{v^-_{i,\sigma(i)}\};
\]
let us say that $X$ \emph{selects} the vertex~$v_{i,\sigma(i)} \in V_i$.

Similarly, from Claims~\ref{clm:sigma-ij-1} and~\ref{clm:sigma-ij-2}, we obtain that $\sigma(\{i,j\})=\sigma'(\{i,j\})$ for each~$i,j \in [k]$ with $i<j$. Hence, we have 
\[ E^+_{\{i,j\}} \cap X =\{e^+_{\{i,j\},\sigma(\{i,j\})}\} \qquad \qquad \textrm{and} \qquad \qquad 
E^-_{\{i,j\}} \setminus X =\{e^-_{\{i,j\},\sigma(\{i,j\})}\};
\]
let us say that $X$ \emph{selects} the edge~$e_{\{i,j\},\sigma(\{i,j\})} \in E_{\{i,j\}}$.
The following claim shows that the vertices and edges selected by~$X$ form a clique of size~$k$ in~$G$, finishing this direction of the proof.
\begin{claim}
\label{clm:inc}
For each $i,j \in [k]$ with $i \neq j$, the edge $e_{\{i,j\},\sigma(\{i,j\})}$ selected by~$X$ 
is incident to the vertex~$v_{i,\sigma(i)}$ selected by~$X$. 
\end{claim}
\begin{claimproof}
Consider the function~$f^\inc_{(i,j)}$, defined via its compact representation, i.e., path~$P^\inc_{(i,j)}$. Since $X$ is a solution for~$I$, there exists some $Y^\inc_{(i,j)} \in \arg\min f^\inc_{(i,j)}$ for which $|X \symmdiff Y^\inc_{(i,j)}| \leq d$.   
Due to the definition of~$P^\inc_{(i,j)}$ and Birkhoff's theorem, there must exist some~$p^\diamond \in [n]$ for which 
\begin{equation}
\label{eqn:def-Y-inc}
Y^\inc_{(i,j)} =
\begin{multlined}[t][10cm]
R'[d/2-m-n-2k'+6] \cup V^- \setminus V^-_i \cup E^- \setminus E^-_{\{i,j\}} \cup D \\
\cup \bigcup_{h \leq p^\diamond} \left( \{v_{i,h}^+\} \cup E^+_j(v_{i,h}) \right)
\cup \bigcup_{h\leq p^\diamond-1} \left( \{v_{i,h}^-\} \cup E^-_j(v_{i,h}) \right).    
\end{multlined}
\end{equation}

Thus, we get
{\allowdisplaybreaks
\begin{align}
    d & \geq |X \symmdiff Y^\inc_{(i,j)}| = |X \setminus Y^\inc_{(i,j)}| + |Y^\inc_{(i,j)} \setminus X| \notag \\
    & = \!\begin{multlined}[t][12cm] 
    |X \setminus  (R'[d/2-n-2k'+6] \cup V^- 
    \cup  E^- \cup D)|
    \\
    - \left|X \cap \bigcup_{h \leq p^\diamond} \{v_{i,h}^+\} \cup E_j^+(v_{i,h})\right|
    + \left|X \cap \bigcup_{h \geq p^\diamond}\{v_{i,h}^-\} \cup E_j^-(v_{i,h})\right| 
    \end{multlined} \notag \\
    & \phantom{=i} \begin{multlined}[t][12cm] 
    + |(R'[d/2-n-2k'+6] \cup V^- \cup E^- \cup D) \setminus X| \\
    - \left|\bigcup_{h \geq p^\diamond} 
    \left(\{v_{i,h}^-\} \cup E_j^-(v_{i,h}) \right) \setminus X\right| 
    + \left| \bigcup_{h \leq p^\diamond} 
    \left( \{v_{i,h}^+\} \cup E_j^-(v_{i,h}) \right) \setminus X\right| 
    \end{multlined} \notag \\
    & =  \!\begin{multlined}[t][12cm] 
    |R|+ |X \cap (V^+ \cup E^+)| - \left|X \cap \bigcup_{h \leq p^\diamond} \{v_{i,h}^+\} \cup E_j^+(v_{i,h})\right| \\
     + \left|\bigcup_{h \geq p^\diamond} 
    \{v_{i,h}^-\} \cup E_j^-(v_{i,h})  \right| 
    - \left|\bigcup_{h \geq p^\diamond} 
    \left(\{v_{i,h}^-\} \cup E_j^-(v_{i,h}) \right) \setminus X\right| 
    \end{multlined} \notag \\
    & \phantom{=i} \begin{multlined}[t][12cm] 
    + |R'[d/2-n-2k'+6]|+|(V^- \cup E^-) \setminus X|
    - \left|\bigcup_{h \geq p^\diamond} 
    \left(\{v_{i,h}^-\} \cup E_j^-(v_{i,h}) \right) \setminus X\right| 
    \\ 
    + \left| \bigcup_{h \leq p^\diamond} 
    \left( \{v_{i,h}^+\} \cup E_j^-(v_{i,h}) \right) \right|
    - \left| \bigcup_{h \leq p^\diamond} 
    \left( \{v_{i,h}^+\} \cup E_j^-(v_{i,h}) \right) \cap X\right|
    \end{multlined} \notag \\
    & =  \!\begin{multlined}[t][12cm] 
    d/2+k'-2\left| \bigcup_{h \leq p^\diamond} 
    \left( \{v_{i,h}^+\} \cup E_j^-(v_{i,h}) \right) \cap X\right| 
    + (n+m+2)     
    \\
    -2 \left|\bigcup_{h \geq p^\diamond} 
    \left(\{v_{i,h}^-\} \cup E_j^-(v_{i,h}) \right) \setminus X\right| +(d/2-n-2k'+6)+k' 
    \end{multlined} \label{eqn:inc} \\
    & = d + 8
    -2\left| \bigcup_{h \leq p^\diamond} 
    \left( \{v_{i,h}^+\} \cup E_j^-(v_{i,h}) \right) \cap X\right| 
    -2 \left|\bigcup_{h \geq p^\diamond} 
    \left(\{v_{i,h}^-\} \cup E_j^-(v_{i,h}) \right) \setminus X\right| 
    \notag \\
    & = d+8
    - 2 \cdot \left\{  \begin{array}{ll}
    1, & \textrm{ if }\sigma(i) \leq p^\diamond \\
    0, & \textrm{ otherwise} 
    \end{array}
    \right\} 
    - 2 \cdot \left\{  \begin{array}{ll}
    1, & \textrm{ if }
    e_{\{i,j\},\sigma(\{i,j\})} \in \bigcup_{ h \leq p^\diamond}E_j(v_{i,h})  \\    
    0, & \textrm{ otherwise} 
    \end{array}
    \right\} \notag
    \\
    & \phantom{ =i d+8}
    - 2 \cdot  \left\{  \begin{array}{ll}
    1, & \textrm{ if }\sigma(i) \geq p^\diamond \\
    0, & \textrm{ otherwise} 
    \end{array}
    \right\} 
    - 2 \cdot \left\{  \begin{array}{ll}
    1, & \textrm{ if }
    e_{\{i,j\},\sigma(\{i,j\})} \in \bigcup_{ h \geq p^\diamond}E_j(v_{i,h})  \\    
    0, & \textrm{ otherwise.} 
    \end{array}
    \right\} 
 \notag
\end{align}
}
We remark that to obtain equation~(\ref{eqn:inc}), we used the fact $\bigcup_{h\in [n]} E_j(v_{i,h}) = E_{\{i,j\}}$. 
Notice that the above inequalities can only hold if 
\begin{itemize} 
\item $\sigma(i)=p^\diamond$, 
\item $e_{\{i,j\},\sigma(\{i,j\})} \in \bigcup_{ h \leq p^\diamond}E_j(v_{i,h})$, and 
\item $e_{\{i,j\},\sigma(\{i,j\})} \in \bigcup_{ h \geq p^\diamond}E_j(v_{i,h})$, and 
\end{itemize}
This means that the edge $e_{\{i,j\},\sigma(\{i,j\})}$ selected by~$X$ in $E_{\{i,j\}}$ must be adjacent in~$G$  to the vertex~$v_{i,p^\diamond}=v_{i,\sigma(i)}$, that is, to the vertex selected by~$X$ 
in $V_i$. As this holds for arbitrary distinct indices $i,j \in [k]$, the claim follows.
\end{claimproof}

\subparagraph{Direction ``$\Longleftarrow$'':}
Assume now that $G$ admits a clique~$K$ of size~$k$. 
Define a set~$X \subseteq U$ by
\[X= R \cup D \cup V^- \cup E^- 
\cup  \left( \bigcup_{v \in V(K)} \{v^+\}
\cup  \bigcup_{e \in E(K)}\{e^+\} \right)
\setminus  \left(\bigcup_{v \in V(K)}\{v^-\}
\cup \bigcup_{e \in E(K)}\{e^-\}\right)
. 
\]

Let $v_{i,\sigma(i)}$ denote the vertex of~$K$ in~$V_i$ for each $i \in [k]$. Furthermore, for each $i,j \in [k]$ with $i \neq j$, define the index $\sigma(\{i,j\}) \in [m]$ such that the edge of~$K$ running between $V_i$ and~$V_j$ is the edge~$e_{\{i,j\},\sigma(\{i,j\})}=v_{i,\sigma(i)} v_{j,\sigma(j)}$. 

Let us consider first the functions $f^\init_h$ for $h \in [5]$ together with their unique minimizers~$Y^\init_h$. It is straightforward to verify that their distance from~$X$ in terms of symmetric difference is at most~$d$:
\begin{align*}
|X & \symmdiff Y^\init_1|=|X \symmdiff (R \cup V^+ \cup V^- \cup E^+ \cup E^- \cup D \cup D')| \\
&= \left| 
E^+ \cup V^+ 
\setminus  \left( \bigcup_{v \in V(K)} \{v^+\}
\cup  \bigcup_{e \in E(K)}\{e^+\} \right)
\cup  \left(\bigcup_{v \in V(K)}\{v^-\}
\cup \bigcup_{e \in E(K)}\{e^-\}\right)
\cup D' \right| \\
& = |E|+|V|-k-\binom{k}{2}+k+\binom{k}{2}+|D'|=2(|V|+|E|)=d; 
\end{align*}
\begin{align*}
|X & \symmdiff Y^\init_2|=|X \symmdiff R| \\
&= \left| 
E^- \cup V^- 
\setminus  
\left(\bigcup_{v \in V(K)}\{v^-\}
\cup \bigcup_{e \in E(K)}\{e^-\}\right)
\cup  
\left( \bigcup_{v \in V(K)} \{v^+\}
\cup  \bigcup_{e \in E(K)}\{e^+\} \right)
\cup D \right| \\
& = |E|+|V|-k-\binom{k}{2}+k+\binom{k}{2}+|D|=2(|V|+|E|)=d; 
\end{align*}
\begin{align*}
|X & \symmdiff Y^\init_3|=|X \symmdiff (R \cup R' \cup V^+ \cup V^- \cup E^+ \cup E^- \cup D) | \\
&= \left| R' \cup 
E^+ \cup V^+ 
\setminus  \left( \bigcup_{v \in V(K)} \{v^+\}
\cup  \bigcup_{e \in E(K)}\{e^+\} \right)
\cup  \left(\bigcup_{v \in V(K)}\{v^-\}
\cup \bigcup_{e \in E(K)}\{e^-\}\right) \right| \\
& = |R'|+|E|+|V|-k-\binom{k}{2}+k+\binom{k}{2}=2(|V|+|E|)=d; 
\end{align*}
\begin{align*}
|X & \symmdiff Y^\init_4|=|X \symmdiff (R \cup R'  \cup D) | \\
&= \left| R' \cup 
E^- \cup V^- 
\setminus  
\left(\bigcup_{v \in V(K)}\{v^-\}
\cup \bigcup_{e \in E(K)}\{e^-\}\right)
\cup  
\left( \bigcup_{v \in V(K)} \{v^+\}
\cup  \bigcup_{e \in E(K)}\{e^+\} \right)
\right| \\
& = |R'|+|E|+|V|-k-\binom{k}{2}+k+\binom{k}{2}=2(|V|+|E|)=d;
\end{align*}
\begin{align*}
|X & \symmdiff Y^\init_5|=|X \symmdiff (R[k'] \cup (R' \setminus R'[k']) \cup V^- \cup E^-  \cup D) | \\
&= \!\begin{multlined}[t][12cm]
    | (R \setminus R[k']) \cup 
(R' \setminus R'[k']) |
\\
+
\left| \left(\bigcup_{v \in V(K)}\{v^-\}
\cup \bigcup_{e \in E(K)}\{e^-\}\right)
\cup  
\left( \bigcup_{v \in V(K)} \{v^+\}
\cup  \bigcup_{e \in E(K)}\{e^+\} \right)
\right| 
\end{multlined}
 \\
& = |R|-k'+|R'|-k'+k+\binom{k}{2}+k+\binom{k}{2}=|R|+|R'|=d.
\end{align*}

Next, notice that the definition of~$X$ ensures that (\ref{eqn:solX-vs-RandR'}), (\ref{eqn:solX-vs-DandD'}), as well as all statements~(\ref{eqn:bound-k'})--(\ref{eqn:select-Eij-})  
are satisfied. Therefore, it is straightforward to check that the calculations within the proofs of Claims~\ref{clm:select-Vi+}--\ref{clm:select-Eij-} remain correct, and all inequalities obtained there  hold with equality.  This shows that $X$ has distance at most~$d$ from the unique minimizers of the functions~$f^\vsel_{i,1}$ and~$f^\vsel_{i,2}$ for each~$i \in [k]$, as well as from the unique minimizers of the functions~$f^\esel_{\{i,j\},1}$ and~$f^\esel_{\{i,j\},2}$ for each~$i,j \in [k]$ with $i<j$.

Consider now the functions~$f^\vsel_{i,3}$ and~$f^\vsel_{i,4}$ for some~$i \in [k]$, and ~$f^\esel_{\{i,j\},3}$ and~$f^\esel_{\{i,j\},4}$ for some~$j \in [k] \setminus \{i\}$. Let us define minimizers~$Y^\vsel_{i,3}, Y^\vsel_{i,4}, Y^\esel_{\{i,j\},3}$, and~$Y^\esel_{\{i,j\},4}$ as in~(\ref{eqn:def-Y-vi3}), (\ref{eqn:def-Y-vi4}), (\ref{eqn:def-Y-eij3}), and (\ref{eqn:def-Y-eij4}), respectively, where we set~$p^\star=q^\star=\sigma(i)$ and $p^\dagger=q^\dagger=\sigma(\{i,j\})$. 
Then it is straightforward to verify that all calculations within the proofs of Claims~\ref{clm:sigma-i-1}--\ref{clm:sigma-ij-2} remain correct, and all inequalities established there hold with equality. This shows that $X$ has distance at most~$d$ from each of the sets~$Y^\vsel_{i,3}, Y^\vsel_{i,4}, Y^\esel_{\{i,j\},3}$, and~$Y^\esel_{\{i,j\},4}$.

It remains to consider the function~$f^\inc_{(i,j)}$ for some $i,j \in [k]$ with $i \neq j$.
Let us define a minimizer~$Y^\inc_{(i,j)}$ as in~(\ref{eqn:def-Y-inc}). 
Again, it is not hard to verify that all calculations within the proof of Claim~\ref{clm:inc} remain correct if we set~$p^\diamond=\sigma(\{i,j\})$, and thus $|X \symmdiff Y^\inc_{(i,j)}| \leq d$ also holds.
Hence, $X$ is indeed a solution for~$I$, proving the correctness of our reduction.
\end{proof}

\subsection{\texorpdfstring{%
Polynomially many minimizers for one function: FPT algorithm for~$d$}{
Polynomially many minimizers for one function: FPT algorithm for d}}
\label{sec:alg-unique}

In this section, we present an efficient algorithm for the case when our threshold~$d$ is small, assuming that $|\LL_1|$ can bounded by a polynomial of the input size.

\begin{theorem}
\label{thm:unique}
\SR{} can be solved in $|\mathcal{L}_1|g(d) n^c$ time where $c$ is a constant and $g$ is a computable function.
\end{theorem}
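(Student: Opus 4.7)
The plan has two nested phases. In the outer phase, we iterate over each $Y_1 \in \LL_1$ as a candidate ``anchor,'' hypothesizing that the sought solution $X^*$ satisfies $|X^* \symmdiff Y_1| \leq d$; since any valid solution must have some $Y_1 \in \LL_1$ as a witness for its first constraint, correctness for at least one anchor suffices, and this loop contributes the factor $|\LL_1|$ in the runtime. In the inner phase, for a fixed $Y_1$, we run a bounded search tree of depth $d$ exploring candidates $X$ obtained from $Y_1$ by flipping at most $d$ elements of $V$.

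At each node of the search tree, with current candidate $X$ (initially $X = Y_1$) and remaining flip budget $r$, we first test validity by computing $g_i(X) := \min_{Y \in \LL_i}|X \symmdiff Y|$ for every $i \in [k]$; because $|X \symmdiff Y|$ is a submodular function of $Y$ and $\LL_i$ is a distributive lattice given via its compact representation $G(\LL_i)$, each $g_i(X)$ is computable in polynomial time via a parametric min-cut on $G(\LL_i)$, along with an explicit closest element of $\LL_i$ to $X$. If every $g_i(X) \leq d$, we return $X$; if $r = 0$ but some $g_i(X) > d$, we abort this branch. Otherwise we pick an index $i$ with $g_i(X) > d$ and branch on a small set of positions to flip.

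The central question is how to choose this branching set $P_i \subseteq V$ so that both $|P_i| = O(d)$ and every valid $X^*$ consistent with the anchor satisfies $(X \symmdiff X^*) \cap P_i \neq \emptyset$. A natural choice exploits the identity
\[
|X \symmdiff Y|+|X \symmdiff Y'| = |X \symmdiff(Y \cap Y')|+|X \symmdiff(Y \cup Y')|,
\]
which implies that $M_i(X) := \arg\min_{Y \in \LL_i}|X \symmdiff Y|$ is itself a distributive sublattice of $\LL_i$, and therefore has computable minimum $Y_i^{\min,\cap}$ and maximum $Y_i^{\min,\cup}$; we then take $P_i := (X \symmdiff Y_i^{\min,\cap}) \cup (X \symmdiff Y_i^{\min,\cup})$. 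At any reachable node, the triangle inequality applied to $X$, a valid $X^*$ with $|X^* \symmdiff Y_1| \leq d$, and an $\LL_i$-witness $Y_i'$ of $X^*$ yields $g_i(X) \leq 3d$, so $|P_i| \leq 2g_i(X) = O(d)$.

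The main obstacle will be proving correctness: that this $P_i$ always intersects $X \symmdiff X^*$. Supposing otherwise, $X^*$ agrees with $X$ on all of $P_i$, which forces both $|X^* \symmdiff Y_i^{\min,\cap}|$ and $|X^* \symmdiff Y_i^{\min,\cup}|$ to exceed $d$; combining this with an application of the submodular identity at $X^*$ to any claimed witness $Y_i' \in \LL_i$ (together with the auxiliary lattice elements $Y_i' \cap Y_i^{\min,\cup}$ and $Y_i' \cup Y_i^{\min,\cap}$, whose distances from $X^*$ can be bounded below using extremality in $M_i(X)$) will yield $|X^* \symmdiff Y_i'| > d$, contradicting $g_i(X^*) \leq d$. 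The search tree then has depth at most $d$ and branching factor $O(d)$, giving $d^{O(d)}$ leaves and polynomial work per node; multiplied by $|\LL_1|$ from the outer loop, the total running time is $|\LL_1| \cdot g(d) \cdot n^{c}$ for a computable function $g$ and constant $c$.
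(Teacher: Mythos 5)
Your outer structure (iterate over $Y_1\in\LL_1$ as anchor, then run a depth-$d$ search tree flipping one element at a time, using the computable quantity $\min_{Y\in\LL_i}|X\symmdiff Y|$ to test feasibility and to choose a constraint $i$ to branch on) matches the paper's strategy, which is formalized there via an intermediate ``anchored'' problem. The difficulty, as you correctly anticipate, is the branching rule, and this is where your proposal has a genuine gap: the branching set $P_i=(X\symmdiff Y_i^{\min,\cap})\cup(X\symmdiff Y_i^{\min,\cup})$ built from the extremal elements of $M_i(X)=\arg\min_{Y\in\LL_i}|X\symmdiff Y|$ is \emph{not} guaranteed to meet $X\symmdiff X^*$. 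A concrete failing instance for the central claim: take $V=\{1,\dots,6\}$, $d=2$, $X=\{1,2,3\}$, and $\LL_i=\{\{3,4\},\,\{2,3,4,5,6\}\}$ (a chain, hence a valid distributive lattice). Here $g_i(X)=|X\symmdiff\{3,4\}|=3>d$, $M_i(X)=\{\{3,4\}\}$ so $Y_i^{\min,\cap}=Y_i^{\min,\cup}=\{3,4\}$, and $P_i=\{1,2,4\}$. But $X^*=\{1,2,3,5,6\}$ agrees with $X$ on all of $P_i$ while $|X^*\symmdiff\{2,3,4,5,6\}|=2\le d$, so $g_i(X^*)\le d$. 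Thus the ``contradiction'' your modular-identity argument is supposed to produce does not exist, and your branching misses the element of $X\symmdiff X^*$ on which to recurse. The root cause is that the lattice element $Y_i'$ witnessing $g_i(X^*)\le d$ can have $|X\symmdiff Y_i'|$ as large as roughly $3d$, hence need not lie in, or even be controlled by, the sublattice $M_i(X)$ of \emph{exact} minimizers of distance to $X$; knowing only the two extremal closest points does not pin down where such a $Y_i'$ sits.

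The paper avoids this by working directly with the Birkhoff compact representation $G(\LL_i)$. Starting from the anchor, it \emph{guesses} for a small number of parts $U_j$ whether they lie in the (unknown) witness $Y_i$ and, driven by the closure constraints of the DAG, accumulates a certified subset $T\subseteq Y_0\symmdiff Y_i$ of size between $d+1$ and $d+d_0$; a separate structural observation (the paper's Observation on sets $T\subseteq Y_0\symmdiff Y_i$ with $|T|>d$) then guarantees that $T$ meets $X\symmdiff Y_0$, yielding a correct element to flip. This requires $2^{O(d)}$ guesses per node in addition to the $O(d)$-way vertex branching, giving overall $2^{O(d^2)}$ rather than the $d^{O(d)}$ you claim, but it is provably correct. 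Your approach would need a branching set that is provably hitting for \emph{all} lattice elements within distance $O(d)$ of $X$, not just the exact minimizers, and the extremal pair of $M_i(X)$ does not supply this.
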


Let us consider a slightly more general version of \SR{} which we call \ASR{}.
In this problem, in addition to an instance $\InstSR=(V, f_1, \dots, f_k, d)$ of \SR{}, we are given a set~$Y_0 \subseteq V$ and integer $d_0 \leq d$, and we aim to find a subset $X$ such that
\begin{align}\label{eq:solution_Y0}
|X\symmdiff Y_0 | &\leq d_0  \qquad \textrm{and }\\
\label{eq:solution_Yall}
|X\symmdiff Y_i | &\leq d   \qquad \text{\ for some } Y_i\in\LL_i , \textrm{ for each } i\in [k].
\end{align}
Observe that we can solve our instance $\InstSR=(V, f_1, \dots, f_k, d)$ by solving the instance $(V,f_2,\dots,f_k,d, Y_0,d_0)$ of \ASR{} for each $Y_0 \in \LL_1$ and $d_0=d$.
Hence, Theorem~\ref{thm:unique} follows from Theorem~\ref{thm:asr} below.

\begin{theorem}
\label{thm:asr}
    \ASR{} can be solved in FPT time when parameterized by~$d$.
\end{theorem}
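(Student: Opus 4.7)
The plan is to design a bounded-search-tree algorithm for \ASR{} whose branching factor and depth are both bounded by functions of $d$. We maintain at each node of the search tree a current candidate $X \subseteq V$ and a remaining budget $d' \in \mathbb{N}$, starting at the root with $X = Y_0$ and $d' = d_0$. At a node with state $(X, d')$, we first compute, for each $i \in [k]$, the minimum and maximum closest elements $Y_i^{\min}$ and $Y_i^{\max}$ of $\LL_i$ to $X$. Since the set $\arg\min_{Y \in \LL_i} |X \symmdiff Y|$ is itself a sublattice of the distributive lattice $\LL_i$, these two extremal closest elements are well-defined and can both be computed in polynomial time via a min-cost cut on the compact DAG representation of $\LL_i$ granted by Birkhoff's theorem (minimizing the modular cost $\sum_v \mathbf{1}[v \in Y \,\triangle\, X]$ over down-closed sets). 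Write $D_i = |X \symmdiff Y_i^{\min}| = |X \symmdiff Y_i^{\max}|$.

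If $D_i \leq d$ for every $i$, return $X$. If $D_i > d + d'$ for some $i$, prune: the function $\mathrm{dist}(\cdot, \LL_i)$ is $1$-Lipschitz under symmetric difference, so no $F \subseteq V$ with $|F| \leq d'$ can bring $X \symmdiff F$ within distance $d$ of $\LL_i$. Otherwise, pick some $i$ with $D_i > d$ and branch on the set $S_i = (X \setminus Y_i^{\min}) \cup (Y_i^{\max} \setminus X)$, whose size is $|X \setminus Y_i^{\min}| + |Y_i^{\max} \setminus X| \leq 2 D_i \leq 2(d + d') \leq 4d$. For each $v \in S_i$, recurse on the state $(X \symmdiff \{v\}, d' - 1)$.

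The key correctness claim is that any solution descending from the current node has flip set $F = X_{\mathrm{sol}} \symmdiff X$ intersecting $S_i$. For $Y$ in the sublattice interval $[Y_i^{\min}, Y_i^{\max}]$ (which is exactly the set of closest elements of $\LL_i$ to $X$), the inclusions $X \setminus Y \subseteq X \setminus Y_i^{\min}$ and $Y \setminus X \subseteq Y_i^{\max} \setminus X$ give $X \symmdiff Y \subseteq S_i$, so if $F \cap S_i = \emptyset$ then $F$ is disjoint from $X \symmdiff Y$, giving $|(X \symmdiff F) \symmdiff Y| = D_i + |F| > d$. For $Y$ outside this interval, modularity of symmetric difference over the distributive lattice (applied to the pairs $(Y, Y_i^{\min})$ and $(Y, Y_i^{\max})$, together with the distributive identity $(Y \cap Y_i^{\max}) \cup Y_i^{\min} = (Y \cup Y_i^{\min}) \cap Y_i^{\max}$) yields $|X \symmdiff Y| \geq D_i + \mathrm{diff}_{\min}(Y) + \mathrm{diff}_{\max}(Y)$, where $\mathrm{diff}_{\min}$ and $\mathrm{diff}_{\max}$ quantify how far the meet $Y \cap Y_i^{\min}$ and the join $Y \cup Y_i^{\max}$ are from being closest to $X$; combining this lower bound with the constraints $|F| \leq d_0 \leq d$ and $D_i \leq d + d'$, together with the fact that $(X \symmdiff Y) \setminus S_i$ is controlled by $\mathrm{diff}_{\min}$ and $\mathrm{diff}_{\max}$, a careful accounting bounds $|F \cap (X \symmdiff Y)|$ and shows $|(X \symmdiff F) \symmdiff Y| > d$ in this case as well.

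Given the claim, the bounded search tree has branching factor at most $4d$ and depth at most $d_0 \leq d$, so it has at most $(4d)^d$ leaves, with polynomial work per node for the min-cost cut computations and for verifying the stopping conditions. This gives total running time $(4d)^d \cdot |V|^{O(1)}$ when the submodular functions are given through their compact representations, which is FPT in $d$. The main obstacle is the lattice-theoretic justification of the branching set $S_i$, in particular the case of non-closest $Y$, where the deficiency of $Y$ outside $[Y_i^{\min}, Y_i^{\max}]$ must be carefully related to $D_i$, $|F|$, and the portion of $X \symmdiff Y$ falling outside $S_i$; the assumption $d_0 \leq d$ is used crucially here to rule out escape routes that would otherwise allow $F$ to avoid $S_i$ while still reaching a distant $Y \in \LL_i$.
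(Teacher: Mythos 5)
Your bounded-search-tree skeleton is reasonable and superficially resembles the paper's approach, but the key branching claim --- that every remaining flip set $F=X_{\mathrm{sol}}\symmdiff X$ must intersect $S_i=(X\setminus Y_i^{\min})\cup(Y_i^{\max}\setminus X)$ --- is false, and the hand-waved ``careful accounting'' for $Y$ outside $[Y_i^{\min},Y_i^{\max}]$ is precisely where the argument collapses. The trouble is that the $Y_i\in\LL_i$ that certifies $X_{\mathrm{sol}}$ need not be close to the \emph{current} $X$; it may lie outside the interval of closest elements, and then $X\symmdiff Y_i$ escapes $S_i$. Concretely, take $V=\{1,\dots,8\}$, $X=Y_0=\{1,4,5,8\}$, $d=d_0=3$, and let
\[
\LL_i=\bigl\{\{7\},\ \{1,2,3,7\},\ \{4,5,6,7\},\ \{1,2,3,4,5,6,7\}\bigr\},
\]
which is a distributive lattice (hence realizable as $\arg\min$ of a submodular function). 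The distances from $X$ to the four lattice elements are $5,6,4,5$, so the unique closest element is $\{4,5,6,7\}$ and $Y_i^{\min}=Y_i^{\max}=\{4,5,6,7\}$, giving $D_i=4>d$, $D_i\le d+d'=6$ (so no pruning), and $S_i=\{1,6,7,8\}$. But $X_{\mathrm{sol}}=\{1,2,3,5,8\}$ has $|X_{\mathrm{sol}}\symmdiff\{1,2,3,7\}|=3\le d$ and $|X_{\mathrm{sol}}\symmdiff Y_0|=3\le d_0$, so it is a valid target; yet its flip set $F=\{2,3,4\}$ is \emph{disjoint} from $S_i$. Here $|X\symmdiff Y|=6$ for $Y=\{1,2,3,7\}$ and all three flips land in $(Y\setminus Y_i^{\max})\cup(Y_i^{\min}\setminus Y)$, which $S_i$ does not see --- exactly the regime your modularity inequality $|X\symmdiff Y|\ge D_i+\mathrm{diff}_{\min}+\mathrm{diff}_{\max}$ cannot exclude, because $D_i\le d+d'$ leaves enough slack for $|F|\le d'$ to close the gap.

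This is not a cosmetic issue: fixing $Y_i^{\min},Y_i^{\max}$ as the closest elements commits too early. The paper's algorithm instead never commits to a specific minimizer; it walks the Birkhoff DAG $G(\LL_i)$, making $2^{O(d)}$ guesses about which cells of the partition belong to the (unknown) certifying $Y_i$, and accumulates a set $T\subseteq Y_0\symmdiff Y_i$ with $d<|T|\le d+d_0$. Observation~\ref{obs:large-diff} then guarantees $T\cap(X_{\mathrm{sol}}\symmdiff Y_0)\neq\emptyset$ \emph{for whichever} $Y_i$ the correct guesses track --- this works precisely because $T$ is built relative to the certifying $Y_i$ rather than to the currently-closest one. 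If you want to salvage your approach you would need to branch over all $Y\in\LL_i$ with $|X\symmdiff Y|\le d+d'$ rather than only the closest ones, but that set can be large, and controlling it is essentially what the guessing over $G(\LL_i)$ accomplishes.
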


To prove Theorem~\ref{thm:asr}, we will use the technique of bounded search-trees.  Given an instance $I=(V,f_1,\dots,f_k,d,Y_0,d_0)$, after checking whether $Y_0$ itself is a solution, we search for a minimizer $Y_i \in \LL_i$ for which $d<|Y_0 \symmdiff Y_i| \leq d+d_0$.
It is not hard to see the following.
\begin{observation}
\label{obs:large-diff}
If $X$ is a solution for an instance~$I=(V,f_1,\dots,f_k,d,Y_0,d_0)$ of \textsc{An\-chored Submodular Minimizer}, and 
 $Y_i \in \LL_i$ fulfills $|X \symmdiff Y_i| \leq d$, then for all  $T \subseteq Y_0 \symmdiff Y_i$ with $|T|>d$ it holds that there exists some $v \in T$ with $v \in X  \symmdiff Y_0$.
\end{observation}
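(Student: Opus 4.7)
The plan is to prove the contrapositive: assume that no element of~$T$ lies in~$X \symmdiff Y_0$ and derive a contradiction with the hypothesis $|X \symmdiff Y_i| \leq d$. The key observation is that the symmetric difference behaves well under set-complementation of membership, so membership agreements between~$X$ and~$Y_0$ transfer into disagreements between~$X$ and~$Y_i$ whenever~$Y_0$ and~$Y_i$ disagree.

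More concretely, first I would suppose for contradiction that $T \cap (X \symmdiff Y_0) = \emptyset$. This means that for every $v \in T$, the element~$v$ lies in~$X$ if and only if it lies in~$Y_0$. Next, I would use the hypothesis $T \subseteq Y_0 \symmdiff Y_i$, which tells us that each $v \in T$ lies in exactly one of $Y_0$ and $Y_i$. Combining these two facts, for every $v \in T$ the element~$v$ lies in exactly one of $X$ and $Y_i$, i.e., $T \subseteq X \symmdiff Y_i$. This gives $|X \symmdiff Y_i| \geq |T| > d$, contradicting the assumption $|X \symmdiff Y_i| \leq d$.

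The proof is essentially a one-line set-theoretic identity check, namely the fact that $(X \symmdiff Y_0) \symmdiff (Y_0 \symmdiff Y_i) = X \symmdiff Y_i$; restricted to~$T$, which is contained in~$Y_0 \symmdiff Y_i$, this says $T \setminus (X \symmdiff Y_0) \subseteq X \symmdiff Y_i$. There is no real obstacle here: the only thing to be careful about is that the statement does not use the hypothesis $|X \symmdiff Y_0| \leq d_0$ at all (so in particular the bound $d_0 \leq d$ is irrelevant for this observation); the argument only relies on the bound $|X \symmdiff Y_i| \leq d$ and the symmetric-difference identity. I expect the written proof to be just a few lines.
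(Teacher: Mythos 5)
Your proposal is correct and follows essentially the same route as the paper: argue by contrapositive, show that if $T$ avoids $X \symmdiff Y_0$ then $T \subseteq X \symmdiff Y_i$, and conclude $|X \symmdiff Y_i| \geq |T| > d$. The paper carries out the same case split ($T \cap (Y_0 \setminus Y_i)$ versus $T \cap (Y_i \setminus Y_0)$) explicitly, which your symmetric-difference identity $(X \symmdiff Y_0) \symmdiff (Y_0 \symmdiff Y_i) = X \symmdiff Y_i$ simply packages more compactly.
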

\begin{proof}
Indeed, assuming that the claim does not hold, we have that 
$T \cap (Y_0 \setminus Y_i) \subseteq X$ 
and that 
$(T \cap (Y_i \setminus Y_0)) \cap X=\emptyset$.
From the former, $T \cap (Y_0 \setminus Y_i) \subseteq X \setminus Y_i$ follows, while the latter implies  $T \cap (Y_i \setminus Y_0) \subseteq Y_i \setminus X$.
Thus,
\[
X \symmdiff Y_i = (X \setminus Y_i) \cup (Y_i \setminus X) \supseteq  (T \cap (Y_0 \setminus Y_i)) \cup (T \cap (Y_i \setminus Y_0))  = T\cap (Y_0 \symmdiff Y_i)=T.
\]
Hence, $|X \symmdiff Y_i|\geq |T| >d$,  contradicting our assumption that $X$ is a solution for~$I$.
\end{proof}
Our algorithm will compute in $O^*(2^d)$ time\footnote{The $O^*()$ notation hides polynomial factors.} a set~$T \subseteq Y_0\setminus Y_i$ of size $d<|T|\leq d+d_0$ that contains some element~$v$ fulfilling the above conditions. Then,
by setting $Y_0 \leftarrow Y_0\symmdiff \{v\}$ and reducing the value of~$d_0$ by one, we obtain an equivalent instance~$I'$ of \ASR{} which we solve by applying recursion.

\paragraph*{Description of our algorithm.}
Our algorithm will make ``guesses''; nevertheless, it is a deterministic one, where guessing a value from a given set $U$ is interpreted as branching into $|U|$ branches. We continue the computations in each branch, and whenever a branch returns a solution for the given instance, we return it; if all branches reject the instance (by outputting ``No''), we also reject it.
See Algorithm~\ref{alg:ASR} for a pseudo-code description. 

We start by checking whether $Y_0$ is a solution for our instance~$I=(V,f_1,\dots,f_k,d,Y_0,d_0)$, that is, whether it satisfies~\eqref{eq:solution_Yall}. 
This can be done in polynomial time, since 
the set function $\gamma_i(Z)  =\min \{|Z \symmdiff Y_i| : Y_i \in \mathcal{L}_i\}$ is known to be submodular and can be computed via a maximum flow computation~\cite{KakimuraKKO22}.
   If $Y_0$ satisfies~\eqref{eq:solution_Yall}, i.e., $\gamma_i(Y_0) \leq d$ for each $i \in [k]$, then we output~$Y_0$; note that~\eqref{eq:solution_Y0} is obviously satisfied by $Y_0$, so $Y_0$ is a solution for~$I$. 
   
   Otherwise, if $d_0=0$, then we output ``No'' as in this case the only possible solution could be~$Y_0$.   
   We proceed by fixing an index~$i \in [k]$ such that $\gamma_i(Y_0)>d$, that is, $|Y_0 \symmdiff Y| > d$ for all minimizers $Y \in \LL_i$. 
       
    \begin{observation}\label{obs:unique1}
    If $X$ is a solution for~$I$ that satisfies   $|X\symmdiff Y_i|\leq d$ for some $Y_i\in \LL_i$, then  $|Y_i\symmdiff Y_0|\leq d+d_0$.
    \end{observation}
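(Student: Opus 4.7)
The plan is to prove this observation by a straightforward triangle inequality argument for the symmetric difference metric. Recall that for any three sets $A, B, C$ over a common universe, the inequality $|A \symmdiff C| \leq |A \symmdiff B| + |B \symmdiff C|$ holds, since symmetric difference is a metric on the power set (this follows from the fact that $A \symmdiff C \subseteq (A \symmdiff B) \cup (B \symmdiff C)$, which is easily verified element-wise).

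Applying this with $A=Y_0$, $B=X$, and $C=Y_i$, we obtain
\[
|Y_0 \symmdiff Y_i| \leq |Y_0 \symmdiff X| + |X \symmdiff Y_i|.
\]
Since $X$ is a solution for~$I$, condition~\eqref{eq:solution_Y0} gives $|X \symmdiff Y_0| \leq d_0$. By the hypothesis of the observation, we also have $|X \symmdiff Y_i| \leq d$. Combining these two bounds yields $|Y_0 \symmdiff Y_i| \leq d_0 + d$, as required.

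There is no real obstacle here; the statement is essentially the triangle inequality applied to the symmetric-difference metric, and the only thing to check is that the two terms on the right-hand side are correctly bounded by $d_0$ and $d$ respectively, which is immediate from the definition of a solution to \ASR{} and the assumption on~$Y_i$.
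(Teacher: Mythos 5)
Your proof is correct and uses exactly the same argument as the paper: the triangle inequality for the symmetric-difference metric combined with the bounds $|X \symmdiff Y_0| \leq d_0$ (from the definition of a solution) and $|X \symmdiff Y_i| \leq d$ (from the hypothesis). Nothing to add.
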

    \begin{proof}
        Since $X$ is a solution for~$I$, we have $|X\symmdiff Y_0|\leq d_0$, and thus the triangle inequality implies $|Y_i\symmdiff Y_0|\leq |X\symmdiff Y_i| + |X\symmdiff Y_0|\leq d+d_0$.
   \end{proof}

By our choice of~$i$ and Observation~\ref{obs:unique1}, we know that $d < |Y_0 \symmdiff Y_i| \leq d+d_0$. We are going to compute a set $T \subseteq Y_0 \symmdiff Y_i$ with the same bounds on its cardinality, i.e., $d<|T| \leq d+d_0$.

To this end, we compute a compact representation $G(\LL_i)$ of the distributive lattice $\LL_i$; let
$\PP = \{U_0, U_1, \dots, U_b, U_\infty\}$ be the partition of $V$ in this representation.    

Next, we proceed with an iterative procedure which also involves a set of guesses. 
We start by setting $Y=Y_0$ and $T=\emptyset$. We will maintain a family of \emph{fixed sets} from $\PP$ for which we already know whether they are in $Y_i$ or not (according to our guesses); initially, no set from~$\PP$ is fixed.

After this initialization, we start an iteration where at each step we check whether $Y \in \LL_i$ or $|T| > d$. 
If yes, then we stop the iteration. 
If not, then it can be shown that one of the following conditions holds: \begin{description}
    \item[Condition 1:] there exists a set $S \in \PP$ such that $S \cap Y \neq \emptyset$ and 
    $S \setminus Y \neq \emptyset$;
    \item[Condition 2:] there exists an edge $(S,S')$ in $G(\LL_i)$ for which $S \subseteq Y$ but $S' \cap Y=\emptyset$.
\end{description}
If Condition~1 holds for some set $S \in \PP$, then 
we guess whether $S$ is contained in~$Y_i$. If $S \subseteq Y_i$ according to our guesses, then we add $S \setminus Y$ to~$T$; otherwise, we add $S \cap Y$ to~$T$. 
In either case, we declare~$S$ as fixed,
and proceed with the next iteration. 

By contrast, if Condition~1 fails, but Condition~2 holds for some edge $(S,S')$ in $G(\LL_i)$ with endpoints $S,S' \in \PP$, then we proceed as follows. 
If both $S$ and $S'$ are fixed, then we stop and reject the current set of guesses. 
If $S$ is fixed but $S'$ is not, then we add all elements of~$S'$ to $T$. 
If $S'$ is fixed but $S$ is not, then we add $S$ to~$T$. 
If neither~$S$ nor~$S'$ is fixed, then we guess whether $S$ is contained in~$Y_i$ or not, and in the former case we add $S'$ to~$T$, while in the latter case we add~$S$ to~$T$. In all cases except for the last one, we declare both~$S$ and~$S'$ as fixed; in the last case declare only~$S$ as fixed.

Next, we modify $Y$ to reflect the current value of~$T$ by updating $Y$ to~$Y_0 \triangle T$.
If $|T|>d+d_0$, then we reject the current branch. If $d<|T| \leq d+d_0$, then we finish the iteration; otherwise, we proceed with the next iteration. 

Finally, when the iteration stops, we guess a vertex $v \in T$, define $Y'_{0,v}=Y_0 \symmdiff \{v\}$ and call the algorithm recursively on the instance $I'_v:=(V,f_1,\dots,f_k,d,Y'_{0,v},d_0-1)$.

\begin{varalgorithm}{\shortASR}
\caption{Solving \ASR{} on $I=(V,f_1,\dots,f_k,d,Y_0,d_0)$.}
\label{alg:ASR}
\begin{algorithmic}[1]
\ForAll{$j \in [k]$} %
 {compute the value $\gamma_j=\min \{|Y_0 \symmdiff Y| : Y \in \arg\min f_j\}$. }
\EndFor
\If{$\gamma_j \leq d$ for each $j \in [k]$} {\bf return} $Y_0$.
\EndIf 
\If{$d_0=0$} {\bf return} ``No''.
\EndIf
\State {Fix an index $i \in [k]$ such that $\gamma_i>d$.}
\State Compute the graph $G(\LL_i)$,  and let $\PP$ be its vertex set.
\State Set $T:= \emptyset$ and $Y := Y_0$, and 
$\fixed(S) := \ff$ for each $S \in \PP$.
\While{ $Y \notin \LL_i$ and $|T|\leq d$}
    \If{$\exists S \in \PP: S \cap Y_0 \neq \emptyset, 
                            S \setminus Y_0 \neq \emptyset$}
        \State Guess $\cont(S)$ from $\{\ff,\tt\}.$ \label{li:guess1}
        \If{$\cont(S)=\tt$} {set $T := T \cup (S \setminus Y)$.} \label{li:roundS+}
        \Else { set $T := T \cup (S \cap Y)$.} \label{li:roundS-}
        \EndIf
        \State Set $\fixed(S) := \tt$.
    \Else{ Find an edge $(S,S') \in G(\LL_i)$ such that $S \subseteq Y$ and $S' \cap Y =\emptyset$.} 
        \If{$\fixed(S)=\tt$}
            \If{$\fixed(S')=\tt$} {\bf return} ``No''. 
            \Else { set $T:=T \cup S'$ and $\fixed(S'):=\tt$.} \label{li:addS'_fixed}
            \EndIf
        \Else \Comment{$\fixed(S)=\ff$.}
            \If{$\fixed(S')=\tt$} {set $T:=T \cup S$  and $\fixed(S):=\tt$.}  \label{li:addS_fixed}
            \Else { guess $\cont(S)$ from $\{\ff,\tt\}.$} \label{li:guess2}
                \If{$\cont(S)=\tt$} {set $T:=T \cup S'$,  $\fixed(S):=\fixed(S'):=\tt$.}\label{li:addS'_guessed}%
                \Else { set $T:=T \cup S$  and $\fixed(S):=\tt$.}  \label{li:addS_guessed}
                \EndIf
            \EndIf
        \EndIf
    \EndIf
    \State Set $Y:=Y_0 \symmdiff T$. \label{li:updateY}
    \If{$|T|>d+d_0$} {\bf return} ``No''.
    \EndIf
\EndWhile
\State Guess a vertex $v$ from~$T$.
\State Set $Y'_{0,v}= Y_0 \symmdiff \{v\}$ and $I'_v=(V,f_1,\dots,f_k,d,Y'_{0,v},d_0-1)$.
\State {\bf return} $\mathsf{\shortASR}(I'_v)$.
\end{algorithmic}
\end{varalgorithm}

\begin{proof}[Proof of Theorem~\ref{thm:asr}]
We first prove the correctness of the algorithm.
Clearly, for $d_0=0$, the algorithm either correctly outputs the solution~$Y_0$, or rejects the instance. Hence, we can apply induction on~$d_0$, and assume that the algorithm works correctly when called for an instance with a smaller value for $d_0$.

We show that any set~$X$ returned by the algorithm is a solution for~$I$. First, this is clear if $X=Y_0$, as the algorithm explicitly checks whether $\gamma_i(Y_0)\leq d$ holds for each~$i \in [k]$;
second, if $X$ was returned by a recursive call on some instance~$I'_v$, then by our induction hypothesis we know that $X$ is a solution for $I'_v=(V,f_1,\dots,f_k,d,Y'_{0,v},d_0-1)$. Hence, $X$ satisfies~\eqref{eq:solution_Yall}; moreover, by $|X \symmdiff Y'_{0,v}| \leq d_0-1$, it also satisfies $|X \symmdiff Y_0| \leq d_0$, because $|Y_0 \symmdiff Y'_{0,v}|=1$.

Let us now prove that if $I$ admits a solution~$X$, then the algorithm correctly returns a solution for~$I$. Let $Y_i \in \LL_i$ be a minimizer such that $|X \symmdiff Y_i| \leq d$ where $i$ is the index fixed for which $\gamma_i(Y_0)>d$. 

\begin{claim}
\label{clm:invariant}
Assuming that all guesses made by the algorithm are correct,
in the iterative process of modifying~$T$ and $Y$ it will always hold that 
\begin{description}
    \item[(i)] $T \subseteq Y_i \symmdiff Y_0$, and 
    \item[(ii)] for each $S \in \PP$:
    \begin{description}
        \item[(a)] if $S$ is fixed, then $S \subseteq Y \Longleftrightarrow S \subseteq Y_i$, and $S \cap Y=\emptyset \Longleftrightarrow S \cap Y_i=\emptyset$, and
        \item[(b)] if $v \in S$ and $S$ is not fixed, then $v \in Y \Longleftrightarrow v \in Y_0$.
    \end{description}      
\end{description}
\end{claim}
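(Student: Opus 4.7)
My plan is to prove Claim~\ref{clm:invariant} by induction on the number of iterations of the while-loop executed so far. The base case is immediate: initially $T=\emptyset$, $Y=Y_0$, and no set of $\PP$ is marked fixed, so (i), (ii)(a), and (ii)(b) all hold trivially. For the inductive step, I assume the invariant holds before the current iteration under the standing assumption that all guesses made so far are correct (where ``correct'' means $\cont(S)=\tt$ precisely when $S\subseteq Y_i$), and I verify that after whichever correct guess is made in the current iteration the invariant still holds. The main structural tool I rely on is property~(4c) of Theorem~\ref{thm:Birkhoff}: since $Y_i\in\LL_i$, by property~(3) we can write $Y_i=U_0\cup\bigcup_{j\in J_i}U_j$ for some $J_i\subseteq[b]$, and then the set $Z_i=\{U_0\}\cup\{U_j:j\in J_i\}$ has no outgoing edges in $G(\LL_i)$. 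Equivalently: whenever $(S,S')$ is an edge of $G(\LL_i)$ with $S\subseteq Y_i$, then also $S'\subseteq Y_i$.

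The inductive step splits according to which branch of Algorithm~\ref{alg:ASR} is taken. In \textbf{Case A} (Condition~1 fires for some $S\in\PP$), the set $S$ must be unfixed, because invariant (ii)(a) would otherwise force $S$ to be entirely inside or entirely outside $Y$. Invariant (ii)(b) then gives $S\cap Y=S\cap Y_0$ and $S\setminus Y=S\setminus Y_0$, and in particular $T\cap S=\emptyset$. A correct guess of $\cont(S)$ identifies whether $S\subseteq Y_i$ or $S\cap Y_i=\emptyset$; in the first case line~\ref{li:roundS+} adds $S\setminus Y_0\subseteq Y_i\setminus Y_0$ to $T$, and in the second case line~\ref{li:roundS-} adds $S\cap Y_0\subseteq Y_0\setminus Y_i$ to $T$. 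In both sub-cases the additions lie inside $Y_i\symmdiff Y_0$, so (i) is preserved; after the update at line~\ref{li:updateY} one checks directly that every $v\in S$ now satisfies $v\in Y\iff v\in Y_i$, which yields (ii)(a) for the newly fixed $S$. Since $T$ grows only by elements of $S$, invariant (ii)(b) for every other unfixed set is unaffected.

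In \textbf{Case B} (Condition~2 fires for an edge $(S,S')$ with $S\subseteq Y$ and $S'\cap Y=\emptyset$) I use the edge-propagation property above. If both $S$ and $S'$ are fixed, then (ii)(a) gives $S\subseteq Y_i$ and $S'\cap Y_i=\emptyset$, which combined with the propagation property yields $S'\subseteq Y_i$, contradicting $S'\neq\emptyset$; thus under correct guesses this sub-case cannot occur. If exactly one of $S,S'$ is fixed, propagation determines uniquely which of the two lies in $Y_i$ (e.g.\ if $\fixed(S)=\tt$ then (ii)(a) forces $S\subseteq Y_i$, and hence $S'\subseteq Y_i$, so line~\ref{li:addS'_fixed} adds $S'\subseteq Y_i\setminus Y_0$ to $T$, using (ii)(b) for $S'$ to pass from $S'\cap Y=\emptyset$ to $S'\cap Y_0=\emptyset$). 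A symmetric argument handles line~\ref{li:addS_fixed}. If neither is fixed, the guess $\cont(S)=\tt$ combined with propagation gives $S,S'\subseteq Y_i$, so line~\ref{li:addS'_guessed} legitimately adds $S'\subseteq Y_i\setminus Y_0$ and fixes both sets; in all such branches one then verifies that the refreshed $Y=Y_0\symmdiff T$ satisfies (ii)(a) for all newly fixed sets.

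The main obstacle is the remaining sub-case, namely line~\ref{li:addS_guessed} where $\cont(S)=\ff$ is guessed and $S'$ is \emph{deliberately} left unfixed. Here I must check that (ii)(b) survives for $S'$, even though the inductive guess about $S'$ has not yet been made. This works precisely because this branch adds only $S$ to $T$, not $S'$; by (ii)(b) applied before the iteration, no element of $S'$ lay in $T$ either, so membership of any $v\in S'$ in $Y=Y_0\symmdiff T$ remains identical to its membership in $Y_0$ after the update at line~\ref{li:updateY}. Invariant (i) in this sub-case follows from $S\cap Y_i=\emptyset$ together with $S\subseteq Y=Y_0\symmdiff T_{\mathrm{old}}$ and $S\cap T_{\mathrm{old}}=\emptyset$ (by (ii)(b) for the previously unfixed $S$), which give $S\subseteq Y_0\setminus Y_i\subseteq Y_i\symmdiff Y_0$. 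Once these cases are dispatched, the induction closes and the claim follows.
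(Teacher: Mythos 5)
Your proof is correct and follows essentially the same approach as the paper: induction over iterations of the while-loop, using Birkhoff's representation theorem (specifically the closure of $\{S\in\PP : S\subseteq Y_i\}$ under outgoing edges in $G(\LL_i)$) to analyze each branch of the algorithm. The only difference is presentational — you single out the line~\ref{li:addS_guessed} sub-case for extra scrutiny and phrase the Birkhoff property as an explicit ``edge-propagation'' lemma, while the paper summarizes the preservation of (ii/b) with the observation that every element added to~$T$ belongs to a set that is simultaneously declared fixed — but the underlying argument is the same.
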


\begin{claimproof}
The claim clearly holds initially, when $T=\emptyset$, $Y=Y_0$ and no set is fixed.
Consider now the $j$-th run of the iteration for some~$j>0$. Let $T$ and~$Y$ be as at the beginning of the iteration. 
Note that $Y=Y_0 \symmdiff T$ due to the last step of the $(j-1)$-st run of the iteration on line~\ref{li:updateY}. 

Suppose that Condition~1 holds for some set~$S \in \PP$. Using Birkhoff's representation theorem, we know that either $S\subseteq Y_i$ or $S\cap Y_i =\emptyset$. 
By Condition~1 and claim~(ii/a) of our induction hypothesis (\IH{} for short), $S$ is not fixed.
If $S\subseteq Y_i$, then by claim~(ii/b) of our~\IH{} (applicable as $S$ is not fixed), we know $S \setminus Y = S \setminus Y_0  \subseteq Y_i \symmdiff Y_0$.
Similarly,  if $S\cap Y_i=\emptyset$, then we get $S \cap Y = S \cap Y_0  \subseteq Y_i \symmdiff Y_0$. 
Assuming that the algorithm correctly guesses which among these two cases holds, the elements added to~$T$ on line~\ref{li:roundS+} or~\ref{li:roundS-} are indeed contained in~$Y_i \symmdiff Y_0$, proving that~(i) remains true in this case. 

Suppose now that Condition~2 holds for some edge~$(S,S')$ in~$G(\LL_i)$. 
By our inductive hypothesis, $S$ and~$S'$ cannot both be fixed (assuming correct guesses), since $S \subseteq Y_i$ and $S' \cap Y_i=\emptyset$ would then contradict Birkhoff's representation theorem, as the edge $(S,S')$ would leave the closed set $\PP(Y_i):=\{S: S \in \PP, S\subseteq Y_i\}$ in~$G(\LL_i)$. 

Now, if $S$ is fixed, then by claim~(ii/a) of our \IH, 
$S \subseteq Y$ implies $S \subseteq Y_i$. This means that $S' \subseteq Y_i$ must hold as well, since $\PP(Y_i)$ is closed. 
Moreover, if $S'$ is not fixed, then $S' \cap Y=\emptyset$ implies $S' \cap Y_0=\emptyset$ due to claim~(ii/b) of the \IH; hence $S' \subseteq Y_i \symmdiff Y_0$. 
The same arguments hold for the case when initially neither~$S$ nor~$S'$ is fixed, and the algorithm correctly guesses that $S$ is contained in~$Y_i$. 
We obtain that
the vertices added to~$T$ on line~\ref{li:addS'_fixed} or~\ref{li:addS'_guessed} are indeed contained in~$Y_i \symmdiff Y_0$.

By contrast, if $S$ is not fixed, then  claim~(ii/b) of our \IH{} and $S \subseteq Y$ imply $S \subseteq Y_0$.
Now, if $S'$ is fixed, then claim~(ii/a) of our \IH{} means that $S' \cap Y_i=\emptyset$; hence we get $S \notin \PP(Y_i)$, because $\PP(Y_i)$ is closed in~$G(\LL_i)$. Then $S \subseteq Y_i \symmdiff Y_0$ follows, so the set added to~$T$ on line~\ref{li:addS_fixed} is indeed contained in~$Y_i \symmdiff Y_0$. 
The only remaining case is when neither~$S$ nor~$S'$ is fixed initially, and the algorithm guesses $S \not\subseteq~Y_i$. 
Recall that by claim~(ii/b) of the \IH{}, we have $S \subseteq Y_0$, so a correct guess immediately yields $S \subseteq Y_i \symmdiff Y_0$, and thus the set added to~$T$ on line~\ref{li:addS_guessed} is in $Y_i \symmdiff Y_0$ as well. This proves that (i) remains true at the end of the $j$-th run of the algorithm as well. 

Observe that whenever the algorithm puts some element~$v \in V$ into~$Y$, it also declares the set in~$\PP$ containing~$v$ fixed; from this, it immediately follows that claim~(ii/b) remains true.
To see that claim~(ii/a) is maintained well, note first that the algorithm never removes or adds vertices of a fixed set from~$Y$ or to~$Y$, respectively. 
Therefore, we only need to check those sets that we declared fixed during this iteration. By the above arguments, it is not hard to verify that whenever we declare some set~$S$ as fixed, then we also ensure $S \subseteq Y \Longleftrightarrow S \subseteq Y_i$ when updating~$Y$ to~$Y_0 \symmdiff T$. 
This finishes the proof of the claim.
\end{claimproof}

Next, we show that in each run of the iteration, Condition~1 or Condition~2 holds. Indeed, if neither holds, then (1) $Y=\bigcup_{U \in \PP'} U$ for some~$\PP' \subseteq \PP$, and (2)  no edge leaves~$\PP'$ in $G(\LL_i)$. Hence, $Y \in \LL_i$ by Birkhoff's representation theorem. However, since $|T| \leq d$ holds at the beginning of each iteration,  $|Y \symmdiff Y_0| =|T| \leq d$ follows, contradicting our choice of~$i$. 

Therefore, in each run of the iteration, at least one element of~$V$ is put into~$T$. Thus, the iteration stops after at most $d+1$ runs, at which point the obtained set~$T$ has size greater than~$d$. 
Using now  statement~(i) of Claim~\ref{clm:invariant}, Observation~\ref{obs:large-diff} yields that $T$ contains at least one vertex from~$X \symmdiff Y_0$. Assuming that the algorithm guesses such a vertex~$v$ correctly, it is clear 
that our solution~$X$ for~$I$ will also be a solution for the instance~$I'_v$. 
Using our inductive hypothesis, we obtain that the recursive call returns a correct solution for~$I'_v$ which, as discussed already, will be a solution for~$I$ as well. Hence, our algorithm is correct.

Finally, let us bound the running time. Consider the search tree~$\mathcal{T}$ where each node corresponds to a  call of Algorithm~\ref{alg:ASR}.
Note that the value of~$d_0$ decreases by one in each recursive call, and the algorithm stops when $d_0=0$. Hence $\mathcal{T}$ has depth at most~$d_0$. Consider the guesses made during the execution of a single call of the algorithm (without taking into account the guesses in the recursive calls): we make at most one guess in each iteration on line~\ref{li:guess1} or on line~\ref{li:guess2}, leading to at most~$2^{d+1}$ possibilities. Then the algorithm further guesses a vertex from~$T$, leading to a total of at most~$2^{d+1}|T| \leq 2^{d+1}(d+d_0)=2^{O(d)}$ possibilities; recall that $d_0 \leq d$. 
We get that the number of nodes in our search tree 
is $2^{d_0 O(d)}$. 
Since all computations for a fixed series of guesses take polynomial time, we obtain that the running time is indeed fixed-parameter tractable with parameter $d$.
\end{proof}

\section{Conclusion}
\label{sec:conclusion}

In this paper, we studied the computational complexity of \SR{}, and provided a complete computational map of the problem with respect to the parameters~$k$ and~$d$. 
We also investigated how the computational complexity of the problem changes if on instances where one or more of the functions~$f_i$ has only a bounded number of minimizers.
A summary of our results is shown in Table~\ref{tab:summary}.
We remark that our algorithmic results can be adapted in a straightforward way to a slightly generalized problem:
given $k$ submodular functions $f_1, \dots, f_k$ with non-negative integers $d_1, \dots, d_k$, we aim to find a set $X$ such that, for each $i\in [k]$, there exists some set~$Y_i \in \arg\min f_i$ with $|X \symmdiff Y_i| \leq d_i$ for each $i\in [k]$.

\begin{table}
    \centering
    \begin{tabular}{llll}
        Restriction  & Parameter & Complexity  & Reference\\ \hline
        $d=0$ & $-$ & in $\mathsf{P}$ & Obs.~\ref{obs:d=0} \\
        $d \geq 1$ constant & $-$ & $\NP$-hard & Cor.~\ref{cor:SR-NPhard-d-atleast1}\\
        $k\leq 2$ & $-$ & in $\mathsf{P}$ & Thm.~\ref{thm:k=2},~\cite{LeeSW15} \\ 
        $k \geq 3$ constant, $\ell_1=\dots=\ell_{k-1}=1$ & $-$ &  $\NP$-hard  & Cor. \ref{cor:k-atleast-3} \\
        $\ell_1=\dots = \ell_k=1$ & $-$ & $\NP$-hard & Obs.~\ref{obs:all-unique},~\cite{DBLP:journals/mst/FrancesL97} \\        
        $\ell_1=\dots = \ell_k=1$ & $k$ & FPT & Obs.~\ref{obs:all-unique}, \cite{GrammNR03} \\
        $\ell_k \leq |V|$ & $k$ & $\mathsf{W}[1]$-hard & Thm.~\ref{thm:W1hard-k} \\
        $\ell_k=|V|^{O(1)}$ & $k$ & in $\mathsf{XP}$ & Prop.~\ref{prop:few-minimizers} \\        
        $-$ & $(k, \ell_k)$ & FPT & Prop.~\ref{prop:few-minimizers} \\
        $\ell_1=|V|^{O(1)}$ & $d$ & FPT & Thm.~\ref{thm:unique} \\
        $-$ & $(k,d)$ & FPT & Thm.~\ref{thm:SR-fpt} \\[6pt]
    \end{tabular}
    \caption{A summary of our results. Recall that $k$ is the number of input functions, and $d$ the distance threshold given. 
    In this table, $\ell_i=|\LL_i|$ for some~$i \in [k]$ denotes the number of minimizers for function~$f_i$, and we assume $\ell_1 \leq \ell_2 \leq \dots \leq \ell_k$.}
    \label{tab:summary}
\end{table}

An interesting question concerns the case where one of the input functions has only polynomially many minimizers. Theorem~\ref{thm:unique} establishes that this variant of \SR{} is fixed-parameter tractable with respect to parameter~$d$; this result is based on our algorithm for \ASR{} in Theorem~\ref{thm:asr}, 
which yields a running time $2^{O(d^2)}|V|^{O(1)}$ on some instance~$I=(V,f_1,\dots,d_k,d)$.
Can we improve this to obtain an algorithm that runs in single-exponential time with respect to~$d$, i.e., in $2^{O(d)}|V|^{O(1)}$ time?

As mentioned in Section~\ref{sec:relatedwork},
\SR{} is related to recoverable robustness.
We can consider the robust recoverable variant of submodular minimization:
given submodular functions $f_0, f_1, \dots, f_k$, we aim to find a set $X$ that minimizes
\[
f_0(X) + \max_{i\in[k]} \min_{Y_i: |Y_i\symmdiff X|\leq d} f_i (X_i).
\]
The optimal value is lower-bounded by $f_0(Y_0) + \max_{i\in[k]} f_i (Y_i)$ where $Y_i\in \arg\min f_i$ for each $i\in \{0,1,\dots, k\}$.
Our results imply that we can decide efficiently whether the optimal value attains this lower bound or not, when $d$ and $k$ are parameters, or when $f_0$ has polynomially many minimizers.

\begin{figure}
    \centering
    \begin{tikzpicture}[every node/.style={transform shape}]
    \node (0) at (-9,0){};
    \node[rectangle,inner sep=5pt,draw] (1) at (-6.5,0){$k$ unbounded?};
    \draw[-Stealth,very thick](0)--(1);
    \node[rectangle,inner sep=5pt,draw] (2) at (-6.5,-1.6){$\ell_1=\dots=\ell_k=1$?};
    \draw[-Stealth,very thick](1.south) to node[pos=0.3, left] {\scriptsize{no}} (2.north);  
    \node[rectangle,inner sep=5pt,draw,rounded corners,fill=FPTcolor] (3) at (-1.1,-1.6){\small{\parbox{4.5cm}{FPT w.r.t.~$k$ (Obs.~\ref{obs:all-unique})}}};
    \draw[-{Stealth},very thick] (2.east) to node[pos=0.5, above] {\scriptsize{yes}} (3.west);
    \node[rectangle,inner sep=5pt,draw] (4) at (-6.5,-3.2){$d$ parameter?};   
    \draw[-Stealth,very thick] (2.south) to node[pos=0.5, left] {\scriptsize{no}} (4.north);  
    \node[rectangle,inner sep=5pt,draw,rounded corners,fill=FPTcolor] (5) at (-1.1,-3.2){\small{\parbox{4.5cm}{FPT w.r.t.~$(k,d)$ \\ (Thm.~\ref{thm:SR-fpt})}}};
    \draw[-Stealth,very thick] (4.east) to node[pos=0.5, above] {\scriptsize{yes}} (5.west);  
    \node[rectangle,inner sep=5pt,draw] (6) at (-6.5,-4.8){$k \leq 2$?};   
    \draw[-Stealth,very thick] (4.south) to node[pos=0.5, left] {\scriptsize{no}} (6.north);  
    \node[rectangle,inner sep=5pt,draw,rounded corners,fill=inPcolor] (7) at (-1.1,-4.8){\small{\parbox{4.5cm}{in $\mathsf{P}$ (Thm.~\ref{thm:k=2})}}};
    \draw[-Stealth,very thick] (6.east) to node[pos=0.5, above] {\scriptsize{yes}} (7.west);  
    \node[rectangle,inner sep=5pt,draw] (8) at (-6.5,-6.4){$\ell_k$ parameter?};   
    \draw[-Stealth,very thick] (6.south) to node[pos=0.5, left] {\scriptsize{no}} (8.north); 
    \node[rectangle,inner sep=5pt,draw,rounded corners,fill=FPTcolor] (9) at (-1.1,-6.4){\small{\parbox{4.5cm}{FPT w.r.t. $(k,\ell_k)$ \\ (Prop.~\ref{prop:few-minimizers})}}};
    \draw[-Stealth,very thick] (8.east) to node[pos=0.5, above] {\scriptsize{yes}} (9.west);  
    \node[rectangle,inner sep=5pt,draw] (10) at (-6.5,-8.2){$\ell_k=|V|^{O(1)}$?};   
    \draw[-Stealth,very thick] (8.south) to node[pos=0.5, left] {\scriptsize{no}} (10.north); 
    \node[rectangle,inner sep=5pt,draw,rounded corners,fill=XPcolor] (11) at (-1.1,-8.2){\small{\parbox{4.5cm}{in $\mathsf{XP}$ w.r.t.~$k$ (Prop.~\ref{prop:few-minimizers}) \\[2pt] $\mathsf{W}[1]$-hard w.r.t.~$k$ even if \\ $\ell_1=1, \ell_k \leq |V|$ (Thm.~\ref{thm:W1hard-k})}}};
    \draw[-Stealth,very thick] (10.east) to node[pos=0.5, above] {\scriptsize{yes}} (11.west);  
    \node[rectangle,inner sep=5pt,draw,rounded corners,fill=NPhardcolor] (12) at (-6.5,-10.4){\small{\parbox{4.0cm}{$\NP$-hard for constant~$k$ with $k\geq 3$, 
    even if \\ $\ell_1=\dots=\ell_{k-1}=1$ \\ (Cor.~\ref{cor:k-atleast-3})}}};
    \draw[-Stealth,very thick] (10.south) to node[pos=0.5, left] {\scriptsize{no}} (12.north);     
    \node[rectangle,inner sep=5pt,draw] (13) at (-6.5,1.6){$d$ unbounded?};
    \draw[-{Stealth},very thick] (1.north) to node[pos=0.5, left] {\scriptsize{yes}}  (13.south);
    \node[rectangle,inner sep=5pt,draw,rounded corners,fill=NPhardcolor] (14) at (-1.1,1.6){\small{\parbox{4.5cm}{$\NP$-hard even if\\  $\ell_1=\dots=\ell_k=1$ \\ (Obs.~\ref{obs:all-unique})}}};
    \draw[-Stealth,very thick] (13.east) to node[pos=0.5, above] {\scriptsize{yes}} (14.west);  
    \node[rectangle,inner sep=5pt,draw] (15) at (-6.5,3.2){$d=0$?};
    \draw[-Stealth,very thick] (13.north) to node[pos=0.5, left] {\scriptsize{no}} (15.south);     
    \node[rectangle,inner sep=5pt,draw,rounded corners,fill=inPcolor] (16) at (-1.1,3.2){\small{\parbox{4.5cm}{in $\mathsf{P}$ (Obs.~\ref{obs:d=0})}}};
    \draw[-Stealth,very thick] (15.east) to node[pos=0.5, above] {\scriptsize{yes}} (16.west);  
    \node[rectangle,inner sep=5pt,draw] (17) at (-6.5,4.8){$\ell_1=|V|^{O(1)}$?};
    \draw[-Stealth,very thick] (15.north) to node[pos=0.5, left] {\scriptsize{no}} (17.south);     
    \node[rectangle,inner sep=5pt,draw,rounded corners,fill=FPTcolor] (18) at (-1.1,4.8){\small{\parbox{4.5cm}{FPT w.r.t.~$d$ (Thm.~\ref{thm:unique})}}};
    \draw[-Stealth,very thick] (17.east) to node[pos=0.5, above] {\scriptsize{yes}} (18.west);  
    \node[rectangle,inner sep=5pt,draw,rounded corners,fill=NPhardcolor] (19) at (-6.5,6.6){\small{\parbox{4.0cm}{$\NP$-hard for constant~$d$ with $d \geq 1$ (Cor.~\ref{cor:SR-NPhard-d-atleast1})}}};
    \draw[-Stealth,very thick] (17.north) to node[pos=0.5, left] {\scriptsize{no}} (19.south);  
    \end{tikzpicture}
    \caption{Decision diagram for determining the complexity of \SR{}. We use the notation $\ell_i=|\arg \min f_i|$, and assume that $\ell_1 \leq \ell_2 \leq \dots \leq \ell_k$. The diagram concerns restrictions of the problem, each obtained by designated each of~$k$ and~$d$ as (i) \emph{unbounded}, (ii) a \emph{parameter}, or (iii) a \emph{constant}, and in addition restricting each of $\ell_1$ and~$\ell_k$ either according to one of the options (i)--(iii), or restricting it to be (iv) \emph{polynomially bounded in~$|V|$}.
    }
    \label{fig-decision-diagram}
\end{figure}

\bibliography{mainjournal}

\end{document}